\documentclass[12pt,letterpaper]{article}

\usepackage{geometry}
\usepackage[cp1252]{inputenc}
\usepackage[T1]{fontenc}
\usepackage{lmodern}
\usepackage{graphicx}
\usepackage{amsmath}
\usepackage{amssymb}
\usepackage{amsthm}
\usepackage{amsfonts}
\usepackage{authblk}
\usepackage{epstopdf}	
\usepackage{verbatim}
\usepackage{multicol}
\usepackage{tabularx}
\usepackage{natbib}
\usepackage{graphicx}
\usepackage{multirow}
\usepackage{array}
\usepackage{color}
\usepackage{float}
\restylefloat{table}
\usepackage{xcolor}
\usepackage{lscape}
\usepackage{ragged2e}
\usepackage{booktabs}
\usepackage{topcapt}
\usepackage{rotating}
\usepackage{setspace}
\usepackage{times}
\usepackage{txfonts}
\usepackage{bm}
\usepackage{titlesec}
\usepackage{indentfirst}
\usepackage{enumitem}
\usepackage[pdftex,breaklinks,colorlinks,citecolor=black,urlcolor=black,linkcolor=black]{hyperref}
\usepackage{tikz}
\usetikzlibrary{arrows,positioning} 
\usepackage{algorithmic,algorithm}
\usepackage{dcolumn}
\usepackage{pifont}
\usepackage{authblk}
\usepackage{xr}
\usepackage{cleveref}
\usepackage{multirow}

\hypersetup{
  colorlinks   = true, %Colours links instead of ugly boxes
  urlcolor     = blue, %Colour for external hyperlinks
  linkcolor    = red, %Colour of internal links
  citecolor   = blue %Colour of citations
}

\newtheorem{theorem}{Theorem}[section]
\newtheorem{assumption}{Assumption}[section]
\newtheorem{proposition}{Proposition}[section]

\newtheorem{corollary}{Corollary}[section]
\newtheorem{lemma}{Lemma}[section]
\newtheorem{definition}{Definition}[section]

\newtheorem{example}{Example}[section]

\definecolor{alizarin}{rgb}{0.82, 0.1, 0.26}

\newcolumntype{Y}{>{\raggedleft\arraybackslash}X}
\newcolumntype{C}{>{\centering\arraybackslash}X}
\newcolumntype{d}[1]{D{.}{.}{#1}}

\allowdisplaybreaks

\usepackage{dsfont}

\newcommand{\T}{{[0,T]}}
\newcommand{\tT}{{t\in[0,T]}}
\newcommand{\R}{{\mathds{R}}}
\newcommand{\N}{{\mathds{N}}}

\newcommand{\E}{{\mathbb{E}}}
\renewcommand{\P}{{\mathbb{P}}}
\newcommand{\Q}{{\mathbb{Q}}}

\newcommand{\bone}{\boldsymbol{{1}}}
\newcommand{\Id}{{\mathds{1}}}
\renewcommand{\d}{{\mathrm{d}}}

\renewcommand{\epsilon}{{\varepsilon}}
\newcommand{\ep}{{\varepsilon}}

\newcommand{\bfeta}{{\boldsymbol{{\eta}}}}

\newcommand{\btheta}{{\boldsymbol{{\theta}}}}

\newcommand{\bpi}{{\boldsymbol{{\pi}}}}

\newcommand{\mcD}{{\mathcal{D}}}

\newcommand{\mcA}{{\mathcal{A}}}
\newcommand{\F}{{\mathcal{F}}}
\newcommand{\mL}{{\mathcal{L}}}

\newcommand{\GLR}{\text{GLR}}
\newcommand{\VaR}{\text{VaR}}
\newcommand{\ES}{\text{ES}}
\newcommand{\UTE}{\text{UTE}}

\newcommand{\M}{{\mathcal{M}}}

\renewcommand{\Finv}{{\Breve{F}}}
\newcommand{\Finvsig}{{{\Breve{F}}_{\varsigma}}}
\newcommand{\Finvbench}{{{\Breve{F}}_{Y}}}
\newcommand{\Fbench}{{{F}_{Y}}}
\newcommand{\Ginv}{{\Breve{G}}}

\renewcommand{\L}{{\mathcal{L}}}
\newcommand{\mcM}{{\mathcal{M}}}
\newcommand{\mcH}{{\mathcal{H}}}
\newcommand{\BW}{{\mathcal{BW}}}
\newcommand{\DBW}{{\mathcal{BW}^\alpha}}

\newcommand{\mfH}{{\mathfrak{H}}}

\newcommand{\mfU}{{\mathfrak{U}}}

\newcommand{\uu}{{\, u^\dagger\, }}

\newcommand{\ueta}{{\underline{\eta} }}
\newcommand{\oeta}{{ \overline{\eta} }}
\newcommand{\ug}{{\underline{g} }}
\newcommand{\og}{{ \overline{g} }}

\newcommand{\lamcost}{{\lambda}^{\cost}}
\newcommand{\Ginvcost}{{\Ginv}^{\cost}}
\newcommand{\Gcost}{{G}^{\cost}}
\newcommand{\epcost}{{\ep}^{\infty}}

\newcommand{\lambw}{{\lambda}^{\BW}}
\newcommand{\Ginvbw}{{\Ginv}^{\BW}}

\newcommand{\xbw}{{x_0^{\infty}}}

\newcommand{\etaOneMin}{{\eta_1^{\min}}}
\newcommand{\etaTwoMin}{{\eta_2^{\min}}}
\newcommand{\etaTwoMax}{{\eta_2^{\max}}}

\newcommand{\cost}{{\mathfrak{c}}}
\newcommand{\bw}{{\mathfrak{b}}}

\DeclareMathOperator*{\argmin}{argmin}

\newcommand{\hW}{{\widehat{W}}}

\usepackage[colorinlistoftodos]{todonotes}

\newcommand{\Title}{Outperforming a Benchmark with $\alpha$-Bregman Wasserstein divergence
}

\newcommand{\Acknowledgements}{TN gratefully acknowledges the financial support from the Natural Sciences and Engineering Research Council of Canada (NSERC) and Laval University, and SP is grateful for the financial support from NSERC (RGPIN-2025-05847). The authors thank Chiaki Hara (Kyoto University) for insightful discussions and as well as the participants of the 2026 ``Probabilistic mass transport -- from Schr\"odinger to stochastic transport'' at the Erwin Schr\"odinger International Institute for Mathematics and Physics (ESI) for fruitful conversations.}

\newcommand{\Authors}{
Silvana M. Pesenti\footnote{Department of Statistical Sciences, University of Toronto,  700 University Avenue, Toronto, ON M5G 1X6, Canada e-mail: \href{mailto:silvana.pesenti@utoronto.ca}{silvana.pesenti@utoronto.ca}}
\quad \quad 
Thai Nguyen\footnote{\'Ecole  d'actuariat, Universit\'e Laval, 2425 rue de l'Agriculture, Qu\'ebec, Qu\'ebec, Canada G1V 0A6, email: \href{mailto:thai.nguyen@act.ulaval.ca}{thai.nguyen@act.ulaval.ca}}
}

\newcommand{\Abstract}{}

\newcommand{\Keywords}{.}

\newcommand{\JEL}{G22, G11, J11.}

\makeatletter
\newcommand*{\addFileDependency}[1]{% argument=file name and extension
\typeout{(#1)}% latexmk will find this if $recorder=0
\@addtofilelist{#1}
\IfFileExists{#1}{}{\typeout{No file #1.}}
}\makeatother

\date{}
\author{\Authors}
\title{\Title\footnote{\Acknowledgements}}

\titleformat{\section}{\normalfont\bfseries}{\thesection}{1em}{}
\titleformat{\subsection}{\normalfont\bfseries}{\thesubsection}{1em}{}
\titleformat{\subsubsection}{\normalfont\bfseries}{\thesubsubsection}{1em}{}

\titlespacing*{\section}{0pt}{6pt}{6pt}
\titlespacing*{\subsection}{0pt}{6pt}{6pt}
\titlespacing*{\subsubsection}{0pt}{6pt}{6pt}

\begin{document}
\makeatletter
\g@addto@macro{\normalsize}{%
    \setlength{\abovedisplayskip}{4pt}
    \setlength{\abovedisplayshortskip}{4pt}
    \setlength{\belowdisplayskip}{4pt}
    \setlength{\belowdisplayshortskip}{4pt}}
\makeatother

\maketitle
\thispagestyle{empty}

\begin{abstract}
\Abstract 
We consider the problem of active portfolio management, where an investor seeks the portfolio with maximal expected utility of the difference between the terminal wealth of their strategy and a proportion of the benchmark's, subject to a budget and a deviation constraint from the benchmark portfolio. 
As the investor aims at outperforming the benchmark, they choose a divergence that asymmetrically penalises gains and losses as well as penalises underperforming the benchmark more than outperforming it. This is achieved by the recently introduced $\alpha$-Bregman-Wasserstein divergence, subsuming the Bregman-Wasserstein and the popular Wasserstein divergence. We prove existence and uniqueness, characterise the optimal portfolio strategy, and give explicit conditions when the divergence constraints and the budget constraints are binding. We conclude with a numerical illustration of the optimal quantile function in a geometric Brownian motion market model.

\bigskip
\noindent \textbf{Keywords}: Portfolio choice, expected utility, optimal transport, quantile formulation, benchmark, outperformance\Keywords \\

\noindent {\it JEL Classification:} \JEL
\end{abstract}
\vspace{-0.25cm}

\onehalfspacing
% \newpage 

%%%%%%%%%%%%%%%%%%%%%%%%%%%%%%%%%%%%%%%%%%%%%%%%%%%%%
% Introduction
%%%%%%%%%%%%%%%%%%%%%%%%%%%%%%%%%%%%%%%%%%%%%%%%%%%%%

\section{Introduction\label{sec:intro}}

In active portfolio choice investors aim at outperforming the terminal wealth of a benchmark portfolio. Common criteria to measure outperformance include those related to the returns of the investor's strategy relative to that of the benchmark's \citep{browne2000MS}, a risk profile or utility \citep{Pesenti2023SIAM,ng2023portfolio,Jaimungal2022SIAM,Pesenti2024WP}, monetary goals \citep{Capponi2024MS}, efficiency of the benchmark \citep{kan2024MS}, and stochastic dominance \citep{Dentcheva2003SICAMO,Luo2025MS}. In this work, the investor's criterion is the expected utility applied to the positive part of the difference between the terminal wealth of their strategy and a proportion of the benchmark's. Thus, the investor is passionate about outperforming a proportion of the benchmark for each outcome of the economy. The multiplicative factor in front of the benchmarks terminal wealth within the utility criteria, could be chosen proportional to the cost of the benchmark, which might be out of reach to the investor. As the investor aims to outperform the benchmark's terminal wealth, the investor must deviate from the benchmark strategy. A key objective for the investor is that they wish to penalise absolute gains differently to losses (relative to the benchmark's gains and losses) and, furthermore, give a higher penalty when underperforming the benchmark. This is achieved by the recently introduced $\alpha$-Bregman-Wasserstein (BW) divergence together with the constraint that the investor's terminal wealth is comonotone with that of the benchmark. The $\alpha$-BW divergence is a generalisation of the asymmetric Bregman-Wasserstein (BW) divergence (which does not distinguish between under- and outperformance), and the Wasserstein distance (which is symmetric); see \cite{Pesenti2024ORL} who introduced the $\alpha$-BW divergence. While the BW divergence has been studied also in relation to portfolio optimisation, this is the first work that considers the $\alpha$-BW divergences in an optimal portfolio framework -- \cite{Pesenti2024ORL} only provide the definition and a representation of the $\alpha$-BW divergence. A related work is \cite{EspinosaTouzi2015MF}, who consider a finite number of agents, each of which optimise the expected utility of the convex combination of the terminal wealth of their strategy and that of the average performance of all agents. While they analyse the problem though an equilibrium approach, here we only have one investor that has a constraint on the deviation to the benchmarks terminal wealth.

Our work is situated in a complete market, where we solve the investor's optimisation problem by first deriving a quantile reformulation of the investor's optimisation problem and second solving the resulting optimisation problem using Lagrangian methods. We provide conditions on the optimisation problem to be well-posed, prove the uniqueness of its solution, and give an explicit characterisation of the optimal quantile functions when only one of the constraints is binding (that is either the budget or the $\alpha$-BW constraint). Other key contributions are the representation of the optimal quantile function when both constraints are binding and (under additional assumptions) the existence of a solution. The proofs of the representation and the existence are non-standard, as the $\alpha$-BW divergences requires careful distinction between the cases of under- and outperformance, which is in contrast to the BW divergence and the Wasserstein distance. Specifically, to distinguish between under- and outperformance, the $\alpha$-BW divergences includes an indicator function that depends on the investor's and the benchmark's quantile function. While the divergence is still convex in its first argument, the Lagrangian methods, which we employ to solve the investor's optimisation problem, require a separate analysis for under- and outperformance. The different treatment for under- and outperformance is also observed in the optimal quantile function, which is split into two cases, one for underperforming and one for outperforming the benchmark. Moreover, the existence of the optimal strategy requires technical assumptions, as the $\alpha$ in the $\alpha$-BW divergences destroys the monotonicity of the constraint functions with respect to the Lagrange multipliers.

A closely connected stream of literature is distributionally robust portfolio optimisation, where the uncertainty set is described by optimal transport divergences, indicatively see \citep{pflug2007QF,Mohajerin2018MP,blanchet2022MS,Jaimungal2022SIAM,Gao2023MoR}. While distributional robustness assumes partial knowledge of the distribution of the portfolio's terminal wealth or the assets' returns, here we use optimal transport divergences to quantify deviation from the distribution of the benchmark's terminal wealth. While the majority of works on distributional robustness portfolio optimisation consider the symmetric Wasserstein distance, the BW divergence -- which arises by choosing the cost function of the Monge-Kantorovich problem to be a Bregman divergence -- has recently gained interest. We refer to \cite{Guo2017WP,Tam2025WP} for distributional robust optimisation problems under the BW divergence, and \cite{Pesenti2024WP} for an application to portfolio choice. While the work of \cite{Pesenti2024WP} is close to this manuscript, there are important differences. The authors of \cite{Pesenti2024WP} consider an investor who maximises the expected utility of their terminal wealth's strategy subject to a budget constraint and that the investor's terminal wealth is close to that of a benchmark in a BW divergence. First, here we consider the criterion of the expected utility of the difference between the terminal wealth of the investor's and a proportion of the benchmark's strategy. Second, we additionally require that the investor's terminal wealth is comonotone with that of the benchmark, which allows for state-by-state comparison and penalising under- versus outperformance. Third, we consider the $\alpha$-BW divergences which subsumes the BW divergence for the choice of $\alpha = \frac12$.

Through an in-depth case study in a geometric Brownian market model, we investigate the investor's optimal terminal wealth for different choices of the $\alpha$-BW divergence. Specifically, we consider the parametric Power family of the $\alpha$-BW divergence. The Power family includes the Wasserstein distance as a special case, and allows the investor to be, e.g., penalise deviations from the left tail of the benchmark's terminal wealth more than its right tail. We observe that the freedom to deviate from the benchmark's right tail leads to larger gains than that of the benchmark as measure by the utility, gain-loss-ration, and upper tail expectation.

The manuscript is organised as follows. \Cref{sec:model} details the investor's optimisation problem including the underlying market model and the definition of the $\alpha$-BW divergence. \Cref{sec:quantile-reform} is devoted to the quantile reformulation and uniqueness, and \Cref{sec:unique-well-posed} discusses the well-posedness of the investor's optimisation problem. In \Cref{sec:boundary-cases} we derive the optimal quantile function when only one constraint (either the budget or the $\alpha$-BW constraint) is binding. The optimal quantile function when both constraints are binding is given in \Cref{sec:optimal-quantile} and we derive explicit conditions when the optimal quantile function exists in \Cref{sec:existence}. We conclude in \Cref{sec:ex} illustrating the optimal quantile functions in a geometric Brownian motion market model. The appendix collect auxiliary results and details on the numerical example.

\section{Investor's portfolio choice}
In this section, we detail the investor's preference and criterion, the market model, and the $\alpha$-Bregman-Wasserstein divergence, that they use to penalise between under- and outperformance.

\subsection{Investor's preferences}\label{sec:model}
Throughout, we work on a complete filtered probability space $(\Omega, \F,\P, \mathds{F}:= \{\F_t\}_{\tT})$, where $\mathds{F}$ denotes the natural filtration generated by the underlying processes. 
Over a finite time horizon $\T$, $T >0$, we consider an investor with utility preferences who aims at outperforming a benchmark strategy. Available to the investor are $d\in\N$ risky assets, $S^{i} := (S^{i}_{t})_{\tT}$, where $i \in\mcD:= \{1, \ldots, d\}$, and a risk-free bank account. 
The investor employs an admissible trading strategy that is an $\mathds{F}$-predictable process, self-financing, and belongs to $L^2(\Omega, [0,T])$. We denote admissible trading strategies by $ \btheta:= (\btheta_t)_\tT$, with $\btheta_{t}:= (\theta_{t}^1, \ldots, \theta_{t}^d)$, and $\theta_{t}^i$ represents the proportion of wealth invested in asset $i\in\mcD$ at time $\tT$, and use $\mcA$ for the set of admissible strategies. For an admissible strategy $\btheta \in \mcA$, the wealth process of the investor $X^{\btheta}:= (X^{\btheta}_t)_\tT$ evolves via the following stochastic differential equation (SDE)
\begin{equation}\label{eq:MM-SDE}
    \d X_t^{\btheta} := X_t^{\btheta}  ( 1- \bone^\intercal \btheta_t) \,r_t \, \d t + X_t^{\btheta} \sum_{i \in \mcD} \theta_t^i \;\frac{\d S_t^i}{S_t^i}\,,
\end{equation}
where $\bone$ is the vector of ones, i.e., $\bone := (1, \ldots, 1)$ and $(r_t)_{\tT}$ denotes the stochastic interest rate.
We further assume that the market is complete and denote by $ (\varsigma_t)_{\tT}$ the stochastic discount factor (SDF). The next assumption is classical and required throughout the exposition.

\begin{assumption}\label{asm:sdf-cont}
    The SDF at terminal time $\varsigma_T$ is continuously distributed.
\end{assumption}

The investor has a benchmark strategy $\bpi:= (\bpi_t)_{\tT}\in\mcA$, with $\bpi_t:= (\pi_t^1, \ldots, \pi_t^d)$, whose terminal wealth $X_T^{\bpi}$ they wishe to outperform. The investor assesses a strategy $\btheta \in \mcA$ via the expected utility applied to the difference between the terminal wealth of the chosen strategy $X_T^{\btheta}$ and that of the benchmark $X_T^{\bpi}$. The investor's utility satisfies standard properties, given next.

\begin{definition}\label{def:utility}
    The investor's utility function $U \colon \R \to \R$ satisfies $U(x) = - \infty$ for all $x < 0$, $U(x) \ge 0$ for all $x \ge 0$, and $U(0) = 0$. Moreover $U$ is when restricted to $(0, +\infty)$ strictly increasing, strictly concave,  twice differentiable, and satisfies the Inada condition:
    \begin{equation*}
        U'(0):= \lim_{x\downarrow 0} U'(x)=+\infty \quad \mbox{and}\quad  \lim_{x\to +\infty} U'(x)=0.
    \end{equation*}
Moreover, we use the convention that $U'(x) = +\infty$, whenever $x <0$, so that $U'$ is continuous on $\R$.
\end{definition}

The investor aims at outperforming the benchmark on a scenario by scenario case and moreover penalises asymmetrically if the benchmark is out- or underperformed. To establish scenario by scenario comparison, the investor restricts to strategies $\btheta$, such that their terminal wealth $X_T^{\btheta}$ is comonotone to that of the benchmark's $X_T^{\bpi}$. Recall that two random variables $(Y_1, Y_2)$ are comonotone if their are generated by the same random variable, i.e., if $(Y_1, Y_2) = \big(g_1(Z), \, g_2(Z)\big)$ $\P$-a.s. for $g_1, g_2\colon \R \to \R$ non-decreasing and a random variables $Z$.

To quantify out- and underperformance relative to the benchmark, the investor utilises the asymmetric $\alpha$-Bregman-Wasserstein ($\alpha$-BW) divergence, recently introduced in \cite{Pesenti2024ORL}, and recalled next. For a random variable $Z$, we write $F_Z(\cdot) := \P(Z \le \cdot)$  and $\Finv_Z(\cdot):=\inf\{y \in \R ~|~ F_Z(y) \ge u\}$, $u \in (0,1)$ to denote its  cumulative distribution functions (cdf) and (left-continuous) quantile function, respectively.

\begin{definition}[$\alpha$-Bregman-Wasserstein divergence - \cite{Pesenti2024ORL}]
Let $\phi\colon \R \to \R$ be a strictly convex and differentiable function,\footnote{For the exposition it is enough that $\phi$ is defined on $(c \, \Finvbench(0^+),  +\infty)$, where $\Finvbench(0^+):= \lim_{x \searrow 0}\Finvbench(x)$, and $c$ given in optimisation problem \eqref{opt:main-original}.} $\alpha \in(0,1)$, and $F_1, F_2$ be two univariate cdfs. Then, the $\alpha$-Bregman-Wasserstein divergence from $F_1$ to $F_2$ is defined by
\begin{equation*}
    \DBW(F_1, F_2)
    :=        \inf_{\pi\in\Pi(F_1,\,F_2)} \;\left\{\,\int_{\R^2} \big|\Id_{z_1\le z_2} - \alpha \big|\, B_\phi(z_1,z_2)
    \,\pi(\d z_1,\d z_2)\, \right\}\,,
\end{equation*}
 where $\Pi(F_1,F_2)$ is the set of all bivariate cdfs with marginal cdfs $F_1$ and $F_2$, respectively, and $B_\phi$ denotes the Bregman divergence given by
 \begin{equation}
 \label{eqn:Bregman}
     B_\phi\big(z_1, z_2\big)
    := \phi(z_1) - \phi(z_2) - \phi'(z_2) (z_1-z_2)
    \,,\quad z_1,z_2\in\R\,,
 \end{equation}
and where $\phi'(z):= \frac{d}{dz} \phi(z)$ is the derivative of $\phi$.
\end{definition}

The $\alpha$-BW divergence is an asymmetric divergence that arises from the Monge-Kantorovich problem when the cost function is a statistical scoring function that elicits the $\alpha$-expectile, and we refer to \cite{Pesenti2024ORL} for a detailed discussion.\footnote{The $\alpha$-expectile is elicitable \citep{Gneiting2011JASA}, that is it has representation
\begin{equation*}
    e_\alpha(F)
    =
    \argmin_{z \in \R} \int_\R\big|\Id_{y\le z} - \alpha \big|\, B_\phi(y,z)\, \d F(y)\,,
\end{equation*}
for any cdf $F$ such that the argmin exists and for any choice of Bregman divergence $B_\phi$ defined in \eqref{eqn:Bregman}.} 
As the $\alpha$-BW divergence from $F_1$ to $F_2$ is a divergence on the space of cdfs, we interchangeably write $\DBW(F_1, F_2)$ and $\DBW(\Finv_1, \Finv_2)$, where $\Finv_i$ denotes the (left-continuous) quantile function of $F_i$. For $\alpha = \frac12$, the $\alpha$-BW divergence reduces to one-half the Bregman-Wasserstein divergence \citep{carlier2007monge}, which we denote by $\BW(\cdot, \cdot):= \BW^\frac12(\cdot, \cdot) $. If furthermore $\phi(x) = x^2$, the $\frac12$-BW divergence coincides with one-half the squared 2-Wasserstein distance. While there is some literature on the BW divergence, this is the first work to consider an optimisiation problem with an $\alpha$-BW divergence.

By \cite{Pesenti2024ORL} the $\alpha$-BW divergence admits representation
\begin{equation}\label{eq:DWB}
    \DBW\left(F_1, F_2\right)
    =
    \int_0^1  \big|\Id_{\Finv_1(u)\le \Finv_2(u)} - \alpha \big|\; B_\phi\left(\Finv_1(u),\Finv_2(u)\right)
    \;\d u
    \,.
\end{equation}
Assume that the terminal wealth of a strategy $\btheta\in \mcA$ is comonotone with that of the benchmark, i.e., $(X_T^{\btheta}, X_T^{\bpi})$ is comonotone. This implies that $(X_T^{\btheta}, X_T^{\bpi}) = (\Finv_{X_T^{\btheta}}(U), \Finv_{X_T^{\bpi}}(U))$ $\P$-a.s. for the uniform random variable $U:= F_{X_T^{\bpi}}(X_T^{\bpi})$, and the $\alpha$-BW divergence from the investor's strategy's terminal wealth $X_T^{\btheta}$ to that of the benchmark $X_T^{\bpi}$ is $\DBW(\Finv_{X_T^{\btheta}}, \Finv_{X_T^{\bpi}})$. The asymmetry arises due to the Bregman divergence $B_\phi$ and $\phi$ should be chosen to penalise absolute differently to gains. Furthermore, outperformance and underperformance relative to the benchmark is also penalised asymmetrically. Indeed underperformance, i.e., $\Finv_{X_T^{\btheta}}(u) \le \Finv_{X_T^{\bpi}}(u)$, is penalised by $1-\alpha$, whereas outperformance, i.e., $\Finv_{X_T^{\btheta}}(u) > \Finv_{X_T^{\bpi}}(u)$, is penalised by $\alpha$. Thus, as the investors wish to outperform the benchmark, they choose $\alpha \le 0.5$. We note that the comonotonicity assumption between the investor's and the benchmark's terminal wealth is essential for the interpretation of under- versus outperformance. Indeed, by comonotonicity $\Finv_{X_T^{\btheta}}(u) \le \Finv_{X_T^{\bpi}}(u)$ corresponds to scenarios where $X_T^{\btheta}(\omega) \le X_T^{\bpi}(\omega)$, $\omega \in \Omega$, thus to underperforming the benchmark's terminal wealth.

Next, we establish some properties of the $\alpha$-BW divergence.

\begin{lemma}\label{lemma:convex-alpha-BW}
The $\alpha$-BW divergence is convex in its first argument on the space of quantile functions, i.e. $\DBW(w \Ginv_1 + (1-w) \Ginv_2 , \Finv) \le w\, \DBW(\Ginv_1, \Finv) + (1-w)\, \DBW(\Ginv_2, \Finv)$, for all $w \in [0,1]$ and quantile functions $\Ginv_1, \Ginv_2$, and $\Finv$.
\end{lemma}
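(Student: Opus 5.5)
We will work throughout with the quantile representation \eqref{eq:DWB}. Since $w\Ginv_1+(1-w)\Ginv_2$ is again a quantile function, it suffices to prove convexity pointwise in $u$ and then integrate. Fix $u\in(0,1)$ and set $y:=\Finv(u)$. Define the scalar function
\begin{equation*}
h_y(x) := \big|\Id_{x\le y}-\alpha\big|\,B_\phi(x,y)
= \begin{cases}(1-\alpha)\,B_\phi(x,y), & x\le y,\\ \alpha\, B_\phi(x,y), & x>y.\end{cases}
\end{equation*}
The claim reduces to showing that $h_y$ is convex on $\R$ (or on the domain of $\phi$). Once this holds, $h_y\big(w\Ginv_1(u)+(1-w)\Ginv_2(u)\big)\le w\,h_y(\Ginv_1(u))+(1-w)\,h_y(\Ginv_2(u))$ for every $u$, and integrating over $u\in(0,1)$ gives the lemma.

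\textbf{Convexity of $h_y$.} The map $x\mapsto B_\phi(x,y)=\phi(x)-\phi(y)-\phi'(y)(x-y)$ is convex, because it is $\phi$ plus an affine function. It is nonnegative and vanishes at $x=y$. Moreover it is differentiable, with derivative $\phi'(x)-\phi'(y)$, which equals $0$ at $x=y$. Hence each branch $(1-\alpha)B_\phi(\cdot,y)$ on $(-\infty,y]$ and $\alpha B_\phi(\cdot,y)$ on $[y,\infty)$ is convex, being a positive multiple of a convex function.

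The only point needing care is the junction $x=y$, where the weight jumps from $1-\alpha$ to $\alpha$. Both branches take the value $0$ there, so $h_y$ is continuous. Both one-sided derivatives equal $0$: the left one is $(1-\alpha)\cdot 0$ and the right one is $\alpha\cdot 0$. Therefore $h_y$ is differentiable at $y$ with $h_y'(y)=0$.

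On the interior of each piece, $h_y'$ is a positive multiple of $\phi'(x)-\phi'(y)$, which is nondecreasing because $\phi$ is convex. On the left the derivative is $\le 0$ and on the right it is $\ge 0$. So $h_y'$ is nondecreasing on all of $\R$, and a differentiable function with nondecreasing derivative is convex.

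If one prefers to avoid differentiability arguments, there is an alternative. Write $h_y(x)=\max\{(1-\alpha)B_\phi(x,y)\Id_{x\le y},\,\alpha B_\phi(x,y)\Id_{x\ge y}\}$ and note directly that $h_y$ is nonincreasing convex on the left, nondecreasing convex on the right, and has its minimum $0$ at $y$. A function with these three properties is convex.

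\textbf{Main obstacle.} The only subtle step is this kink check at $x=y$, i.e., verifying that the asymmetric weights do not destroy convexity. It works precisely because $B_\phi(\cdot,y)$ has both a zero value and a zero slope at $y$. Integrability of the integrals in \eqref{eq:DWB} can be handled by allowing the value $+\infty$; the inequality is trivially true when the right-hand side is infinite.
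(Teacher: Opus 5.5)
Your proposal is correct and follows essentially the same route as the paper, which also proves that $x\mapsto |\Id_{x\le y}-\alpha|\,B_\phi(x,y)$ is convex by showing that its derivative $\big((1-\alpha)\Id_{x\le y}+\alpha\Id_{x>y}\big)\big(\phi'(x)-\phi'(y)\big)$ is non-decreasing, and then integrates over $u$ using the quantile representation \eqref{eq:DWB}. Your explicit check at the junction $x=y$ and your remark on integrability are slightly more careful than the paper's write-up. Note, however, that the junction step works because the derivative changes sign there (non-positive to the left of $y$, non-negative to the right), which is exactly what the paper invokes; the zero slope at $y$ is sufficient but not needed.
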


\begin{proof}
Define the function 
\begin{equation*}
    S(x,y):=
    |\Id_{x\le y} - \alpha |\; B_\phi\left(x,y\right)\,.
\end{equation*}
Since $\phi$ is differentiable, the function $S(x,y)$ is differentiable in $x$ with derivative
\begin{equation*}
    \frac{\partial}{\partial x} \,S(x,y)
    =
    |\Id_{x\le y} - \alpha | \;\big(\phi'(x) - \phi'(y)\big)
    =
    \big(\Id_{x\le y} (1-\alpha) + \Id_{x> y} \alpha\big) \;\big(\phi'(x) - \phi'(y)\big)\,.
\end{equation*}
We observe that $\frac{\partial}{\partial x} \,S(x,y)$ is non-decreasing in $x$, for all $y$. To see this, note that by strict increasingness of $\phi'(\cdot)$ we have that $\frac{\partial}{\partial x} \,S(x,y)\Id_{x< y} <0< \frac{\partial}{\partial x} \,S(x,y)\Id_{x> y} $ and that $\frac{\partial}{\partial x} \,S(x,y)$ is non-decreasing on the sets $\{x \le y\}$ and $\{x > y\}$. Thus $S(x,y)$ is convex in $x$, for all $y$. Since $\DBW(\Finv_1, \Finv_2 ) = \int_0^1 S\big(\Finv_1(u), \Finv_2(u)\big)d u$, the $\DBW$ is convex in its first argument. 
\end{proof}

\begin{lemma}\label{lemma:ineq-BW}
    Let $\Finv_1, \Finv_2 $ be quantile functions. Then 
\begin{equation*}
    \DBW(\Finv_1,\Finv_2) \le\max\{ \alpha, 1-\alpha\} \, \BW (\Finv_1,\Finv_2)
    \le \BW (\Finv_1,\Finv_2)\,.
\end{equation*}   
\end{lemma}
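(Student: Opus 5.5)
The plan is to work entirely with the quantile representation \eqref{eq:DWB} and compare integrands pointwise in $u\in(0,1)$. I read $\BW(\Finv_1,\Finv_2)$ in the statement as the (unweighted) Bregman--Wasserstein divergence
\begin{equation*}
\BW(\Finv_1,\Finv_2)=\int_0^1 B_\phi\big(\Finv_1(u),\Finv_2(u)\big)\,\d u .
\end{equation*}
This normalisation is what makes the second inequality $\max\{\alpha,1-\alpha\}\,\BW\le \BW$ meaningful. It is also what \eqref{eq:DWB} gives for $\alpha=\tfrac12$ once the factor $|\Id - \tfrac12| = \tfrac12$ is removed, and it matches the comonotone-coupling representation of the BW divergence in \cite{carlier2007monge}. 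If $\BW$ were instead taken to carry the factor $\tfrac12$, the same argument would produce the constant $2\max\{\alpha,1-\alpha\}$ in place of $\max\{\alpha,1-\alpha\}$. I would state this normalisation explicitly at the start of the proof.

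First, I note that $B_\phi(z_1,z_2)\ge 0$ for all $z_1,z_2$, because strict convexity and differentiability of $\phi$ give $\phi(z_1)\ge \phi(z_2)+\phi'(z_2)(z_1-z_2)$. Second, the weight satisfies
\begin{equation*}
\big|\Id_{x\le y}-\alpha\big| = (1-\alpha)\Id_{x\le y}+\alpha\,\Id_{x>y}\le \max\{\alpha,1-\alpha\}\quad\text{for all } x,y .
\end{equation*}
Taking $x=\Finv_1(u)$ and $y=\Finv_2(u)$ and multiplying by the nonnegative quantity $B_\phi(\Finv_1(u),\Finv_2(u))$ gives a pointwise bound on the integrand. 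Integrating over $u\in(0,1)$ and using \eqref{eq:DWB} then yields $\DBW(\Finv_1,\Finv_2)\le \max\{\alpha,1-\alpha\}\int_0^1 B_\phi(\Finv_1(u),\Finv_2(u))\,\d u=\max\{\alpha,1-\alpha\}\,\BW(\Finv_1,\Finv_2)$. The second inequality follows because $\alpha\in(0,1)$ gives $\max\{\alpha,1-\alpha\}<1$, and $\BW\ge 0$ by nonnegativity of $B_\phi$.

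There is no real obstacle in this argument. The only delicate points are the normalisation of $\BW$ discussed above and the case where the integrals are infinite. In that case the inequality still holds in $[0,+\infty]$, since every integrand involved is nonnegative.
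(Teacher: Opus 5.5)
Your proof is correct and matches the paper's argument: bound the weight $|\Id_{\Finv_1(u)\le\Finv_2(u)}-\alpha|$ pointwise by $\max\{\alpha,1-\alpha\}$ and integrate via \eqref{eq:DWB}, with the nonnegativity of $B_\phi$ (which the paper uses tacitly) made explicit. Your normalisation caveat is well founded: the paper elsewhere sets $\BW:=\BW^{\frac12}=\tfrac12\int_0^1 B_\phi(\Finv_1(u),\Finv_2(u))\,\d u$, and under that normalisation the stated constants already fail at $\alpha=\tfrac12$, so both the lemma and the paper's proof hold only under the unweighted reading $\BW=\int_0^1 B_\phi\,\d u$ that you adopt.
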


\begin{proof}
For any quantile functions $\Finv_1, \Finv_2$, it holds that
\begin{equation*}
  |\Id_{\Finv_1(u)\le \Finv_2(u)} - \alpha |
  \le \max\{\alpha,  1-\alpha\}\,, \quad \text{for all} \quad u \in (0,1) \,.
\end{equation*}
Combining with \Cref{eq:DWB} yields the first inequality, the second follows as $\alpha \le 1$.
\end{proof}

Recall that the investor uses the $\alpha$-BW divergence to measure deviation from the benchmark. Specifically, the investor only considers alternative strategies whose $\alpha$-BW divergences to that of the benchmark is less or equal to a tolerance distance $\ep>0$. By \Cref{lemma:ineq-BW}, using the $\alpha$-BW divergence with $\alpha \ge \frac12$ allows the investor to choose from a larger set of strategies compared to using the BW divergence. 

While in our setting the market model is general, we provide a running example within a geometric Brownian motion framework.

\begin{example}[Geometric Brownian motion market model]
\label{ex: GBM-MM}
A classic example, which we use for numerical implementation, is the geometric Brownian motion (GBM) market model, in which the filtration is generated by a $d$-dimensional $\P$-Brownian motion $W=(W_t)_\tT$ with correlation structure $d[W^i,W^j]_t=\rho_{ij}\,dt$, for $i,j\in\mcD$, and we use $\rho$ to denote the matrix whose entries are $\rho_{ij}$. The risky asset prices satisfy the SDEs
\begin{align}
    dS_t^i &= S_t^i\left(\mu^i_t\,dt + \sigma^i_t \,dW_t^i \right), \qquad  i\in\mcD,
    \label{eqn:dS-GBM}
\end{align}
where $\mu_t=(\mu^1_t,\dots,\mu^d_t)$ is the vector of (deterministic) drifts and $\sigma_t=(\sigma_t^1,\dots,\sigma_t^d)$ is the vector of (deterministic) instantaneous volatilities, which are strictly positive. Moreover, the SDF $(\varsigma_t)_{\tT}$ satisfies the SDE
\begin{equation}
    d\varsigma_t = -\varsigma_t\,\left(r_t\,dt+ \lambda_t^\intercal\,dW_t\right)\,,
    \label{eqn:sdf-gbm}
\end{equation}
where $\lambda=(\lambda_t)_\tT$ denotes the vector-valued market price of risk with  $ \lambda_t =\rho^{-1}\frac{\mu_t-r_t\,\mathbf{1}}{\sigma_t}$ (where the division is component-wise) and the short rate of interest $r_t$ is deterministic.

We further consider a deterministic benchmark portfolio, that is a deterministic $\pi$, and refer to \cite{basak2006risk,ng2023portfolio} for stochastic benchmark examples. Then the benchmark's terminal wealth distribution and quantile function are
\begin{subequations}
\begin{align}
    F_{X_T^{\bpi}}(x) &= \Phi\left(\frac{\log(x/X_0^\pi)-\Gamma+\tfrac12 \Psi^2}{\Psi}\right)\,, \qquad \text{respectively}
    \label{eqn:F-for-deterministic-bench}
    \\[0.25em]
    \Finv_{X_T^{\bpi}}(u) &= X_0^\pi\,\exp\left\{(\Gamma-\tfrac12 \Psi^2)+\Psi\,\breve{\Phi}(u)\right\}, \label{eqn:Finv-for-deterministic-bench}
\end{align}
\end{subequations}
where, 
\begin{equation}
\label{eqn:Gamma-and-Psi}
    \Gamma := {\textstyle\int_0^T} \big(\left(\mu_s-r_s\,\mathbf{1}\right)^\intercal\,\pi_s + r_s\big) \,\d s\;,
    \quad 
    \Psi^2 := {\textstyle\int_0^T} \pi_s^\intercal \Upsilon_s\,\pi_s\,\d s\,,
\end{equation}
$\Upsilon$ is the covariance matrix with $\Upsilon^{ij}_t=\sigma^i_t\rho_{ij}\sigma^j_t, \;i,j\in\mcD$, and where $\Phi$ and $\breve{\Phi}$ are the cdf and quantile function of a standard normal distribution.

\begin{figure}[th]
    \centering
    \begin{minipage}{0.45\textwidth}
    \includegraphics[width=\linewidth]{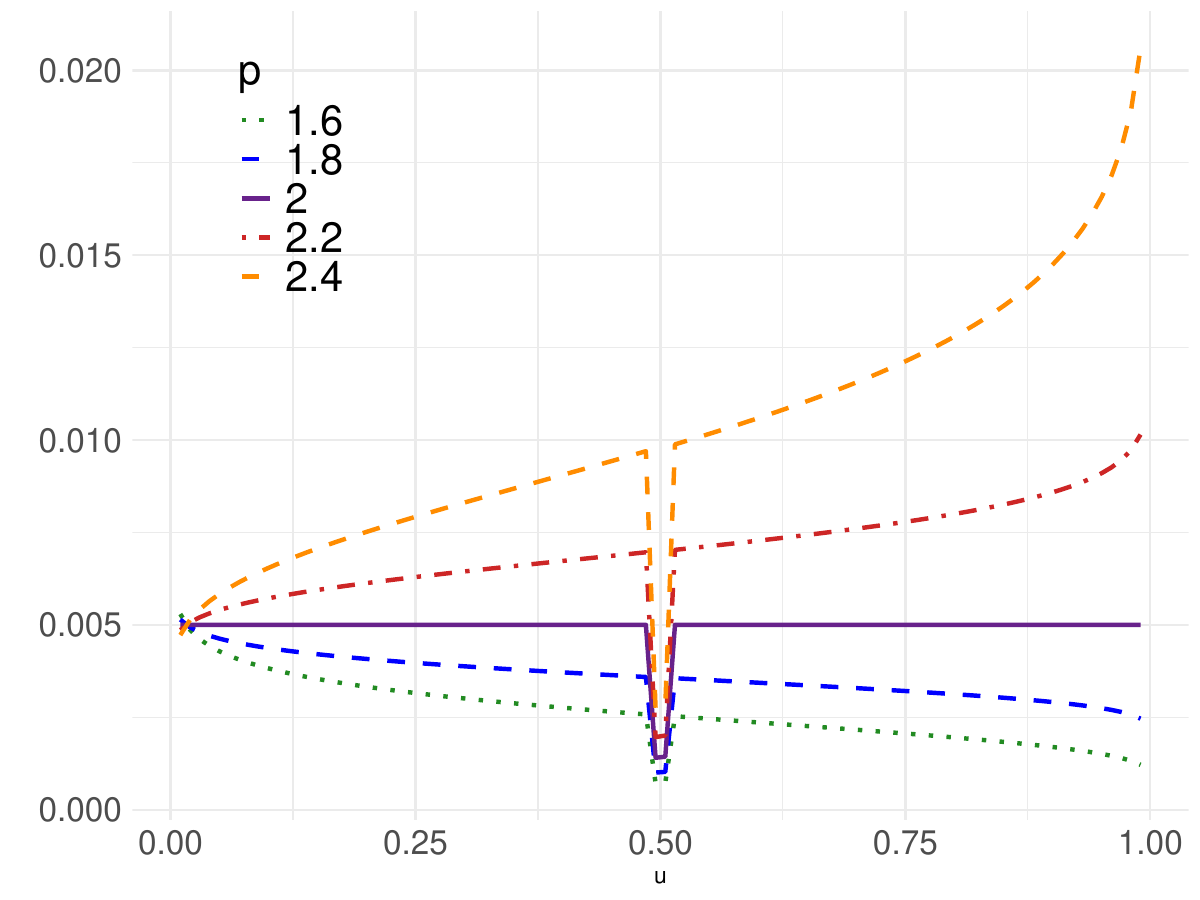}
            \end{minipage}%
    \hfill
       \begin{minipage}{0.48\textwidth}
    \includegraphics[width=\linewidth]{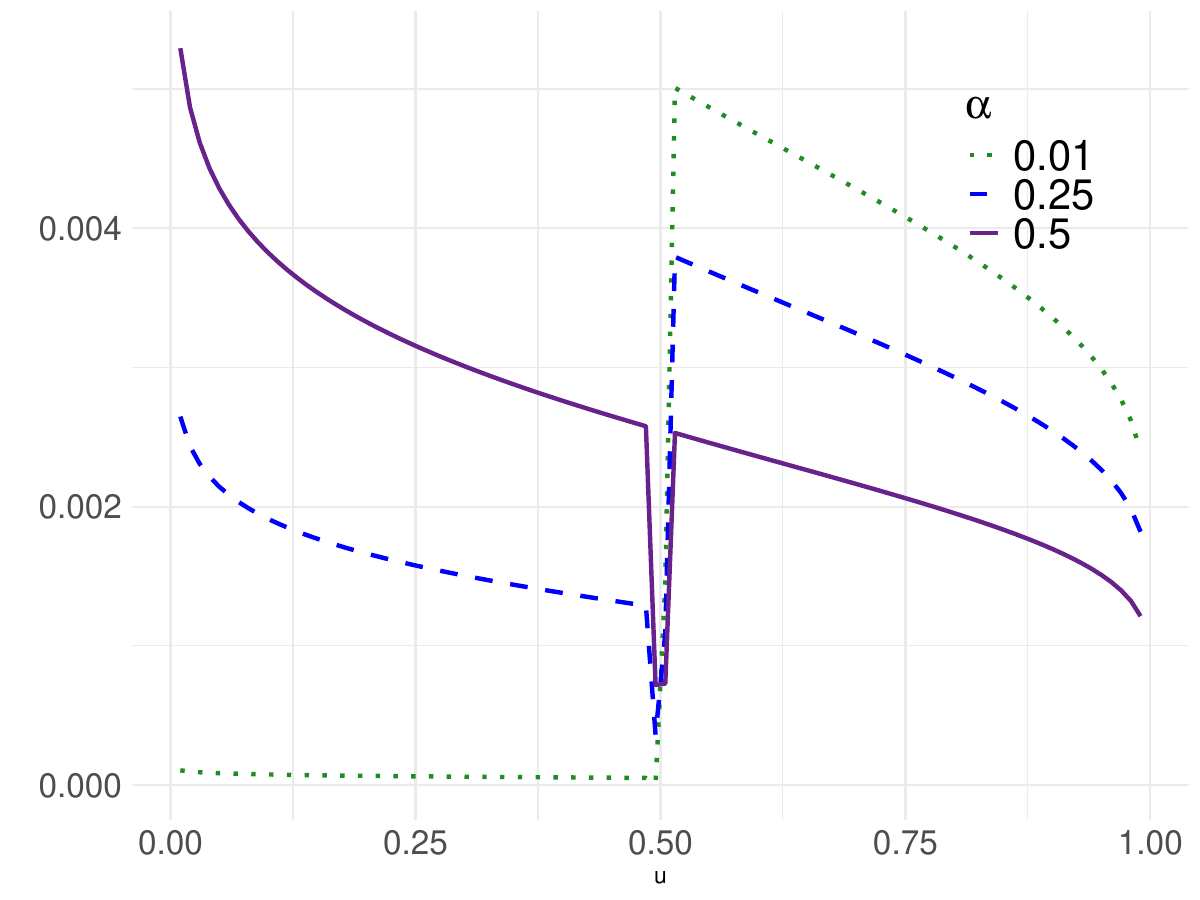}
        \end{minipage}%
 
    \vspace{1em}
    \centering    
      \begin{minipage}{0.45\textwidth}
      \includegraphics[width=\linewidth]{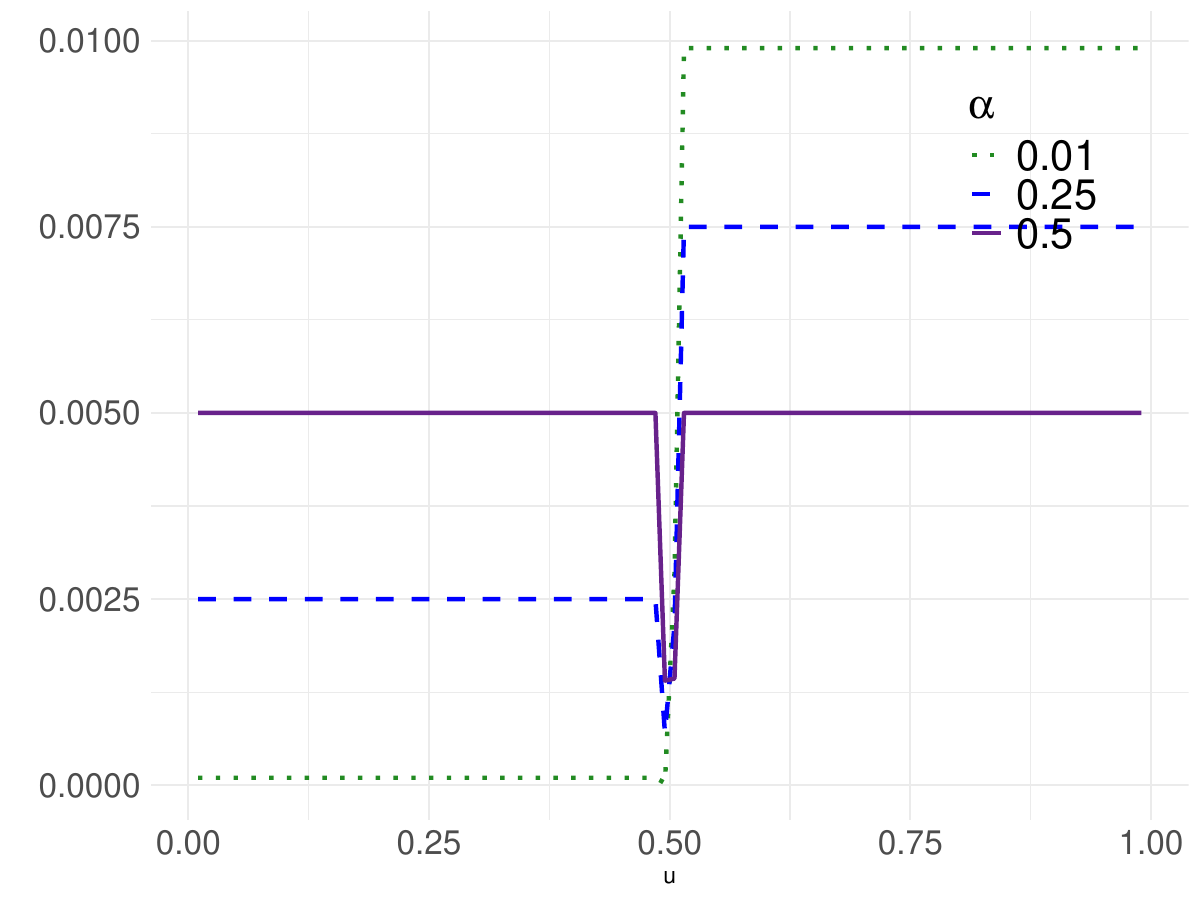}
    \end{minipage}%
    \hfill
       \begin{minipage}{0.48\textwidth}
           \includegraphics[width=\linewidth]{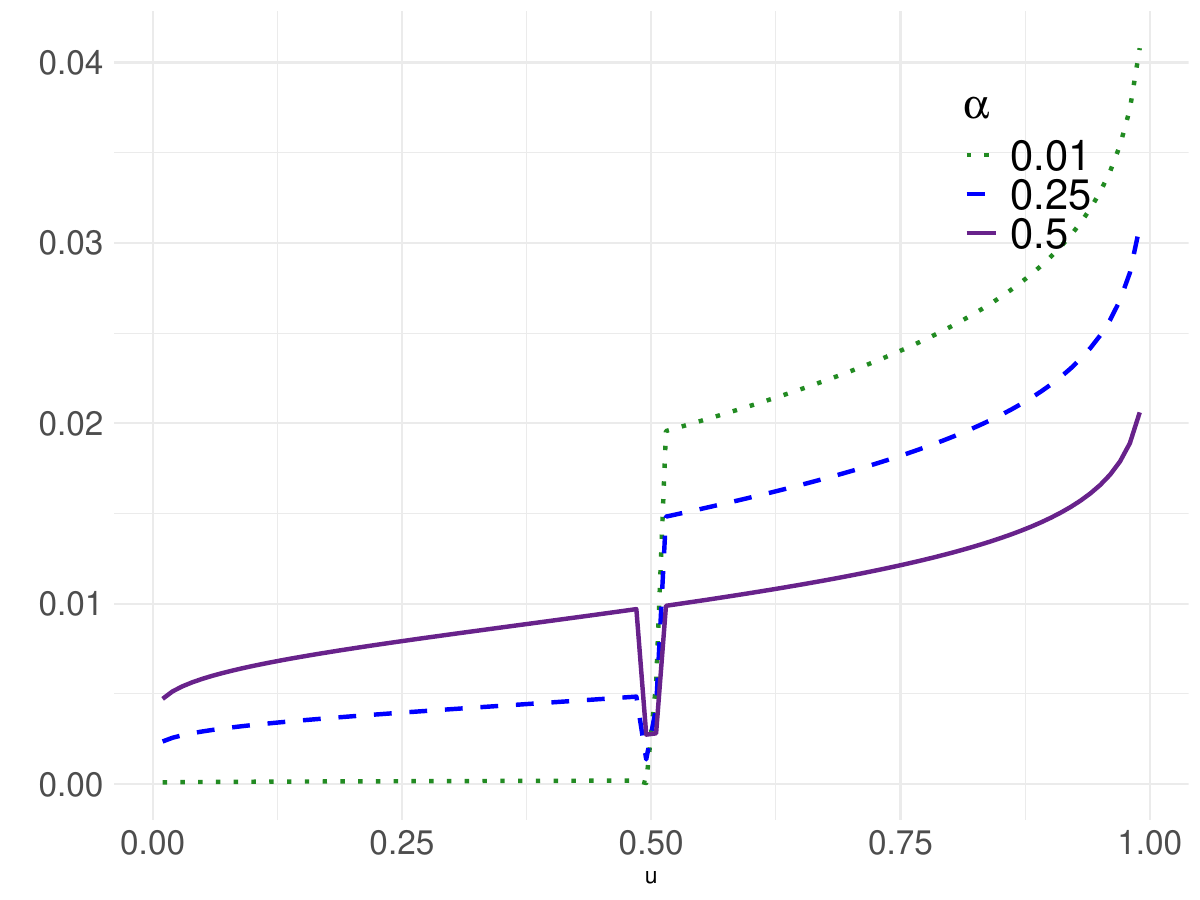}
           \end{minipage}%
    \caption{Integrand of the $\alpha$-BW divergence for the Power family $\phi_p$ from the quantile function $\Ginv$ given in \eqref{eq:modified-bench} to the benchmark quantile $\Finv_{X_T^{\bpi}}$. Top left panel: $p \in \{1.6, 1.8, 2, 2.2, 2.4\}$ and $\alpha = 0.5$. Top right panel: $\alpha \in  \{0.01, 0.25, 0.5\}$ and  $p = 1.6$. Bottom left panel: $\alpha \in \{0.01, 0.25, 0.5\}$ and  $p = 2$. Bottom right panel: $\alpha \in \{0.01, 0.25, 0.5\}$ and  $p = 2.4$.
  }
    \label{fig:BW-integrand-benach-linear-shift}
\end{figure}
\Cref{fig:BW-integrand-benach-linear-shift} illustrates the $\alpha$-BW divergence by plotting the integrand of the $\alpha$-BW divergence (see \eqref{eq:DWB}), from a quantile function $\Ginv$ to the the benchmark quantile $\Finv_{X_T^{\bpi}}$, i.e., the integrand of $\DBW(\Ginv, \Finv_{X_T^{\bpi}})$. The quantile function $\Ginv$ is constructed to dominate $\Finvbench$ for small values of $u \in (0,1)$ and to be smaller than the benchmark's quantile for large values of $u \in (0,1)$. Specifically
\begin{equation}\label{eq:modified-bench}
    \Ginv(u) := \min\left\{\Finvbench(u) + 0.1, \Finvbench(0.5)\, \right\} + \max\left\{\Finvbench(u)  -  \Finvbench(0.5)- 0.1, 0\, \right\}\,.
\end{equation}
Thus, the quantile function $\Ginv$ dominates the benchmark by 0.1, for all $u$ satisfying $\Finvbench(u) \le \Finvbench(0.5)-0.1$. Moreover, the benchmark dominates $\Ginv$ by 0.1 whenever $\Finvbench(u)  \ge \Finvbench(0.5) + 0.1$, and for all other $u$, the quantile function $\Ginv$ is equal to $\Finvbench(0.5)$. By construction, the absolute difference between $\Ginv$ and $\Finvbench$ is the same for small and large values of $u$. For calculating the benchmark, we choose the market parameters $\Gamma = 2$, $\Psi= 0.8$, $X_0^\pi = 1$, which means that the benchmark has a lognormal distribution $LNorm(\nu, \vartheta^2$), with $\nu= 1.68$ and $\vartheta = 0.8$, where parameters are on the log scale. 

We consider the Power family of Bregman generator given by $\phi_p(x) = \frac{2 x^p}{p(p-1)}$, $x>0$ and $p>1$, which is compatible with the setup that the benchmark quantile function is non-negative. The scale factor of $2$ guarantees that the $\phi_2(\cdot) = x^2$, that is for $p = 2$, we recover the squared Wasserstein distance.

All panels of \Cref{fig:BW-integrand-benach-linear-shift} display the integrand of the $\alpha$-BW divergence with the Power generator $\phi_p$ from $\Ginv$ to $\Finvbench$ for different choices of parameters $p$ and $\alpha$ of the $\alpha$-BW divergence. The top left panel shows the behaviour for different $p \in \{1.6, 1.8, 2, 2.2, 2.4\}$ and fixed $\alpha = 0.5$. Recall that for $\alpha = 0.5$ there is no distinction between under- and outperforming the benchmark, thus in the top left panel we only observe the impact of the Bregman divergence. We observe that for $p = 2$ (which corresponds to one-half the symmetric squared Wasserstein distance), we obtain strait lines, whenever the absolute difference between $\Ginv$ and $\Finvbench$ is equal to $0.1$, i.e. on the set $A:= \{u \in (0,1) ~|~|\Ginv(u) - \Finvbench(u)| = 0.1\}$. When $p<2$, the integrands are, when restricted to $A$, non-increasing functions, thus exaggerating deviations from the benchmark whenever $u$ is small. This is in contrast to $p>2$, where the integrands are, when restricted to $A$, non-decreasing functions; exaggerating deviations from the benchmark when $u$ is large. The other three plots display the integrand of the $\alpha$-BW divergence for $\alpha \in \{0.01, 0.25, 0.5\}$ and difference choices of $p = 1.6$ (top right panel), $p = 2$ (bottom left panel), $p = 2.4$ (bottom right panel). Recall that $\alpha\neq \frac12$ allows to differently penalises underperformance with respect to the benchmark with the factor $\alpha$, while outperformance with respect to the benchmark with factor $1-\alpha$; thus $\alpha < 0.5$ penalises underperformance more. In all three plots we observe that the smaller $\alpha$, the less deviations from the benchmark are penalised in the left tail where outperformance occurs. This is different to the right tail (underperformance), where we observe a larger penalisation for smaller values of $\alpha$. In conclusion, an investor who wishes to deviate from the left tail of benchmark may choose $p>2$ (allowing for smaller and larger losses), and an investor who wishes to be deviate from to right tail of the benchmark considers $p<2$ (allowing for smaller or larger gains).

\vspace{1em}   
\end{example}

\subsection{The investor's optimisation problem}

The investor aims to find a strategy that maximises the utility of the difference between the selected portfolio's terminal wealth and a proportion of the benchmark's while (i) satisfying a budget constraint, that is begin not more expensive than the budget $x_0>0$, (ii) being comonotone to the terminal wealth of the benchmark, and (iii) over- and underperformance is penalised via the $\alpha$-BW divergence.

We denote the cost of the benchmark strategy by $y_0:= \E\big[ \varsigma_TX_T^\pi\big]$, which is not necessarily equal to the budget $x_0$, and make the following assumption on the benchmark's terminal wealth.

\begin{assumption}\label{asm:bench-cont}
    The terminal wealth of the benchmark $X_T^\pi$ is continuously distributed and  has finite expected utility, i.e., $\E[U(X_T^\pi)]<+\infty$.
\end{assumption}

The investor's optimisation problem is given as follows.

\begin{definition}
Let $c \in [0,1]$. An optimal strategy $\btheta^*\in\mcA$ is a solution to the optimisation problem
\begin{align}
\label{opt:main-original}
\max_{\btheta \in \mcA}  \; \E\left[U\big(X^{\btheta}_T - c \, X^{\bpi}_T\big)\right]\,,\qquad
& \text{subject to} \qquad
(i) \quad  \E\left[\varsigma_T X^{\btheta}_T\right] \le x_0\,, 
\\[0.5em]
&\phantom{\text{subject to}}\qquad (ii) \quad (X^{\btheta}_T, X^{\bpi}_T) \quad \text{are comonotone,}\quad \text{and}
\notag
\\[0.5em]
&\phantom{\text{subject to}}\qquad(iii) \quad \DBW\left(F_{X^{\btheta}_T}, F_{X^{\bpi}_T}\right)  \le \ep\,.\notag
\end{align}
\end{definition}

By definition of the utility function, we have $U(x) = - \infty$, whenever $x < 0$. Thus the objective function in optimisation problem \eqref{opt:main} is equivalent to 
\begin{equation*}
    \max_{\btheta \in \mcA}  \; \E\left[U\left(X^{\btheta}_T - c \, X^{\bpi}_T\right)\right]
    =
    \max_{\btheta \in \mcA}  \; \E\left[U\left(\big(X^{\btheta}_T - c \, X^{\bpi}_T\big)_+\right)\right]\,,
\end{equation*}
that is the investor only considers strategies, whose terminal wealth $X_T^\btheta$ is $\P$-a.s. larger than a proportion of the benchmark terminal wealth $c X_T^\bpi$. The exogenous constant $c\in [0,1]$ chosen by the investor determines how much they assess their utility relative to the benchmark. If $c = 0$, then the investor solely assesses the utility of their strategy. If $c>0$, the investor cares about outperforming a proportion of the terminal wealth of the benchmark, i.e. $cY$. This setting may be of interest to an investor whose bonus is solely determined by the outperformance of a benchmark. The constant $c$ could for example be chosen inverse proportional to the wealth of the benchmark -- as we do not make the assumption that the benchmark's cost is $x_0$. We reiterate that the comonotonicity assumption is key for the distributional $\alpha$-BW divergence to be interpreted as an out- and underperformance of the benchmark's terminal wealth. Indeed if $(X_T^\btheta, X_T^\bpi)$ is a comonotonic vector, then it holds that
\begin{equation*}
    \DBW\left(F_{X^{\btheta}_T}, F_{X^{\bpi}_T}\right) 
    = 
    \E\left[ \big|\Id_{X^{\btheta}_T\le {X^{\bpi}_T}} - \alpha \big|\; B_\phi\left(X^{\btheta}_T,{X^{\bpi}_T}\right)
    \right]\,,
\end{equation*}
thus underperformance occur whenever $X^{\btheta}_T\le {X^{\bpi}_T}$. 
As the utility equals $-\infty$ whenever evaluated at a negative number, any feasible strategy $\btheta$ satisfies $X_T^\btheta \ge c\, X_T^\bpi$, which by monotonicity is equivalent to $F_{X^{\btheta}_T}(u) \ge c\, F_{X^{\bpi}_T}(u)$. Thus, in the $\alpha$-BW divergence $\DBW\big(F_{X^{\btheta}_T}, F_{X^{\bpi}_T}\big)$ outperformance occurs if $F_{X^{\btheta}_T}(u) \ge F_{X^{\bpi}_T}(u)$ while underperformance materialises when $cF_{X^{\bpi}_T}(u)\le F_{X^{\btheta}_T}(u) \le F_{X^{\bpi}_T}(u)$.

We note that the investor's optimisation problem \eqref{opt:main-original}, that is its objective function and the constraints, only pertains to the terminal wealth of admissible strategies and that of the benchmark's, i.e. to $\F_T$-measurable random variables. Since the market model is complete, solving for the optimal terminal wealth random variable -- rather than the optimal strategy -- is sufficient, as the construction of the unique strategy that attains the terminal wealth follows by standard replicating strategies via, e.g., PDE or Monte Carlo regression methods or using Gaussian processes; indicatively see \cite{giles2005smoking} and \cite{Ludkovski2020WP}.

Thus, we first rewrite optimisation problem \eqref{opt:main-original} as an optimisation problem over $\F_T$-measurable rv. For simplicity of notation, we write $X:= X_T^{\btheta}$, $Y:= X_T^{\bpi}$, and $\varsigma:= \varsigma_T$. Moreover, we denote by $F_X(\cdot)$, $\Fbench(\cdot)$, $F_\varsigma(\cdot)$ their cdfs and by $\Finv_X(\cdot)$, $\Finvbench(\cdot)$, $\Finvsig(\cdot)$ their quantile functions. With this notation, the investor considers the optimisation problem

\begin{align}
\label{opt:main}\tag{P}
\max_{X \;\;  \F_T\text{-measurable} }  \; \E\Big[U\big(X - c Y\big)
\Big]\,, \qquad
& \text{subject to} \qquad
(i) \quad  \E\left[\varsigma X\right] \le x_0\,, 
\notag
\\
&\phantom{\text{subject to}}\qquad (ii) \quad (X, Y) \quad \text{are comonotone,}\quad \text{and}
\notag
\\[0.5em]
&\phantom{\text{subject to}}\qquad(iii) \quad \DBW\left(F_{X}, \Fbench\right)  \le \ep\,.\notag
\end{align}

We observe that for $c = 1$ the $\alpha$-BW constraint reduces to $\BW(F_X, F_Y) \le \frac{\ep}{2\, \alpha}$. To see this, note that $U(x) =  - \infty $, for $x<0$, implies that in this case the maximum in \eqref{opt:main} is taken over all $X \ge Y$, which by comonotonicity of $(X,Y)$ is equivalent to $\Finv_X(u) \ge \Finvbench(u) $ for all $u \in (0,1)$. Thus, in this case the $\alpha$-BW divergence is
\begin{align*}
    \BW^\alpha(\Finv_X, \Finvbench) 
    &=
    \alpha  \int_0^1   B_\phi\left(\Finv_X(u),\Finvbench(u)\right)
    \;\d u    
    =2\,\alpha \, \BW(\Finv_X, \Finvbench)\,,
\end{align*}
where we recall that $\BW(\cdot, \cdot) =\BW^\frac12(\cdot, \cdot) $. Therefore, the choice $c = 1$ reduces the $\alpha$-BW divergence to the BW divergence and a tolerance distance of $\frac{\ep}{2 \alpha}$.

We recall the usual terminology of constraints being binding and satisfied, which we will use throughout. Specifically, let $X$ be a feasible solution to optimisation problem \eqref{opt:main}. Then we say that a constraint, e.g., the budget constraint, is satisfied, if $\E[\varsigma X] \le x_0$. We say that a constraint is binding, e.g., the budget constraint, if $\E[\varsigma X] = x_0$.

\subsection{Quantile reformulation}\label{sec:quantile-reform}

We employ the quantile technique (see e.g., \cite{Dybvig_88,Foellmer_Schied_04,He_Zhou_11}) and rewrite optimisation problem \eqref{opt:main} over random variables as an optimisation problem over the space of quantile functions. The comonotonicity constraint between the terminal wealth of the investor's and the benchmark's strategy, renders the quantile formulation not classical. Closest is \cite{Pesenti2023SIAM} who consider a copula constraint between the terminal wealth of the investor's and the benchmark's strategy.

For this, we define the set of functions
\begin{equation*}
 \mcH:=\big\{\; H ~|~H : (0,1)\to \R \; \big\}\,
\end{equation*}
and the set of quantile functions as 
\begin{equation*}
 \mcM:=\left\{ \;\Ginv ~|   ~ \Ginv: (0,1)\to \R \,,\quad \text{non-decreasing} \quad  \&\quad \text{left-continuous}\; \right\} \subset \mcH\,.
\end{equation*}

Next, we rewrite the cost of a strategy in terms of the quantile function of its terminal wealth distribution. As we restrict to strategies whose terminal wealth is comonotone to that of the benchmark, the copula between $(X, \varsigma)$ is predetermined. Indeed by comonotonicity, $X$ is a non-decreasing transformation of $Y$, thus the copula (which is invariant under non-decreasing transformations) of $(X, \varsigma)$ equals the copula of $(Y, \varsigma)$. The next result gives a quantile representation of the budget constraint.
\begin{proposition}[Budget constraint]\label{prop:budget-como}
Let Assumptions \ref{asm:sdf-cont} and \ref{asm:bench-cont} be satisfied and $X$ be comonotone to $Y$. Then,
\begin{equation}\label{eq:xi}
    \E[\varsigma X] = \int_0^1 \Finv_X(u) \xi(u) \, du\,,
    \qquad \text{where} \qquad
    \xi(u) := \frac{\hat{f}_Y\big(\Finvbench(u)\big)}{f_Y\big(\Finvbench(u)\big)}\,,
\end{equation}
    and where $f_Y(x):=  \frac{d}{dy}F_Y(y)$ and $\hat{f}_Y(y):= \frac{d}{dy }\E[\varsigma\, \Id_{\{Y\le y\}}]$. 
    
\end{proposition}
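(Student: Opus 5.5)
By comonotonicity, we can write $X = \Finv_X(U)$ $\P$-a.s., where $U := F_Y(Y)$. Because $Y$ is continuously distributed (\Cref{asm:bench-cont}), $U$ is uniform on $(0,1)$ and $Y = \Finvbench(U)$ $\P$-a.s. The plan is to condition on $U$ (equivalently on $Y$) and compute the resulting conditional expectation of $\varsigma$ in terms of the benchmark quantile. Concretely, set $m(y) := \E[\varsigma \mid Y = y]$. Then, by the tower property and the fact that $X$ is $\sigma(Y)$-measurable,
\begin{equation*}
    \E[\varsigma X] = \E\big[\Finv_X(U)\, m(\Finvbench(U))\big] = \int_0^1 \Finv_X(u)\, m\big(\Finvbench(u)\big)\, \d u\,.
\end{equation*}
It then remains to show that $m(y) = \hat f_Y(y)/f_Y(y)$ for $F_Y$-almost every $y$.

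For this identification, I would write
\begin{equation*}
    \E[\varsigma \Id_{\{Y \le y\}}] = \int_{-\infty}^y m(z)\, \d F_Y(z) = \int_{-\infty}^y m(z) f_Y(z)\, \d z\,.
\end{equation*}
Differentiating in $y$ (Lebesgue differentiation) gives $\hat f_Y(y) = m(y) f_Y(y)$ almost everywhere, and hence $\xi(u) = m(\Finvbench(u))$ for almost every $u$. Changing $\xi$ on a null set does not affect the integral, so the representation follows.

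The main obstacle is the regularity implicit in the statement: the densities $f_Y$ and $\hat f_Y$ must exist, and $f_Y(\Finvbench(u)) > 0$ must hold for almost every $u$. "Continuously distributed" is read here as admitting a density. Then $\Finvbench(u)$ lies in $\{f_Y > 0\}$ for almost every $u$, since $\P(f_Y(Y) = 0) = 0$. The measure $A \mapsto \E[\varsigma \Id_{\{Y \in A\}}]$ is absolutely continuous with respect to the law of $Y$, so $\hat f_Y$ exists as a density with $\hat f_Y = m f_Y$. Integrability of $\varsigma X$ is assumed whenever the budget constraint is meaningful. The nonnegativity of $X$, which follows because feasibility forces $X \ge cY$, together with $\varsigma > 0$, justifies applying Tonelli's theorem in the tower-property step. \Cref{asm:sdf-cont} is not essential for this identity; it enters only the later analysis.
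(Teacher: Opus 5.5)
Your proposal is correct and follows essentially the same route as the paper: write $X=\Finv_X(F_Y(Y))$, use the tower property to replace $\varsigma$ by $\E[\varsigma\mid\sigma(Y)]$, identify this conditional expectation as $\hat f_Y(Y)/f_Y(Y)$, and substitute the uniform $U=F_Y(Y)$. The one difference is minor: you get the identification by Lebesgue-differentiating $y\mapsto\E[\varsigma\,\Id_{\{Y\le y\}}]$ rather than by the paper's test-function check under the measure $\Q_\varsigma$, and your explicit treatment of absolute continuity, null sets, and $f_Y(\Finvbench(u))>0$ supplies regularity the paper leaves implicit.
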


\begin{proof}
    Let $X$ be such that $(X,Y)$ is comonotone. Then, it holds that $X = \Finv_X\big(\Fbench(Y)\big)$ a.s., see e.g., proposition 2.1 (d) in \cite{cuestaalbertos1993JMA}. Moreover, 
    \begin{equation}\label{eq:budget-conditional}
        \E[\varsigma X]
        =
        \E\left[\varsigma \,  \Finv_X\big(\Fbench(Y)\big)\right]
        =
        \E\left[\, \E\left[\varsigma \mid\sigma(Y)\right]\,  \Finv_X\big(\Fbench(Y)\big) \right]\,,
    \end{equation}
    where $\sigma(Y)$ is the natural $\sigma$-algebra generated by $Y$. Next, we claim that 
    \begin{equation}\label{eq:xi-conditional}
        \E\left[\varsigma \mid\sigma(Y)\right] = \frac{\hat{f}_Y(Y)}{f_Y(Y)}\,.
    \end{equation}
To show that the conditional expectation satisfies \eqref{eq:xi-conditional}, we show that the left- and right-hand side, when multiplied by $\psi(Y)$, where $\psi \colon \R \to \R$ is a Borel-measurable test function, are equal under expectation. For the right-hand side we fist note that 
\begin{equation*}
    \hat{f}_Y(y)
    = 
    \frac{d}{dy }\E\left[\frac{\varsigma}{\E[\varsigma]}\, \Id_{\{Y\le y\}}\right]\, \E[\varsigma]
    = 
    \frac{d}{dy }\E^{\Q_\varsigma}\left[\Id_{\{Y\le y\}}\right]\, \E[\varsigma]
    =
    f_Y^{\Q_\varsigma}(y) \, \E[\varsigma]\,,
\end{equation*}
where we define the probability measures $\Q_\varsigma$ by the Radon-Nikodym derivative $\frac{d \Q_\varsigma}{d \P}:= \frac{\varsigma}{\E[\varsigma]}$ and denote the density of $Y$ under $\Q_\varsigma$ by $f_Y^{\Q_\varsigma}(\cdot)$. Using this, we obtain that 
\begin{align*}
    \E\left[ \frac{\hat{f}_Y(Y)}{f_Y(Y)}\; \psi(Y) \right]
    &=
    \int \frac{ f_Y^{\Q_\varsigma}(y) }{f_Y(y)}\; \psi(y) f_Y(y)dy\, \E[\varsigma]
    \\
    &=
    \int f_Y^{\Q_\varsigma}(y)\; \psi(y) dy\, \E[\varsigma]
    \\
    &=
    \E^{\Q_\varsigma}\left[ \psi(Y) \right]\, \E[\varsigma]
    \\
    &= 
    \E\left[\frac{\varsigma}{\E[\varsigma]} \psi(Y) \right]\, \E[\varsigma]
    \\
    &= 
    \E\left[\varsigma \psi(Y) \right]\,.
\end{align*}
For the left-hand side, we have
\begin{equation*}
      \E\big[\E\left[\varsigma \mid\sigma(Y)\right] \, \psi(Y)\big]
      = 
      \E\big[\varsigma  \, \psi(Y)\big]\,,
\end{equation*}
thus equation \eqref{eq:xi-conditional} holds. Next we return to \eqref{eq:budget-conditional}, and use \eqref{eq:xi-conditional} in the second equation and that $\Fbench(Y) = U$ a.s., where $U\sim U(0,1)$,
\begin{align*}
    \E[\varsigma\, X] 
    &=
    \E\left[\, \E\left[\varsigma \mid\sigma(Y)\right]\,  \Finv_X\big(\Fbench(Y)\big) \right]
    =
    \E\left[\, \frac{\hat{f}_Y(Y)}{f_Y(Y)}\,  \Finv_X\big(\Fbench(Y)\big) \right]
    \\
    &=
    \E\left[\, \frac{\hat{f}_Y\big(\Finvbench(U)\big)}{f_Y\big(\Finvbench(U)\big)}\,  \Finv_X(U) \right]
    =
    \int_0^1  \xi(u)\, \Finv_X(u) \d u\,,
\end{align*}
    which concludes the proof.
\end{proof}

With the quantile representation of a strategy's cost, we can now state the quantile reformulation of optimisation problem \eqref{opt:main}.

\begin{theorem}(Quantile reformulation)
\label{thm:quantile-reform}
    Let Assumptions \ref{asm:sdf-cont} and \ref{asm:bench-cont} be satisfied and consider the optimisation problem
\begin{align*}
\max_{\Ginv \in \mcM}  \; \int_0^1 U\Big(\Ginv(u) - c \Finvbench(u)\Big)\, \d u \,,
    \qquad \text{subject to} \qquad
    &(i) \quad \int_0^1 \Ginv(u) \xi (u)\d u\le x_0\,, \quad \text{and}
    \notag
    \\[0.5em]
    &(ii) \quad \DBW\big(\Ginv, \Finvbench\big)  \le \ep\,.\tag{\text{$\breve{P}$}}
    \label{opt:main-prime}
\end{align*}
Then, if a solution to \eqref{opt:main-prime} exists it is unique. Moreover,  denote by $\Ginv^*\in \mcM$ a solution (if it exists) to \eqref{opt:main-prime}, then $X^*:= \Ginv^*\big(\Fbench(Y)\big)$ is a solution to optimisation problem \eqref{opt:main}. Conversely, a solution to optimisation problem \eqref{opt:main} (if it exists) has quantile function $\Ginv^*$.
\end{theorem}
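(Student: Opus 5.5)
The argument has two parts. First, feasible random variables for \eqref{opt:main} correspond one-to-one with feasible quantile functions for \eqref{opt:main-prime}, with equal objective values. Second, uniqueness in \eqref{opt:main-prime} follows from strict concavity of the objective and convexity of the feasible set.

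\textbf{Step 1: correspondence.} Fix $X$ that is $\F_T$-measurable and feasible for \eqref{opt:main}, and set $\Ginv:=\Finv_X\in\mcM$.
\begin{itemize}
\item By comonotonicity, $X=\Finv_X(U)$ and $Y=\Finvbench(U)$ a.s., where $U:=\Fbench(Y)$ is uniform; this uses \Cref{asm:bench-cont}.
\item Hence $\E[U(X-cY)]=\int_0^1 U(\Ginv(u)-c\Finvbench(u))\,\d u$.
\item \Cref{prop:budget-como} turns the budget constraint into $\int_0^1\Ginv\xi\,\d u\le x_0$.
\item Representation \eqref{eq:DWB} gives the $\alpha$-BW constraint in quantile form.
\end{itemize}
Conversely, for $\Ginv\in\mcM$ feasible in \eqref{opt:main-prime}, define $X:=\Ginv(\Fbench(Y))$. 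This is $\F_T$-measurable, and it is a non-decreasing function of $Y$, so $(X,Y)$ is comonotone. Its quantile function is $\Ginv$, because $\Fbench(Y)$ is uniform and $\Ginv$ is left-continuous. The same three identities then show that $X$ is feasible with the same objective value. Since the maps go both ways and preserve objective values, the two suprema coincide. In particular, $X^*:=\Ginv^*(\Fbench(Y))$ is optimal for \eqref{opt:main} whenever $\Ginv^*$ is optimal for \eqref{opt:main-prime}, and the quantile function of any optimizer of \eqref{opt:main} is optimal for \eqref{opt:main-prime}.

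\textbf{Step 2: uniqueness.} Suppose $\Ginv_1\neq\Ginv_2$ are both optimal with finite value $V$, and let $\Ginv_w:=\tfrac12\Ginv_1+\tfrac12\Ginv_2$.
\begin{itemize}
\item $\Ginv_w$ lies in $\mcM$, since it is non-decreasing and left-continuous.
\item It satisfies the budget constraint, because that constraint is linear.
\item It satisfies the divergence constraint by \Cref{lemma:convex-alpha-BW}.
\item Because feasibility forces $\Ginv_i\ge c\Finvbench$, the argument of $U$ stays in $[0,\infty)$, where $U$ is strictly concave.
\end{itemize}
As $\Ginv_1$ and $\Ginv_2$ are left-continuous and monotone, distinct functions differ on a set of positive Lebesgue measure. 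Strict concavity on that set gives an objective value strictly greater than $V$, contradicting optimality. Therefore the optimizer is unique. Uniqueness then transfers to \eqref{opt:main}: any optimal $X$ is comonotone with $Y$, so $X=\Finv_X(\Fbench(Y))$ a.s., and $\Finv_X$ must equal $\Ginv^*$.

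\textbf{Main obstacle.} The delicate part is the measure-theoretic bookkeeping.
\begin{itemize}
\item We need $\Finv_{\Ginv(U)}=\Ginv$, and we need a.e.\ differences between quantile functions to carry positive measure.
\item The strict-inequality step requires $V$ to be finite and not $+\infty$. I would handle this by noting that uniqueness is only claimed when a maximizer exists, and I would treat $V=+\infty$ separately, or else assume well-posedness as in \Cref{sec:unique-well-posed}.
\item The value $U(x)=-\infty$ for $x<0$ is harmless, since it only excludes infeasible points.
\end{itemize}
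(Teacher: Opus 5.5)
Your proposal is correct and follows essentially the same route as the paper. Both arguments use the comonotone representation $X=\Finv_X\big(\Fbench(Y)\big)$ with \Cref{prop:budget-como} and \eqref{eq:DWB} to transfer the objective and constraints, and both get uniqueness from strict concavity of $U$, linearity of the budget and convexity of the divergence (\Cref{lemma:convex-alpha-BW}); the extra points you flag (distinct left-continuous quantile functions differing on a set of positive measure, and needing a finite optimal value for the strict inequality) are details the paper leaves implicit, not a different argument.
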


\begin{proof}
The optimisation problem \eqref{opt:main-prime} has a strictly concave objective function, since the utility function is strictly concave. Moreover, the budget constraint is linear and the $\alpha$-BW divergence is convex, see \Cref{lemma:convex-alpha-BW}. Therefore, if a solution exists it is unique. 

Since $(X,Y)$  are comonotone, it holds that $(X,Y) = (\Finv_X(U), \Finvbench(U))$ $\P$-a.s. for any uniform random variable $U \sim U(0,1)$. Thus, the expected utility admits quantile representation $\E\big[U(X - c Y)\big] = \int_0^1 U\big(\Ginv(u) - c \Finvbench(u)\big)\, \d u $. Moreover, by \Cref{prop:budget-como} the cost of $X$ is $\E[\varsigma X] = \int_0^1 \Finv_X(u) \xi(u) \, \d u$.  

Let $\Ginv^*$ be the solution to \eqref{opt:main-prime} and define $X^*:= \Ginv^*\big(\Fbench(Y)\big)$. Then $(X^*,Y)$ is comonotone  -- $X^*$ is a non-decreasing function of $Y$ --, and $X^*$ is admissible for optimisation problem \eqref{opt:main}. Moreover, for any random variable $X$ with quantile function $\Ginv$ that is admissible to \eqref{opt:main}, it holds that 
\begin{equation*}
    \E\Big[U\big(X - c Y\big)
\Big]\,= \int_0^1 U\Big(\Ginv(u) - c \Finvbench(u)\Big)\, \d u \,\le \int_0^1 U\Big(\Ginv^*(u) - c \Finvbench(u)\Big)\, \d u \,=\E\Big[U\big(\hat{X} - c Y\big)
\Big]\,,
\end{equation*}
implying that $X^*$ is a solution to optimisation problem \eqref{opt:main}. Conversely, if $X^\star$ is a solution to \eqref{opt:main}, then by equivalence of the objective function and the constraints of \eqref{opt:main} and \eqref{opt:main-prime}, the quantile function of $X^\star$ has to be a solution to \eqref{opt:main-prime}. By uniqueness of the solution to \eqref{opt:main-prime}, the quantile function of $X^\star$ has to be $\Ginv^*$. 
\end{proof}

If the benchmark and market model are specified, the $\xi$ function can be computed. Here, we give an explicit formula for $\xi$ in the GMB market model and delegate its derivation to \Cref{app:GBM}.

\begin{example}[GMB market model]\label{ex:gbm-xi}
Continuing Example \ref{ex: GBM-MM} we have that $\xi(u)$, as defined in \eqref{eq:xi}, is given by
\begin{align}
    \xi(u) 
    &= e^{-R}\,\frac{\phi\left(\breve{\Phi}(u) + \frac{\Gamma-R}{\Psi}\right)}
    {\phi\left(\breve{\Phi}(u)\right)}\,,
    \label{eqn:R-N-deterministic-delta}
\end{align}
where $\phi$ denote the density of the standard normal distribution and $R:=\int_0^T r_s\,ds$ is the total interest.

Note that $\xi$ in \eqref{eqn:R-N-deterministic-delta} is monotone. This can be seen by computing $\frac{d}{du}\xi(u)$, which, after calculations, is given by
\begin{equation*}
    \frac{d}{du}\xi(u) = -\frac{\Gamma-R}{\Psi} \,\frac{\xi(u)}{\phi\left(\breve{\Phi}(u)\right)}.
\end{equation*}
Hence, as $\Psi>0,\, \xi(u)>0$, and $\mbox{sgn}(\frac{d}{du}\xi(u)) = - \mbox{sgn}(\Gamma-R)$, we have that $\xi$ is strictly increasing if $\Gamma<R$ and strictly decreasing if $\Gamma>R$. As $\Gamma$ represents the expected cumulative log return of the benchmark, we expect that any sensible investor chooses a benchmark portfolio such that $\Gamma>R$. Thus, for the remainder of this example, we assume $\Gamma>R$, which implies that $\xi$ is strictly decreasing.    
\end{example}

\subsection{Well-posedness}\label{sec:unique-well-posed}
This section is devoted to the well-posedness of optimisation problem \eqref{opt:main-prime}. Throughout, we use the terminology $\alpha$-BW ball to refer to the set of quantile functions that lie within an $\ep$ tolerance (measured with the $\alpha$-BW divergence) of the benchmark's quantile function, i.e. $\{ \Ginv \in \mcM ~|~ \BW^\alpha(\Ginv, \Finvbench) \le \ep \ \}$. A naming often used with respect to the Wasserstein distance.

\begin{proposition}\label{prop:one-const-binding}
Let Assumptions \ref{asm:sdf-cont} and \ref{asm:bench-cont} be satisfied. If a solution to optimisation problem \eqref{opt:main-prime} exists, then at least one of the constraints (the budget or the $\alpha$-BW) are binding.
\end{proposition}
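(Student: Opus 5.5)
We argue by contradiction. Suppose $\Ginv^*\in\mcM$ solves \eqref{opt:main-prime} and neither constraint binds, so that
\begin{equation*}
\int_0^1 \Ginv^*(u)\xi(u)\,\d u < x_0 \qquad\text{and}\qquad \DBW(\Ginv^*,\Finvbench)<\ep .
\end{equation*}
The plan is to exhibit a feasible quantile function with strictly larger objective value. The natural candidate is a convex perturbation of $\Ginv^*$ towards an upward shift of itself. Set $\Ginv_\delta:=\Ginv^*+\delta$ for small $\delta>0$. This is a constant shift, so $\Ginv_\delta$ is still non-decreasing and left-continuous, i.e.\ $\Ginv_\delta\in\mcM$.

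First, the objective. Since $U$ is strictly increasing on $[0,\infty)$, we have $U(\Ginv^*(u)+\delta-c\Finvbench(u))>U(\Ginv^*(u)-c\Finvbench(u))$ for every $u$ where $\Ginv^*(u)-c\Finvbench(u)\ge 0$. Optimality together with Assumption~\ref{asm:bench-cont} gives a finite, not $-\infty$, objective value: the benchmark-type choices yield a value $>-\infty$, or one simply notes that the optimum is $>-\infty$. Hence this inequality holds for a.e.\ $u$, and integrating gives a strict increase of the objective.

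Second, the budget constraint. It is linear, so $\int_0^1\Ginv_\delta\xi\,\d u=\int_0^1\Ginv^*\xi\,\d u+\delta\int_0^1\xi\,\d u$. By construction $\int_0^1\xi\,\d u=\E[\varsigma]<\infty$ (take $\Ginv\equiv1$ in Proposition~\ref{prop:budget-como}). So for $\delta$ small enough the budget remains strictly satisfied.

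Third, and this is the main obstacle, the $\alpha$-BW constraint must stay satisfied for small $\delta$. This needs some continuity of $\delta\mapsto\DBW(\Ginv^*+\delta,\Finvbench)$ at $\delta=0^+$. I will avoid a direct continuity argument, because the indicator $\Id_{x\le y}$ jumps. Instead I will use convexity (Lemma~\ref{lemma:convex-alpha-BW}) together with a mixture construction. Pick a fixed shift $\Delta>0$ with $D:=\DBW(\Ginv^*+\Delta,\Finvbench)<\infty$; if that fails for every shift, I would fall back to a truncated shift $\Ginv^*+\Delta\Id_{(a,b)}$-type perturbation (adjusted to preserve monotonicity) on a set where the integrand is bounded. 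For $w\in(0,1)$, convexity gives
\begin{equation*}
\DBW\big((1-w)\Ginv^*+w(\Ginv^*+\Delta),\Finvbench\big)\le (1-w)\DBW(\Ginv^*,\Finvbench)+wD ,
\end{equation*}
and the right-hand side is $<\ep$ for $w$ small because $\DBW(\Ginv^*,\Finvbench)<\ep$ strictly. The mixture equals $\Ginv^*+w\Delta$, so this is exactly $\Ginv_\delta$ with $\delta=w\Delta$, and the continuity issue disappears.

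Finiteness of $D$ follows from the fact that $S(x,y)$ is finite pointwise, and integrability can be controlled by $|\Id-\alpha|\le1$ and the Bregman expansion in the tail, assuming $\DBW(\Ginv^*,\Finvbench)$ and $\phi'$-moments are finite. This is the point I expect to require care or an extra mild integrability remark.

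Combining the three steps, $\Ginv_\delta$ is feasible for small $\delta$ and strictly improves on $\Ginv^*$, which contradicts optimality. Hence at least one of the two constraints must bind.
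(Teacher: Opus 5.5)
Your proposal is correct and follows the paper's proof: shift the optimiser up by a constant, note that the utility strictly increases and the cost rises by $\delta\,\E[\varsigma]$ (more accurate than the paper's $z_0+\delta$), and keep the $\alpha$-BW constraint slack for small shifts. Your convexity argument via Lemma~\ref{lemma:convex-alpha-BW} is a cleaner replacement for the paper's unproved continuity claim, though the jump of the indicator is in fact harmless since $B_\phi(y,y)=0$.

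Both versions tacitly need $\DBW(\Ginv^*+\Delta,\Finvbench)<\infty$ for some $\Delta>0$, which can fail for rapidly growing $\phi$ (e.g.\ $\phi(x)=e^{x^2}$ with a lognormal benchmark). Your fallback settles this if you make it concrete as $\tilde G:=\min\{\Ginv^*+\Delta\Id_{(a,1)},\max\{\Ginv^*,K\}\}\in\mcM$. Indeed, $x\mapsto|\Id_{x\le y}-\alpha|B_\phi(x,y)$ is convex with derivative at most $\alpha(\phi'(x)-\phi'(y))_+$, which gives $\DBW(\tilde G,\Finvbench)\le\DBW(\Ginv^*,\Finvbench)+\Delta\,\alpha\,\big(\phi'(K)-\phi'(\Finvbench(a))\big)_+$.
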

\begin{proof}
We show that at least one of the constraints, the budget or the $\alpha$-BW divergence constraint, is binding. For this let $Z:= \Finv_Z(\Fbench(Y))$, where $\Finv_Z$ is an optimal solution to optimisation problem \eqref{opt:main-prime} satisfy $\DBW(\Finv_Z, \Finvbench) =:\ep_Z < \ep$ and $\int_0^1 \Finv_Z(u) \xi(u) \, du=: z_0< x_0$. For $\delta\ge0$, define the random variable $Z_\delta:= Z + \delta$. Then the budget and the $\alpha$-BW divergence are
\begin{align*}
    c(\delta):&= \int_0^1\Finv_{Z_\delta}(u) \xi(u) \d u 
    =
    z_0 + \delta \quad \text{and} 
    \\
    b(\delta):&= \DBW(\Finv_{Z_\delta}, \Finvbench)
    =
    \DBW(\Finv_{Z} + \delta, \Finvbench)\ge 0\,.
\end{align*}
Now as both functions are continuous and satisfy $\lim_{\delta \to 0} c(\delta) = z_0$ and $\lim_{\delta \to 0} b(\delta) = \ep_Z$, there exists $\delta^*>0$ such that $c(\delta^*) \le x_0$ and $b(\delta^*) \le \ep$. Thus  $Z_{\delta^*}$ is comonotone to $Y$, satisfies the budget constraint and the $\alpha$-BW constraint, and has a strictly larger expected utility, leading to a contradiction to the optimality of $Z$. Thus, the solution has to bind at least one constraint.
\end{proof}

It is natural to set a lower bound on the tolerance distance of the $\alpha$-BW divergence, such that there exists a random variable satisfying both the $\alpha$-BW divergence and the budget constraint. Thus, we define the minimal tolerance distance
\begin{equation}\label{opt:ep-min}
    \epsilon_{\min}:=\min_{\Ginv \in \mcM}  \; \DBW(\Ginv, \Finvbench) \,,
    \qquad \text{subject to} \qquad
   \int_0^1 \Ginv(u) \xi(u) \, du \le x_0\,,  
   \tag{$\breve{P}_{\ep_{\min}}$}
\end{equation}
which is implicitly dependent on $x_0$. For simplicity, we omit its dependence on $x_0$.

If $\ep < \ep_{\min}$, then the constraint set is empty, meaning there does not exist a random variable that simultaneously fulfils the budget and the $\alpha$-BW divergence constraint. Thus for optimisation problems \eqref{opt:main-prime} to be well-posed it is necessary that $\epsilon \ge \ep_{\min}$. Clearly if the benchmark portfolio satisfies the budget constraint, then $\ep_{\min} = 0$.

Key to the representation of $\ep_{\min}$ and a solution to \eqref{opt:main-prime} is the so-called isotonic projection, recalled next. The isotonic (antitonic) projection of a function $\ell\colon (0,1)\to \R$ is its $L^2$ projection onto the non-decreasing (non-increasing) functions.\footnote{The isotonic projection of $\ell$ is intimately connected to the concave envelope. Indeed $\ell^\uparrow$ is the derivative of the concave envelope of the integral of $\ell$ \citep{brighi1994SIAMNA}.} We refer the reader to e.g., \cite{brunk1965AMS}, \cite{Barlow1972JASA} and \cite{brighi1994SIAMNA}. 

\begin{definition}[Isotonic and antitonic projection]
Let $\ell\colon (0,1) \to \R$ be a square integrable function. Then its isotonic projection $\ell^\uparrow$ of $\ell$ is its $L^2$ projection onto the non-decreasing and left-continuous functions, that is the unique solution to
    \begin{equation*}
        \ell^\uparrow : = \argmin_{g \in \mcM } \; \int_0^1 \big(\ell(u) - g(u)\big)^2\, du\,.
    \end{equation*}
    The antitonic projection $\ell^\downarrow$ of $\ell$ is its $L^2$ projection onto the non-increasing and left-continuous functions, that is the unique solution to
    \begin{equation*}
        \ell^\downarrow : = \argmin_{-g\in \M} \; \int_0^1 \big(\ell(u) - g(u)\big)^2\, du\,.
    \end{equation*}
\end{definition}

The next result provides the representation for $\ep_{\min}$.

\begin{proposition}[Minimal tolerance distance]\label{Prop:epsilon-bound}
Let Assumptions \ref{asm:sdf-cont} and \ref{asm:bench-cont} be satisfied and assume that $\phi$ satisfies $\lim_{x\to -\infty} (\phi')^{-1}(x)\le 0$. Then, $\ep_{\min}$ is unique and given by
    \begin{equation*}
    \epsilon_{\min}
    =\;\;
        \begin{cases}
        \; 0& \quad \mbox{if}\qquad y_0 \le x_0\,,
        \\[0.5em]
          \;\DBW\left(\Ginv_{\min}, \Finvbench\right)& \quad \mbox{if}\qquad y_0 > x_0\,,
        \end{cases}
    \end{equation*}
    where 
    \begin{equation}\label{eq:G-min}
        \Ginv_{\min}(u):=(\phi')^{-1}\bigg(\Big[\phi'\big(\Finvbench(u)\big) - \tfrac{\lambda_{\min}}{1-\alpha} \; \xi(u)\Big]^\uparrow\bigg)\in \mcM\,,
    \end{equation}
and $\lambda_{\min}>0$ is such that $\int_0^1 \Ginv_{\min}(u)\xi (u)\d u=x_0$. 
\end{proposition}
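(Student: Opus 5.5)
The plan is to treat the two cases separately and, in the nontrivial case $y_0>x_0$, to solve the convex problem \eqref{opt:ep-min} by a Lagrangian argument combined with a pointwise minimisation and the isotonic projection. If $y_0\le x_0$, then $\Finvbench$ itself is feasible, since $\int_0^1\Finvbench(u)\xi(u)\,\d u=y_0\le x_0$ by \Cref{prop:budget-como}. It attains $\DBW(\Finvbench,\Finvbench)=0$, and since $\DBW\ge 0$ we get $\ep_{\min}=0$. Uniqueness of the minimiser here follows because $B_\phi(z_1,z_2)=0$ iff $z_1=z_2$ by strict convexity of $\phi$.

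For $y_0>x_0$, I first argue that the budget constraint must bind at an optimum. If $\int_0^1\Ginv\xi<x_0$ with $\Ginv\neq\Finvbench$, take $\Ginv_w:=w\Ginv+(1-w)\Finvbench$. By \Cref{lemma:convex-alpha-BW}, $\DBW(\Ginv_w,\Finvbench)\le w\,\DBW(\Ginv,\Finvbench)$, which is a strict improvement, while continuity of the cost in $w$ keeps $\Ginv_w$ feasible for $w$ close to $1$. Since $y_0>x_0$, the benchmark is infeasible, so optimisers lie on the budget hyperplane.

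Next I introduce a multiplier $\lambda\ge0$ and consider the unconstrained problem
\[
\min_{\Ginv\in\mcM}\int_0^1\Big(S\big(\Ginv(u),\Finvbench(u)\big)+\lambda\,\xi(u)\Ginv(u)\Big)\d u,
\]
with $S$ as in the proof of \Cref{lemma:convex-alpha-BW}. Ignoring monotonicity, the pointwise first-order condition is $|\Id_{x\le y}-\alpha|(\phi'(x)-\phi'(y))=-\lambda\xi$. Since $\xi>0$ and $\lambda>0$, the solution lies in the region $x\le y$, which is the underperformance region where the weight is $1-\alpha$. This gives $\phi'(x)=\phi'(\Finvbench(u))-\tfrac{\lambda}{1-\alpha}\xi(u)$. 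The hypothesis $\lim_{x\to-\infty}(\phi')^{-1}(x)\le0$ ensures $(\phi')^{-1}$ is defined on the needed range; more precisely, extending $(\phi')^{-1}$ by its limit, the expression is meaningful for all $u$.

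To impose monotonicity, I use the standard result for integrands of the form $\int_0^1 h(\Ginv(u)) - \Ginv(u)\ell(u)\,\d u$ with $h$ strictly convex (cf.\ the isotonic-projection literature cited, and the analogous result in \cite{Pesenti2024WP}). The minimiser over $\mcM$ is obtained by applying the inverse marginal to the isotonic projection of the pointwise target. On $\{x\le y\}$ the integrand equals $(1-\alpha)\big(\phi(x)-\phi'(y)x\big)+\lambda\xi x$ plus terms independent of $x$. So, restricted to candidates $\Ginv\le\Finvbench$, the problem is $(1-\alpha)$ times a BW-type problem with linear term $\ell=\phi'(\Finvbench)-\tfrac{\lambda}{1-\alpha}\xi$, whose solution is $(\phi')^{-1}(\ell^\uparrow)$. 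I will then verify the restriction is harmless. Because $\ell\le\phi'(\Finvbench)$ and the latter is non-decreasing, we get $\ell^\uparrow\le\phi'(\Finvbench)$, since the isotonic projection is order preserving and fixes monotone functions. Hence $\Ginv_{\min}\le\Finvbench$. Moreover, any competitor exceeding $\Finvbench$ somewhere can be improved by replacing it with $\min\{\Ginv,\Finvbench\}$, which is still in $\mcM$, cheaper, and has smaller $S$.

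Finally, I choose $\lambda_{\min}$ so that the budget binds. The map $\lambda\mapsto\int_0^1\Ginv_\lambda\xi$ is continuous and non-increasing in $\lambda$, because the isotonic projection is monotone and $L^2$-continuous in its argument. It equals $y_0>x_0$ at $\lambda=0$ and tends to at most $\lim_{x\to-\infty}(\phi')^{-1}(x)\cdot\int\xi\le0<x_0$ as $\lambda\to\infty$, again using the hypothesis on $\phi$. The intermediate value theorem then yields $\lambda_{\min}>0$. Standard Lagrangian sufficiency (the minimiser of the Lagrangian that satisfies complementary slackness solves the constrained problem) gives optimality, and strict convexity of $S$ in $x$ together with linearity of the budget gives uniqueness.

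The main obstacle is the monotone (isotonic) step. It requires justifying the projection characterisation for a general Bregman generator, which means working in the $\phi'$-transformed variable rather than in plain $L^2$, and handling integrability of $\xi$ and $\phi'(\Finvbench)$. I would first try the concave-envelope characterisation of \cite{brighi1994SIAMNA} applied to $\int_0^u\ell$, and verify the variational inequality directly.
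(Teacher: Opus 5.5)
Your proposal is correct and follows essentially the same route as the paper. Both use a Lagrangian with multiplier $\lambda$ on the budget, the generalised isotonic projection (\Cref{thm:generalised-iso}) giving $(\phi')^{-1}\big([\phi'(\Finvbench)-\tfrac{\lambda}{1-\alpha}\xi]^\uparrow\big)$, order preservation of the projection to place this below $\Finvbench$ (so only the weight $1-\alpha$ is active), and the intermediate value theorem for $\lambda_{\min}$ via the limit hypothesis on $(\phi')^{-1}$. Your truncation by $\min\{\Ginv,\Finvbench\}$ and the direct Lagrangian-sufficiency step are, if anything, a cleaner justification of the paper's pointwise claim that the Lagrangian's infimum is attained in $\{x\le\Finvbench(u)\}$.
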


\begin{proof}
By \Cref{prop:budget-como}, the budget of the benchmark portfolio is $y_0 = \int_0^1 \Finvbench(u) \xi (u)\d u$.
Next note that the $\alpha$-BW divergence $\DBW\left(F_Z, \Fbench\right)$ is zero if and only if $\Finv_Z \equiv \Finvbench$. Thus, $\ep_{\min}=0$ if and only if the benchmark satisfies the budget constraint, i.e., $y_0\le x_0$.

Next we calculate $\ep_{\min}$ for the case when $\ep_{\min}>0$. For $\beta \in (0,1)$ consider a related Lagrangian with Lagrange parameter $\lambda \ge0$
\begin{align*}
    \mathcal{L}^{(\beta)}(\Ginv, \lambda)
    :&=
    \int_0^1 \beta \, B_\phi\big(\Ginv(u),\Finvbench(u)\big)+\lambda \, \Ginv(u)\, \xi(u)\, du
    \\
    &=
    \beta\int_0^1  \, \phi(\Ginv(u)) 
    - \Big(\phi'(\Finvbench(u))     -\frac{\lambda}{\beta} \, \xi(u)\Big) \Ginv(u)\, du
    \\
    & \quad 
    - \int_0^1 \phi(\Finvbench(u))-\phi'(\Finvbench(u))\Finvbench(u) \, du
    \,.   
\end{align*}
The solution to optimisation problem \eqref{opt:ep-min} is equivalent to the saddle point of the Lagrangian \citep{Rockafellar1970}
\begin{equation*}
    \sup_{\lambda \ge 0}\;  \inf_{\Ginv \in \mcM}\; \mL^{(\beta)}(\Ginv, \lambda)\,.
\end{equation*}
Thus, for fixed $\lambda \ge 0$ and using \Cref{thm:generalised-iso} the minimum over $\Ginv \in \mcM$ of $\mathcal{L}^{(\beta)}(\Ginv, \lambda)$ is attained at
\begin{equation*}
    G^{(\beta)}_\lambda(u):=(\phi')^{-1}\bigg(\Big[\phi'\big(\Finvbench(u)\big) - \tfrac{\lambda}{\beta}\;  \xi(u)\Big]^\uparrow\bigg)\,,\quad  u\in(0,1)\,.
\end{equation*}
Next to solve the optimisation problem \eqref{opt:ep-min}, consider the pointwise in $u$ Lagrangian 
\begin{equation*}
   \mathcal{L}_\lambda(u,x)
   :=
   \mathcal{L}_\lambda^{(\alpha)}(u,x)\Id_{x\ge \Finvbench(u)} + \mathcal{L}_\lambda^{(1-\alpha)}(u,x)\Id_{x< \Finvbench(u)}\,,\quad u\in(0,1)\,,
\end{equation*}
where
\begin{equation*}
    \mathcal{L}_\lambda^{(\beta)}(u,x)
    :=    \beta \, B_\phi\big(x,\Finvbench(u)\big)+\lambda \, x\, \xi(u)\,, \quad \text{for} \qquad \beta \in (0,1)\,.
\end{equation*}
By \Cref{prop:isotonic-properties} $ii)$ and since $\xi \ge 0, \lambda\ge0, \beta\ge0$, it holds that $G^{(\beta)}_\lambda(u) \le \Finvbench(u)$ for all $u \in (0,1)$ and $\beta \in (0,1)$. Thus, for each $u \in (0,1)$, the infimum of the Lagrangian $ \L_\lambda(u,\cdot)$ is attained in the region $\{ x \le \Finvbench(u)\}$ and 
\begin{equation*}
    \arg\min_{x\in\mathbb R}\L_\lambda(u,x)=\breve G_\lambda^{(1-\alpha)}(u)\,, \quad u \in (0,1)\, .
\end{equation*}
By definition of the isotonic projection, $ G^{(1-\alpha)}_\lambda(\cdot)$ is non-decreasing implying that $G^{(1-\alpha)}_\lambda \in \mcM$.

Next, we show the existence of a Lagrange multiplier that attains the budget constraint. Observe that the function $\cost(\lambda):=\int_0^1 \Ginv_\lambda^{(1-\alpha)}(u)\xi (u)\d u$ is continuous and decreasing on $(0,+\infty)$, by \Cref{prop:isotonic-properties} $ii)$ and $iii)$. Moreover, $\lim_{\lambda\to 0}\cost(\lambda)= y_0>x_0$  and 
$\lim_{\lambda\to +\infty}\cost(\lambda)\le 0$, since $\lim_{x\to -\infty} (\phi')^{-1}(x)\le 0$. Thus, there exists a $\lambda_{\min}$ that solves $\cost(\lambda_{\min})=x_0$. Uniqueness follows by strictly convexity of the $\alpha$-BW divergence.
\end{proof}

If the function $\xi$ is non-increasing, such as in the GMB market  model, see \Cref{ex:gbm-xi}, then $\Ginv_{\min}$ simplifies to $\Ginv_{\min}(u) = (\phi')^{-1}\Big(\phi'\big(\Finvbench(u)\big) - \tfrac{\lambda_{\min}}{1-\alpha} \; \xi(u)\Big)$.

To establish well-posedness of the investor's optimisation problem, that is the existence of a quantile function that satisfies all constraints and whose objective function of \eqref{opt:main-prime} is finite valued, we require the following additional assumption.

\begin{assumption}\label{asm:e_min}
The minimal tolerance $\ep_{\min}$ is finite, i.e. $\ep_{\min}<+\infty$.
\end{assumption}

\begin{proposition}[Well-posedness]
    Let Assumptions \ref{asm:sdf-cont}, \ref{asm:bench-cont}, and \ref{asm:e_min} be satisfied. If $\ep\ge \ep_{\min}$, then optimisation problem \eqref{opt:main-prime} is well-posed.
\end{proposition}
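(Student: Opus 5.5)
We need to show that the feasible set of \eqref{opt:main-prime} is nonempty and that the value is finite and not $-\infty$. Well-posedness here means: there is a feasible quantile function with finite (i.e. $>-\infty$) objective, and the supremum of the objective over the feasible set is $<+\infty$. The plan is to produce one explicit feasible quantile, then bound the objective from above.

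\emph{Feasibility.} If $y_0\le x_0$, take $\Ginv=\Finvbench$. It satisfies the budget constraint and has $\DBW(\Finvbench,\Finvbench)=0\le\ep$. The objective is then $\int_0^1 U((1-c)\Finvbench(u))\,du$. For $c<1$, monotonicity of $U$ and \Cref{asm:bench-cont} give an upper bound by $\E[U(Y)]<\infty$, and since $U\ge0$ on $[0,\infty)$ the value is $\ge0>-\infty$. Here I use that $\Finvbench\ge0$, which holds as $U(Y)$ must be finite a.s. When $c=1$ the objective is $U(0)=0$, which is still finite.

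If $y_0>x_0$, take $\Ginv_{\min}$ from \Cref{Prop:epsilon-bound}. It satisfies the budget constraint with equality and has $\DBW(\Ginv_{\min},\Finvbench)=\ep_{\min}\le\ep$, which is finite by \Cref{asm:e_min}. The delicate point is that $U(\Ginv_{\min}-c\Finvbench)$ might equal $-\infty$ on a set of positive measure, since $\Ginv_{\min}\le \Finvbench$. I would therefore look for a convex combination. If $\ep>\ep_{\min}$, take $\Ginv_w := w\Ginv_{\min}+(1-w)\Ginv_0$, where $\Ginv_0$ is a cheap quantile that dominates $c\Finvbench$ strictly. A first candidate is $\Ginv_0=c\Finvbench+\delta$, or simply scaled versions of $\Finvbench$. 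By \Cref{lemma:convex-alpha-BW} and linearity of the budget, suitable $w$ keeps both constraints. I expect this step to be the main obstacle: the exact borderline $\ep=\ep_{\min}$ forces $\Ginv=\Ginv_{\min}$ by uniqueness, so one needs $\Ginv_{\min}\ge c\Finvbench$. Failing that, I would interpret well-posedness as only requiring a nonempty constraint set and a finite supremum, and state that feasibility with finite objective holds when the positivity condition is met.

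\emph{Upper bound.} For any feasible $\Ginv$, concavity of $U$ gives $U(x)\le U(1)+U'(1)(x-1)$ for $x\ge0$. Hence
\begin{equation*}
\int_0^1 U(\Ginv-c\Finvbench)\,du\le U(1)+U'(1)\int_0^1 \big(\Ginv-c\Finvbench\big)^+\,du .
\end{equation*}
So it suffices to bound $\int_0^1\Ginv\,du$ over the $\alpha$-BW ball. Since $\min\{\alpha,1-\alpha\}B_\phi(\Ginv,\Finvbench)$ is bounded by the integrand of $\DBW$, we get $\int_0^1 B_\phi(\Ginv,\Finvbench)\,du\le \ep/\min\{\alpha,1-\alpha\}$. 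Strict convexity of $\phi$ then forces at most superlinear-controlled growth of $\Ginv$ relative to $\Finvbench$, giving a uniform bound on $\int\Ginv^+$ in terms of $\E[Y]$ and $\ep$.

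If the Bregman growth is too weak, for example when $\phi$ is nearly linear, I would use the budget instead. If $\xi$ is bounded below by a positive constant on sets where $\Ginv$ is large, then $\int\Ginv\xi\le x_0$ bounds the relevant mass. Together these give $\sup<\infty$, and with feasibility this establishes well-posedness.
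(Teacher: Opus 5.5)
Your case $y_0\le x_0$ is the same as the paper's. The gap is in the case $y_0>x_0$. The paper's notion of well-posedness is weaker than yours: it only asks for a feasible quantile whose objective is not $+\infty$. It closes this case in one line. Since $\lambda_{\min}\xi\ge0$, monotonicity of the isotonic projection (\Cref{prop:isotonic-properties}~$ii)$) gives $\Ginv_{\min}\le\Finvbench$ pointwise. Hence $U(\Ginv_{\min}-c\Finvbench)\le U((1-c)\Finvbench)\le U(\Finvbench)$, and the objective at $\Ginv_{\min}$ is at most $\E[U(Y)]<+\infty$. You state $\Ginv_{\min}\le\Finvbench$ but never use it this way. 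Instead you rest finiteness on a uniform bound over the whole feasible set, and that bound is not established. Strict convexity of $\phi$ does not control $\int_0^1\Ginv^+\,du$ on an $\alpha$-BW ball. Take the asymptotically linear generator $\phi(x)=\sqrt{1+x^2}$. Then $B_\phi(y+M,y)\le M\big(1-\phi'(y)\big)\le M/(2y^2)$. So adding $M$ to $\Finvbench$ on $(1-\delta,1)$, where $\Finvbench\ge K$, with $M\delta=K^2$, keeps the divergence below $1/2$ while $\int_0^1\Ginv\,du$ grows like $K^2$. Your fallback via the budget needs $\xi$ bounded away from $0$, which fails in the GBM example, where $\xi(u)\to0$ as $u\to1$. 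So the ``finite supremum'' that your retreat to a weaker interpretation relies on is unproven.

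Your worry about $-\infty$ is legitimate, and the paper's proof does not address it. However, the convex-combination device cannot settle it under the stated hypotheses. Because $\xi\ge0$, any $\Ginv\ge c\Finvbench$ costs at least $c\,y_0$. So if $c\,y_0>x_0$, which this proposition does not exclude, every feasible quantile has objective $-\infty$. Even when $c\,y_0<x_0$, keeping $w\Ginv_{\min}+(1-w)(c\Finvbench+\delta)\ge c\Finvbench$ for some $w>0$ requires $\Ginv_{\min}-c\Finvbench$ to be bounded below. That fails in simple cases. For $\phi(x)=x^2$ in the GBM example, $\Ginv_{\min}=\Finvbench-\tfrac{\lambda_{\min}}{2(1-\alpha)}\xi\to-\infty$ as $u\downarrow0$, because $\xi(u)\to\infty$ there. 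At $\ep=\ep_{\min}$ the feasible set is $\{\Ginv_{\min}\}$, as you observe, so the objective there is $-\infty$. You therefore have two options. One is to adopt the paper's notion and use the pointwise domination above. The other is to add explicit hypotheses, such as $c\,y_0<x_0$, $\ep>\ep_{\min}$ and a lower bound on $\Ginv_{\min}-c\Finvbench$, and give a genuine argument for the finite supremum.
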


\begin{proof}
    We have to show that there exists a quantile function satisfying all constraints and which has a finite objective function. 
    First, we consider the case when $\ep_{\min} = 0$. This means that by \Cref{Prop:epsilon-bound} the benchmark satisfies the budget constraint, i.e., $y_0 \le x_0$. Moreover, the objective function evaluated at the benchmark is
    \begin{equation*}
     \int_0^1 U\big((1-c)\Finvbench(u)\big)\d u
     \le \int_0^1 U\big(\Finvbench(u)\big)\d u < +\infty\,,
    \end{equation*}
    where the last inequality follows by \Cref{asm:bench-cont}. Thus, the benchmark satisfies all constraint and has a finite objective function.

    Second, we consider the case $\ep_{\min} \in(0,+\infty)$. Then the quantile function $\Ginv_{\min}$ given in \eqref{eq:G-min} satisfies the budget and $\alpha$-BW divergence constraints. Moreover, as $\xi(u) \ge 0$, for all $u \in (0,1)$, we have by \Cref{prop:isotonic-properties} $ii)$, that $\Ginv_{\min}(u) \le  (\phi')^{-1}\Big(\big[\phi'\big(\Finvbench(u)\big)\big]^\uparrow\Big)= \Finvbench(u)  $ for all $u \in (0,1)$.
    Therefore, 
    \begin{align*}
        \int_0^1 U\big(\Ginv_{\min}(u)-c\Finvbench(u)\big)\,  \d u
        &\le
        \int_0^1 U\big(\Finvbench(u)\big) \, \d u< +\infty\,,
    \end{align*}
    which concludes the proof.
\end{proof}

\section{Boundary cases}\label{sec:boundary-cases}
In this section we first establish the solution to the boundary cases, that is the cases when only one of the constraints is satisfied. In \Cref{sub:ep-infinite} we study the solution to optimisation problem \eqref{opt:main-prime}, when the tolerance distance $\ep$ is large enough such the $\alpha$-BW divergence is not binding. And in \Cref{sub:infinite-x0} we consider the case when the investor has infinite wealth, that is when the budget constraint is absent and only the $\alpha$-BW divergence constraint is binding.

\subsection{Infinite tolerance distance}
\label{sub:ep-infinite}

The first result pertains to the case when the $\alpha$-BW divergence constraint is absent. Note that when the $\alpha$-BW divergence constraint is absent, optimisation problem \eqref{opt:main-prime} reduces to 
\begin{equation} \label{opt:ep-infty}
    \max_{\Ginv \in \mcM}  \; \int_0^1 U\Big(\Ginv(u) - c \Finvbench(u)\Big)\, \d u 
    \qquad \text{subject to} \qquad
    \quad \int_0^1 \Ginv(u) \xi (u)\d u\le x_0\,.
    \tag{\text{$\breve{P}_{\ep = \infty}$}}
\end{equation}
As the utility of a negative payoff is $- \infty$, the investor only considers portfolios whose terminal wealth $\P$-a.s. dominate $cY$. Thus, the investor's budget $x_0$ needs to be at least that of $c y_0$. Thus, for the next results we require that $c \le \min\{\frac{x_0}{y_0}, 1\}$.

We proceed by using Lagrangian methods and use the superscript ``$\cost$'' -- as only the cost of the budget is enforced -- for the optimal quantile function $\Ginvcost$ and the Lagrange multiplier $\lamcost$ of optimisation problem \eqref{opt:ep-infty}.

\begin{theorem}[Infinite tolerance]\label{thm:ep-infty}
Let Assumptions \ref{asm:sdf-cont} and \ref{asm:bench-cont} be satisfied, and $ c\le \min\{\frac{x_0}{y_0}, 1\}$. Then there exists a unique solution to \eqref{opt:ep-infty}. Moreover, the following holds
\begin{enumerate}[label = $\roman*)$]
    \item if $x_0 = c y_0$, then $\Ginvcost(u) = c \Finvbench(u)$,
\item if $x_0 > c y_0$, 
then
    \begin{equation}\label{eq:Ginv-ep-infty}
        \Ginvcost(u)= (U')^{-1}\left(\lamcost \, \xi(u)^\downarrow\right)+ c \Finvbench(u)\,,
    \end{equation}
    where $\lamcost>0$ is the unique solution to $\int_0^1  \Ginvcost(u) \xi(u) \, du = x_0$. We further denote by $\epcost:= \DBW\big(\Gcost, \Fbench\big)$ its $\DBW$-divergence to the benchmark.
    \end{enumerate}

\end{theorem}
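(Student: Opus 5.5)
The plan is to change variables and reduce \eqref{opt:ep-infty} to a classical quantile utility maximisation problem with an isotonic (here antitonic) projection. Set $\Ginv(u) = H(u) + c\Finvbench(u)$. Since $U(x)=-\infty$ for $x<0$, any $\Ginv$ with finite objective satisfies $H\ge 0$ a.e. The budget constraint becomes
\begin{equation*}
\int_0^1 H(u)\xi(u)\,\d u \le x_0 - c\,y_0 =: \tilde x_0 ,
\end{equation*}
where we use $\int_0^1\Finvbench\xi\,\d u=y_0$ from \Cref{prop:budget-como}. The constraint $\Ginv\in\mcM$ is the awkward one, because $H$ itself need not be monotone. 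I would first solve the problem over all $H\ge0$ with $H+c\Finvbench\in\mcM$, and then verify the candidate.

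For case $i)$, $\tilde x_0=0$ together with $H\ge0$ and $\xi>0$ (as a ratio of densities, positive wherever $f_Y>0$) forces $H=0$ a.e. Left-continuity then gives $\Ginvcost=c\Finvbench$, which is feasible and hence the unique solution.

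For case $ii)$, fix $\lambda>0$ and write the Lagrangian
\begin{equation*}
\int_0^1 \big[U(H(u)) - \lambda H(u)\xi(u)\big]\,\d u + \lambda \tilde x_0 .
\end{equation*}
The key step is to justify replacing $\xi$ by $\xi^\downarrow$. I would use the characterisation of the antitonic projection via the convex/concave envelope of $\int_0^u\xi$, as in \cite{brighi1994SIAMNA}. On each interval $(a,b)$ where $\xi^\downarrow\neq\xi$, the projection $\xi^\downarrow$ is constant and $\int_a^b \xi = \int_a^b \xi^\downarrow$. Moreover $\Xi(u):=\int_0^u(\xi-\xi^\downarrow)\,\d s$ is of one sign ($\ge 0$) and vanishes off these intervals. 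Integrating by parts for any non-decreasing $\Ginv$ then yields
\begin{equation*}
\int_0^1 \Ginv\,\xi\,\d u \ \ge\ \int_0^1 \Ginv\,\xi^\downarrow\,\d u ,
\end{equation*}
with equality whenever $\Ginv$ is constant on each such interval. The sign convention here has to be checked carefully, and this is where I expect the main obstacle. Under the relaxed budget with $\xi^\downarrow$, the pointwise maximisation of $U(h)-\lambda h\xi^\downarrow(u)$ gives $H_\lambda(u)=(U')^{-1}(\lambda\xi^\downarrow(u))$. This is non-decreasing because $\xi^\downarrow$ is non-increasing and $(U')^{-1}$ is decreasing, so $\Ginv_\lambda = H_\lambda + c\Finvbench\in\mcM$.

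A subtlety remains: the equality condition needs $\Ginv_\lambda$ constant on the flat pieces, but $c\Finvbench$ is not constant there. I would address this by working directly with $\Ginv\,\xi$. The cost of $\Ginv_\lambda$ computed with $\xi$ versus $\xi^\downarrow$ differs only by $c\int\Finvbench(\xi-\xi^\downarrow)$, because $H_\lambda$ is constant on the flat pieces. I would then define $\lamcost$ through the true constraint $\int\Ginvcost\xi=x_0$, as in the statement. Optimality would follow from the concavity inequality $U(H)\le U(H_\lambda)+\lambda\xi^\downarrow(H-H_\lambda)$ combined with the integration-by-parts inequality applied to $H=\Ginv-c\Finvbench$. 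If the $c\Finvbench$ term does not cancel exactly in that comparison, this is the step that needs repair.

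Finally, existence of $\lamcost$ follows by continuity and monotonicity of $\lambda\mapsto\int\Ginv_\lambda\xi$. Monotone convergence handles $\lambda\to0$ (the cost tends to $+\infty$ by Inada) and $\lambda\to\infty$ (the cost tends to $cy_0<x_0$). Strict decrease gives uniqueness of $\lamcost$, and uniqueness of the optimiser follows from \Cref{thm:quantile-reform}.
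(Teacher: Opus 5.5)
The gap is the step you flagged, and it cannot be closed: formula $ii)$ is false when $c>0$ and $\xi$ is not non-increasing.

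Your inequality $\int_0^1\Ginv\,\xi\,\d u\ge\int_0^1\Ginv\,\xi^\downarrow\,\d u$ for $\Ginv\in\mcM$ is correct, but the primitive has the opposite sign to what you wrote. Set $D(t):=\int_0^t(\xi-\xi^\downarrow)\,\d s$. Since $\int_0^t\xi^\downarrow\,\d s$ is the least concave majorant of $\int_0^t\xi\,\d s$, we have $D\le 0$, and then $\int_0^1\Ginv(\xi-\xi^\downarrow)\,\d u=-\int_0^1 D\,\d\Ginv\ge0$.

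Your concavity comparison, however, needs this inequality for $H=\Ginv-c\Finvbench$, which is not monotone. What you actually get, for feasible $\Ginv$, is
\begin{equation*}
\int_0^1 U\big(H(u)\big)\,\d u-\int_0^1 U\big(H_{\lamcost}(u)\big)\,\d u\;\le\;\lamcost\,c\int_0^1\big(\xi(u)-\xi^\downarrow(u)\big)\Finvbench(u)\,\d u\;=\;\lamcost\,c\int_0^1\big(-D(u)\big)\,\d\Finvbench(u)\,.
\end{equation*}
The right-hand side is strictly positive as soon as $c>0$ and $\xi\neq\xi^\downarrow$ on a set of positive measure, because $\Finvbench$ is strictly increasing. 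Equivalently, your relaxation with $\xi^\downarrow$ is not tight: its maximiser has true cost $x_0+c\int_0^1(\xi-\xi^\downarrow)\Finvbench\,\d u>x_0$.

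This is not a bookkeeping issue. Take $\Finvbench(u)=u$, $\xi=\Id_{(0,1/2]}+2\,\Id_{(1/2,1)}$ (so $\xi^\downarrow\equiv\frac32$ and $y_0=\frac78$), $U(x)=2\sqrt{x}$, $c=\frac12$, $x_0=1$.
\begin{itemize}
\item Formula $ii)$ gives $\Ginvcost(u)=\frac38+\frac u2$, with cost $1$ and utility $2\sqrt{3/8}\approx1.225$.
\item The constant $\Ginv\equiv\frac23$ lies in $\mcM$, satisfies $\Ginv(u)\ge\frac u2$, and costs $\frac23\cdot\frac32=1$.
\item Its utility is $\frac83\big((\frac23)^{3/2}-(\frac16)^{3/2}\big)\approx1.270$, so the formula's candidate is not optimal.
\end{itemize}
The same mechanism works whenever $c>0$ and $\xi\neq\xi^\downarrow$ on a set of positive measure. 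Replace $c\Finvbench$ by $c((1-s)\Finvbench+s\,m)$ with $m\ge\Finvbench$ flat on a suitable interval where $D<0$, and rescale $H_{\lamcost}$ by $1-s\kappa$ to keep the cost at $x_0$. This stays feasible and strictly raises the utility to first order in $s$.

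Your argument does go through, and $ii)$ is correct, in two cases:
\begin{itemize}
\item If $c=0$, then $H=\Ginv$ is non-decreasing, so your inequality applies to $H$ directly.
\item If $\xi$ is non-increasing as in \Cref{asm:xi-non-increasing}, then $\xi^\downarrow=\xi$, and maximising $U(h)-\lambda h\,\xi(u)$ pointwise over all $h\ge0$ gives the bound with no monotonicity needed.
\end{itemize}
Your zero-budget argument for $i)$ is fine, and cleaner than the paper's limiting argument.

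The paper's proof has the same hole, hidden in the claimed equivalence of \eqref{eq:opt-proof-ep-infty} and \eqref{eq:opt-alternative}. That reduction needs every feasible $X$ to give a $Z=X-cY$ feasible for \eqref{eq:opt-alternative}, i.e.\ $\int_0^1\Finv_Z\,\xi\,\d u\le\E[\varsigma Z]=\int_0^1(\Finv_X-c\Finvbench)\,\xi\,\d u$, where $\Finv_Z$ is the increasing rearrangement of $\Finv_X-c\Finvbench$. By the rearrangement inequality this holds for non-increasing $\xi$, but not in general. So the paper only shows optimality among $\Ginv$ with $\Ginv-c\Finvbench\in\mcM$, which for $c>0$ is a strict subset of the feasible set $\{\Ginv\in\mcM:\Ginv\ge c\Finvbench\}$. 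For $c>0$ and general $\xi$, the optimiser must come from a monotone-constrained problem whose integrand $-U(x-c\Finvbench(u))+\lambda\xi(u)x$ depends on $u$ non-linearly, and \Cref{thm:generalised-iso} does not apply to that.
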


\begin{proof}
We first prove case $ii)$ and note that optimisation problem \eqref{opt:ep-infty} is equivalent to 
\begin{equation}\label{eq:opt-proof-ep-infty}
    \max_{\substack{X \\ (X,Y) \text{ comonotone}}} \E\big[U\big(X - cY\big)\big] 
     \qquad \text{subject to} \qquad
     \E[(X - cY)\varsigma] \le x_0 - cy_0\,.
\end{equation}
Next consider the following alternative optimisation problem 
\begin{equation}\label{eq:opt-alternative}
    \max_{Z} \E\big[U(Z)\big] 
     \qquad \text{subject to} \qquad
     \int_0^1 \Finv_Z(x) \xi(u) \d u  \le x_0 - cy_0\,.
\end{equation}
If in \eqref{eq:opt-alternative} we additionally impose the constraint that $(Z+ cY, Y)$ is comonotone, then the two optimisation problems are equivalent. Next we solve optimisation problem \eqref{eq:opt-alternative}. For this, note that its solution only depends on the distribution of $Z$. That is any random variable whose quantile function is a solution to
\begin{equation}\label{eq:opt-ep-infty-related}
    \max_{\Ginv  \in \mcM}  \; \int_0^1 U\big(\Ginv(u)\big)\, \d u \,,
    \quad \text{subject to} \quad
    \quad \int_0^1 \Ginv(u) \xi (u)\d u\le x_0 - c y_0\,,
\end{equation}
is a solution to \eqref{eq:opt-alternative}.

The associated Lagrangian to \eqref{eq:opt-ep-infty-related} with Lagrange parameter $\lambda \ge 0$ is
\begin{equation*}
    \mL (\Ginv, \lambda) = \int_0^1 -U\big(\Ginv(u)\big) + \lambda \xi(u) \Ginv(u)\, du - \lambda(x_0 - c y_0)
\end{equation*}
and the solution of optimisation problem \eqref{eq:opt-ep-infty-related} is equivalent to the saddle point of the Lagrangian \citep{Rockafellar1970}
\begin{equation}\label{eq:lagrange-iso}
    \sup_{\lambda \ge 0}\;  \inf_{\Ginv \in \mcM}\; \mL(\Ginv, \lambda)\,.
\end{equation}
For fixed $\lambda \ge 0$, the inner problem of \eqref{eq:lagrange-iso} has by \Cref{thm:generalised-iso} the unique solution 
\begin{equation*}
    \Ginv_\lambda(u) = (-U')^{-1}\left(\big[- \lambda \xi(u)\big]^\uparrow\right)\,.
\end{equation*}
Since $(-U')^{-1}(\cdot) = (U')^{-1}(-\, \cdot)$ and by \Cref{prop:isotonic-properties}  $i)$, we have $[- \lambda \xi(u)\big]^\uparrow = - \lambda \xi(u)^\downarrow$, and obtain
\begin{equation*}
    \Ginv_\lambda(u) = (U')^{-1}\left(\lambda \, \xi(u)^\downarrow\right)\,.
\end{equation*}
Moreover, since $(U')^{-1}(\cdot)$ is non-increasing and $\lambda \, \xi(u)^\downarrow>0 $ for all $u \in (0,1)$ and $\lamcost>0$,  we have that $\Ginv_\lambda \in \mcM$. 

To prove the existence of the Lagrange multiplier note that $ \lim_{\lambda\to 0}\int_0^1  \Ginv_\lambda(u) \xi(u) \, du = + \infty$. Moreover, by strict concavity of the utility, $\int_0^1\Ginv_\lambda(u) \xi(u) \, du$ is as a function of $\lambda$ strictly decreasing, and $\lim_{\lambda\to +\infty} \int_0^1 \Ginv_\lambda(u) \xi(u) \, du = 0$. Thus there exists a unique $\lamcost\ge0$ such that $\int_0^1  \Ginv_{\lamcost}(u) \xi(u) \, du = x_0$. 

Thus, $\Ginv_{\lamcost}$ is the unique solution to \eqref{eq:opt-ep-infty-related} (uniqueness follows by strict concavity of the utility) and any random variable with quantile function $\Ginv_{\lamcost}$ is a solution to \eqref{eq:opt-alternative}. Next, we choose $Z^M:= \Ginv_{\lamcost}(\Finvbench(Y))$ and define $X^M:= Z^M + cY$. As $Z^M$ is a solution to \eqref{eq:opt-alternative} and $(Z^M + cY, Y) = (X^M, Y)$ is comonotone, it holds that $X^M$ is also a solution to optimisation problem \eqref{eq:opt-proof-ep-infty}. Noting that the quantile function of $X^M$ is given by $\Ginvcost(u) := \Ginv_{\lamcost}(u) + c \Finvbench(u)$ concludes the proof. Uniqueness follows by strict concavity of the utility.

For case $i)$, note that the budget is zero, thus the limiting case of the Lagrangian applies and the optimal Lagrange multiplier is equal to infinity. By Inada's conditions the optimal quantile function simplifies to $\Ginvcost(u) = c\, \Finvbench$. 
\end{proof}

\begin{corollary}\label{cor:sol-P-inv-ep-large}
Let Assumptions \ref{asm:sdf-cont}, \ref{asm:bench-cont}, and \ref{asm:e_min} be satisfied, $c \le\min\{\frac{x_0}{y_0}, 1\}$, and $\ep \ge \epcost$. Then, there exists a unique solution to \eqref{opt:main-prime} given by $\Ginvcost$ defined in \eqref{eq:Ginv-ep-infty}.
\end{corollary}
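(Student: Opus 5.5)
The proof is a relaxation argument: we compare \eqref{opt:main-prime} with the relaxed problem \eqref{opt:ep-infty}, which drops the $\alpha$-BW constraint. By \Cref{thm:ep-infty}, under Assumptions \ref{asm:sdf-cont} and \ref{asm:bench-cont} and $c\le \min\{\frac{x_0}{y_0},1\}$, problem \eqref{opt:ep-infty} has the unique solution $\Ginvcost$. The steps are: (a) show $\Ginvcost$ is feasible for \eqref{opt:main-prime}; (b) conclude optimality from the fact that the feasible set of \eqref{opt:main-prime} is contained in that of \eqref{opt:ep-infty}; (c) obtain uniqueness from \Cref{thm:quantile-reform}.

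For feasibility, note first that $\Ginvcost\in\mcM$. In case $ii)$ of \Cref{thm:ep-infty}, $\Ginvcost$ is the sum of $(U')^{-1}(\lamcost\,\xi^\downarrow)$, which is non-decreasing because $(U')^{-1}$ is decreasing and $\xi^\downarrow$ is non-increasing, and of $c\Finvbench$, which is non-decreasing. Left-continuity follows from the corresponding properties of the isotonic projection and of $\Finvbench$. In case $i)$, $\Ginvcost=c\Finvbench\in\mcM$ directly. The budget constraint holds by construction of $\lamcost$, or trivially in case $i)$ since $\int_0^1 c\Finvbench\,\xi\,\d u=cy_0=x_0$. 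The $\alpha$-BW constraint holds because $\DBW(\Ginvcost,\Finvbench)=\epcost\le\ep$ by hypothesis.

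For optimality, every $\Ginv\in\mcM$ feasible for \eqref{opt:main-prime} is also feasible for \eqref{opt:ep-infty}. Hence
\[
\int_0^1 U\big(\Ginv(u)-c\Finvbench(u)\big)\,\d u\;\le\;\int_0^1 U\big(\Ginvcost(u)-c\Finvbench(u)\big)\,\d u .
\]
Since $\Ginvcost$ is itself feasible for \eqref{opt:main-prime}, it attains the supremum there. The value is finite: $\Ginvcost$ solves the well-posed problem \eqref{opt:ep-infty}, and Assumption \ref{asm:e_min} ensures the constraint set of \eqref{opt:main-prime} is nonempty, which is consistent with this. Uniqueness follows from \Cref{thm:quantile-reform}, since the objective is strictly concave and the constraints are convex by \Cref{lemma:convex-alpha-BW}. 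Alternatively, any other maximiser of \eqref{opt:main-prime} would also be feasible and optimal for \eqref{opt:ep-infty}, contradicting uniqueness in \Cref{thm:ep-infty}.

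The only delicate point is whether $\epcost$ is finite, so that the hypothesis $\ep\ge\epcost$ is meaningful. If $\epcost=+\infty$, the corollary is vacuous, so nothing needs to be checked. I expect the argument to be essentially routine; the step needing the most care is confirming that $\Ginvcost$ lies in $\mcM$ in both cases of \Cref{thm:ep-infty}.
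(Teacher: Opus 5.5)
Your proposal is correct and follows the same relaxation argument as the paper. You show $\Ginvcost$ is feasible because $\DBW(\Ginvcost,\Finvbench)=\epcost\le\ep$, deduce optimality because \eqref{opt:main-prime} only adds a constraint to \eqref{opt:ep-infty}, and inherit existence and uniqueness from \Cref{thm:ep-infty}; your extra checks (membership in $\mcM$, the case $x_0=cy_0$, uniqueness via \Cref{thm:quantile-reform}) are sound but go beyond what the paper's short proof spells out.
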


\begin{proof}
By \Cref{thm:ep-infty}, $\Ginvcost$ solves optimisation problem \eqref{opt:main-prime} without the $\alpha$-BW constraint and satisfies $\DBW(\Gcost, \Fbench) = \epcost$. Thus $\Ginvcost$ is admissible for optimisation problem \eqref{opt:main-prime} with $\ep \ge \epcost$. As \eqref{opt:main-prime} has an additional constraint compared to the infinite tolerance distance setting, $\Ginvcost$ needs to be optimal for \eqref{opt:main-prime}. Existence and uniqueness follow from \Cref{thm:ep-infty}.
\end{proof}

As seen above, if the tolerance distance is large enough, i.e., $\ep \ge \epcost$, then the solution to the optimisation problem \eqref{opt:main-prime} is given by \Cref{thm:ep-infty}.

\subsection{Infinite wealth solution}\label{sub:infinite-x0}

Here we consider the case where the investor chooses any admissible portfolio in an $\ep$-$\DBW$ ball to maximise their expected utility but without a budget constraint. We find that for this infinite wealth problem, the investor picks an admissible portfolio that dominates, in first order stochastic dominance, the reference portfolio $Y$. Due to the concavity of the utility and convexity of the Bregman divergence, larger deviation from $Y$ in the $\DBW$ divergence lead to larger expected utility. Therefore, optimality is attained on the boundary of the $\DBW$ ball. The following theorem confirms this intuition and provides the corresponding optimal quantile function. 

For $\epsilon>0$, consider the problem of maximal expected utility within an $\alpha$-$\BW$ ball but with infinite budget, i.e.,
\begin{equation*}\label{opt:x-infty}
    \max_{\Ginv \in \mcM}  \; \int_0^1 U\big(\Ginv(u) - c \Finvbench(u)\big)\, \d u 
    \qquad \text{subject to} \qquad
    \quad \DBW\left(\Ginv, \Finvbench\right)  \le \ep\,.\tag{$\breve{P}_{x_0 = \infty}$}
\end{equation*}
The next statement characterises the optimal quantile function to the infinite wealth problem \eqref{opt:x-infty}. Similarly to \Cref{sub:ep-infinite}, we use the superscript ``$\BW$'' for the optimal quantile function $\Ginvbw$ and Lagrange multiplier $\lambw$ of optimisation problem \eqref{opt:x-infty}.

\begin{theorem}[Infinite wealth solution]
\label{thm:infinite-wealth}
Let Assumptions \ref{asm:sdf-cont} and \ref{asm:bench-cont} be satisfied, $c \in [0,1]$, and $\epsilon>0$. Then there exists a unique solution to optimisation problem \eqref{opt:x-infty} given by 
\begin{equation}\label{eq:G-infinite-wealth}
    \Ginvbw(u)=H^{-1}\Big(-\alpha \lambw\phi'\big(\Finvbench(u)\big)\Big)\,,
\end{equation} 
where $H(x) := U'\big(x - c \Finvbench(u)\big) - \lambw \,\alpha \,\phi'(x)$ and where $\lambw>0$ is the unique solution to
$\DBW(\Ginvbw,\Finvbench)=\epsilon$. We further denote its cost by $x_0^\infty:= \int_0^1 \Ginvbw(u)\xi(u) \d u$.
\end{theorem}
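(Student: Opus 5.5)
We solve \eqref{opt:x-infty} with a Lagrangian that relaxes the $\alpha$-BW constraint and is minimised pointwise in $u$, and then check that the pointwise minimiser lies in $\mcM$.

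\textbf{Lagrangian and pointwise minimisation.} For $\lambda>0$ set
\begin{equation*}
\mL(\Ginv,\lambda):=\int_0^1 -U\big(\Ginv(u)-c\Finvbench(u)\big)+\lambda\,\big|\Id_{\Ginv(u)\le \Finvbench(u)}-\alpha\big|\,B_\phi\big(\Ginv(u),\Finvbench(u)\big)\,\d u-\lambda\ep .
\end{equation*}
The objective is strictly concave and the constraint is convex by \Cref{lemma:convex-alpha-BW}, so the saddle-point formulation $\sup_{\lambda\ge0}\inf_{\Ginv\in\mcM}$ applies as in the proof of \Cref{thm:ep-infty}. We first minimise over all of $\mcH$, pointwise in $u$, the function $x\mapsto -U(x-c\Finvbench(u))+\lambda S(x,\Finvbench(u))$, with $S$ from \Cref{lemma:convex-alpha-BW}. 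This function is strictly convex. Its derivative is $-U'(x-c\Finvbench(u))+\lambda|\Id_{x\le \Finvbench(u)}-\alpha|(\phi'(x)-\phi'(\Finvbench(u)))$.

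On $\{x\le \Finvbench(u)\}$ the second term is $\le0$. Because $c\le1$, we have $x - c\Finvbench(u)\le \Finvbench(u)(1-c)$ finite, so $U'>0$ there and the derivative is strictly negative. Hence the minimiser lies in $\{x>\Finvbench(u)\}$, and in fact strictly above $\Finvbench(u)$, because at $x=\Finvbench(u)$ the derivative equals $-U'((1-c)\Finvbench(u))<0$. This is where the indicator drops out: only the weight $\alpha$ remains.

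On $\{x>\Finvbench(u)\}$ the first-order condition reads $U'(x-c\Finvbench(u))-\lambda\alpha\phi'(x)=-\lambda\alpha\phi'(\Finvbench(u))$, that is, $H(x)=-\alpha\lambda\phi'(\Finvbench(u))$. Here $H$ is strictly decreasing, since $U'$ is decreasing and $\phi'$ is increasing. On the region $x>\Finvbench(u)$ its range reaches down to $-\infty$ provided $\phi'(x)\to\infty$ or $U'\to0$ together with $\phi'$ being increasing; I will check the limits. This gives a unique root $\Ginv_\lambda(u)=H^{-1}(\cdot)$, matching \eqref{eq:G-infinite-wealth}.

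\textbf{Monotonicity.} We must show $\Ginv_\lambda\in\mcM$, so that the $\mcH$-minimiser is also the $\mcM$-minimiser and no isotonic projection is needed. Write $y=\Finvbench(u)$ and let $x(y)$ solve $U'(x-cy)-\lambda\alpha(\phi'(x)-\phi'(y))=0$. Implicit differentiation, treating $\phi'$ formally as differentiable or arguing by monotone comparison, gives
\begin{equation*}
x'(y)\,\big[U''(x-cy)-\lambda\alpha\phi''(x)\big]=cU''(x-cy)-\lambda\alpha\phi''(y).
\end{equation*}
Both brackets are negative. Without derivatives the comparison runs as follows: if $y_1<y_2$ but $x_1>x_2$, then $U'(x_1-cy_1)<U'(x_2-cy_2)$ while $\phi'(x_1)-\phi'(y_1)>\phi'(x_2)-\phi'(y_2)$, which contradicts equality in the two first-order conditions. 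Therefore $x$ is non-decreasing in $y$. Since $\Finvbench$ is non-decreasing and left-continuous, and $x(\cdot)$ is continuous, $\Ginv_\lambda\in\mcM$. I expect this monotonicity step, together with the boundary behaviour of $H$ (existence of the root), to be the main obstacle.

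\textbf{The multiplier.} By \Cref{prop:one-const-binding}, or directly because the objective is increasing, the constraint must bind. The map $\bw(\lambda):=\DBW(\Ginv_\lambda,\Finvbench)=\alpha\int_0^1 B_\phi(\Ginv_\lambda,\Finvbench)\,\d u$ is continuous and strictly decreasing in $\lambda$, because $\Ginv_\lambda(u)\downarrow \Finvbench(u)$ pointwise as $\lambda$ increases; this follows from the first-order condition and the monotonicity of $H$ in $\lambda$. Its limits are $\bw(\lambda)\to0$ as $\lambda\to\infty$, and $\bw(\lambda)\to\infty$ as $\lambda\to0$, since then $\Ginv_\lambda\to\infty$ by the Inada condition; both limits follow by monotone convergence. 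Hence there is a unique $\lambw>0$ with $\bw(\lambw)=\ep$. Finally, the saddle-point argument shows that $\Ginvbw=\Ginv_{\lambw}$ is optimal, and strict concavity (\Cref{thm:quantile-reform}) gives uniqueness.
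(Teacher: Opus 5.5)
Your proposal is correct and follows essentially the same route as the paper. You relax the $\alpha$-BW constraint with a pointwise Lagrangian and show the pointwise minimiser lies strictly above $\Finvbench(u)$, since the derivative there is $-U'\big((1-c)\Finvbench(u)\big)<0$, so only the weight $\alpha$ is active. You then solve $H(x)=-\alpha\lambda\phi'(\Finvbench(u))$, verify monotonicity in $u$, pick $\lambw$ from the limits $\lambda\to0$ and $\lambda\to\infty$, and get uniqueness from strict concavity. The differences are minor: you handle the indicator via convexity of $S(\cdot,y)$ from \Cref{lemma:convex-alpha-BW} rather than splitting into $\L^{(\alpha)}_\lambda$ and $\L^{(1-\alpha)}_\lambda$, you prove monotonicity by directly comparing first-order conditions rather than through the monotonicity of $H_u$ in $x$ and $u$, and you additionally show that $\lambda\mapsto\DBW(\Ginv_\lambda,\Finvbench)$ is strictly decreasing, which supplies the uniqueness of $\lambw$ that the statement asserts.
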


\begin{proof}
We start with the associated Lagrangian to \eqref{opt:x-infty}. Note that since we minimise (for fixed Lagrange multiplier) the Lagrangian over quantile functions $\Ginv$, we only write the Lagrangian for the cases when it is not equal to $+\infty$ (that is only for $\Ginv(u) \ge c\Finvbench(u)$, as otherwise the utility is $-\infty$). Thus, the pointwise (in $u$) Lagrangian of \eqref{opt:x-infty} (wherever it is finite) is
\begin{align*}
    \L_\lambda(u, x)
    &= 
     -U\big(x-c\Finvbench(u)\big) \Id_{x \ge c\Finvbench(u)}
    + 
    \lambda (1-\alpha) \, B_\phi(x, \Finvbench(u))\, \Id_{x\in [c\Finvbench(u),  \Finvbench(u))}
    + \lambda \alpha \, B_\phi(x, \Finvbench(u))\, \Id_{x \ge \Finvbench(u)} 
    \\
    &= 
     \L_\lambda^{(\alpha)}(u, x) \Id_{x \ge \Finvbench(u)}
    + 
    \L_\lambda^{(1-\alpha)}(u, x) \Id_{x \in [c\Finvbench(u),  \Finvbench(u))}\,,
\end{align*}
where 
\begin{align*}
    \L_\lambda^{(\beta)}(u, x) 
    &= 
    -U\big(x-c\Finvbench(u)\big)
    + 
    \lambda \beta \, B_\phi(x, \Finvbench(u))\,.
\end{align*}
To find the global minimiser (in $x$) of $\L_\lambda(u,x)$, we first examine the minimiser of $\L_\lambda^{(\beta)}(u, x)$ for all $\beta \in (0,1)$. 
For $\beta \in (0,1)$, we have that (using pointwise optimisation)
\begin{equation}\label{eq:ginv-lambda}
    \argmin_{x \in \R}\L_\lambda^{(\beta)}(u, x)
     = \Ginv^{(\beta)}_\lambda(u)
    :=
    H^{-1}\Big(- \lambda\, \beta\,  \phi'\big(\Finvbench(u)\big)\Big)\,,\quad  u\in[0,1]\,,
\end{equation}
where $H(x) := U'\big(x - c \Finvbench(u)\big) - \lambda \,\beta \,\phi'(x)$. Note that for each fixed $u \in (0,1)$, the inverse of $H$ in $x$ exists, as $H$ is strictly increasing in $x$. Next, we show that for each $\beta \in (0,1)$, $\Ginv^{(\beta)}_\lambda\in \mcM$ and that $\Ginv^{(\beta)}_\lambda(u) \ge \Finv(u)$ for all $u \in (0,1)$. Thus, for each $u \in (0,1)$, the minimiser of $\L_\lambda(u, \cdot)$ is attained in the region $\{x \ge \Finvbench(u)\}$, which implies that the global minimum of $\L_\lambda(u,\cdot)$ is attained at $\Ginv^{(\alpha)}_\lambda(u)$, $u \in (0,1)$.

We first show that $\Ginv_\lambda^{(\beta)}(\cdot)$ is non-decreasing. Note that $H$ is a function of both $x$ and $u$, and for the purpose of showing that $\Ginv_\lambda^{(\beta)}(\cdot)$ is non-decreasing, we make the dependence on $u$ explicit and write $H_u(x)$. For fixed $u \in (0,1)$ the function $x \mapsto H_u(x)$ is non-increasing, and for fixed $x \ge 0$, the function $u \mapsto H_u(x)$ is non-decreasing. This implies that $x \mapsto H^{-1}_u(x)$ is non-increasing and $u \mapsto H^{-1}_u(x)$ is non-decreasing. Further, for $\delta >0$ such that $u + \delta <1$, we have
\begin{align*}
    \Ginv_\lambda^{(\beta)}(u + \delta)
    &= 
    H^{-1}_{u+\delta}\Big(- \lambda\, \beta\,  \phi'\big(\Finvbench(u + \delta)\big)\Big)
    \\
    &\ge
    H^{-1}_{u+\delta}\Big(- \lambda\, \beta\,  \phi'\big(\Finvbench(u)\big)\Big)
    \\
    &\ge
    H^{-1}_u\Big(- \lambda\, \beta\,  \phi'\big(\Finvbench(u)\big)\Big)
    \\
    &=
    \Ginv^{(\beta)}_\lambda(u)\,,
\end{align*}
and $\Ginv^{(\beta)}_\lambda$ is non-decreasing and $\Ginv^\beta_\lambda\in \mcM$.

We next show that for all $\beta \in (0,1)$ it holds that $\Ginv^{(\beta)}_\lambda(u) \ge \Finvbench(u)$ for all $u \in (0,1)$. For $u \in (0,1)$, the inequality $\Ginv^{(\beta)}_\lambda(u) \ge \Finv(u)$ is equivalent to 
\begin{equation*}
    - \lambda\, \beta\,  \phi'\big(\Finvbench(u)\big)
    \le
    H\big(\Finvbench(u)\big)
    =
    U'\big((1 - c) \Finvbench(u)\big) - \lambda \,\beta \,\phi'\big(\Finvbench(u)\big)\,,
\end{equation*}
which is equivalent to $U'\big((1 - c) \Finvbench(u)\big) \ge0$, which, by strict increasingness of $U$ on $(0, +\infty)$, is true for all $u \in (0,1)$.  Thus, $\Ginv_\lambda^{(\beta)}(u)\ge \Finvbench(u)$, $u \in (0,1)$, and we conclude that indeed $\Ginv_\lambda^{(\beta)}$ is the global minimiser of $\int_0^1\L_\lambda(u, x)\d u$.

Finally we show that the $\alpha$-BW constraint is binding, note that $\lim_{\lambda \to 0}\DBW(\Ginv_\lambda^{(\alpha)}, \Finvbench) = + \infty $ and that $\lim_{\lambda \to +\infty}\Ginv_\lambda^{(\alpha)}(u) = \Finvbench(u)$, which implies $\lim_{\lambda \to +\infty}\DBW(\Ginv_\lambda^{(\alpha)}, \Finvbench) = 0$. To see that $\lim_{\lambda \to +\infty}\Ginv_\lambda^{(\alpha)}(u) = \Finvbench(u)$, note that for each $u\in (0,1)$, $\Ginv_\lambda^{(\alpha)}(u)$ satisfies the following equation in $x$
\begin{equation}\label{eq:Ginv-lambda-limit}
    \phi'(x) - \phi'\big(\Finvbench(u)\big)
    =
    \frac{1}{\lambda \alpha}U'\big(x-c \Finvbench(u)\big) \,.
\end{equation}
Taking the limit for $\lambda \to + \infty$ in \eqref{eq:Ginv-lambda-limit}, we obtain by strict convexity of $\phi(\cdot)$ that $\lim_{\lambda \to + \infty}\Ginv_\lambda(u) = \Finvbench(u)$, for all $u \in (0,1)$. 

Uniqueness follows by strict concavity of the utility function.
\end{proof}

If the infinite wealth solution with quantile function $\Ginvbw$ satisfies the budget constraint, i.e. if $x_0^\infty \le x_0$, then $\Ginvbw$ solves Problem \eqref{opt:main-prime}. We summaries this in the next corollary. 

\begin{corollary}\label{cor:sol-xo-infty}
Let Assumptions \ref{asm:sdf-cont}, \ref{asm:bench-cont}, and \ref{asm:e_min} be satisfied, $c \in [0,1]$, and $x_0 \ge x^\infty$. Then, there exists a unique solution to \eqref{opt:main-prime} given by $\Ginvbw$ defined in \eqref{eq:G-infinite-wealth}.
\end{corollary}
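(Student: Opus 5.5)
The argument mirrors the proof of \Cref{cor:sol-P-inv-ep-large}: \eqref{opt:main-prime} is \eqref{opt:x-infty} with one additional constraint, the budget constraint. So if the optimiser of the relaxed problem happens to satisfy the extra constraint, it is optimal for the constrained problem. I will establish admissibility of $\Ginvbw$, then its optimality, then uniqueness.

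\textbf{Step 1 (existence for the relaxed problem).} By \Cref{thm:infinite-wealth}, since $\ep>0$ and $c\in[0,1]$, problem \eqref{opt:x-infty} has a unique solution $\Ginvbw\in\mcM$. It is given by \eqref{eq:G-infinite-wealth}, satisfies $\DBW(\Ginvbw,\Finvbench)=\ep$, and has cost $x_0^\infty=\int_0^1\Ginvbw(u)\xi(u)\,\d u$.

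\textbf{Step 2 (admissibility).} The hypothesis $x_0\ge x_0^\infty$ gives $\int_0^1\Ginvbw(u)\xi(u)\,\d u = x_0^\infty\le x_0$, so $\Ginvbw$ satisfies constraint (i) of \eqref{opt:main-prime}. Constraint (ii) holds with equality by Step 1. Hence $\Ginvbw$ is feasible for \eqref{opt:main-prime}. Its objective value is finite: $\Ginvbw\ge\Finvbench\ge c\Finvbench$ by the proof of \Cref{thm:infinite-wealth}, so the objective is not $-\infty$. It is also not $+\infty$, because it is the maximum of \eqref{opt:x-infty}, which is finite.

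\textbf{Step 3 (optimality).} The feasible set of \eqref{opt:main-prime} is contained in that of \eqref{opt:x-infty}, since the former adds the budget constraint. Therefore, for any $\Ginv$ feasible for \eqref{opt:main-prime},
\[
\int_0^1 U\big(\Ginv(u)-c\Finvbench(u)\big)\,\d u\le\int_0^1 U\big(\Ginvbw(u)-c\Finvbench(u)\big)\,\d u .
\]
Combined with Step 2, $\Ginvbw$ attains the supremum of \eqref{opt:main-prime}.

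\textbf{Step 4 (uniqueness).} Uniqueness follows from \Cref{thm:quantile-reform}: a solution of \eqref{opt:main-prime} is unique whenever it exists, by strict concavity of the objective, linearity of the budget constraint, and convexity of $\DBW$ from \Cref{lemma:convex-alpha-BW}. Alternatively, any other maximiser of \eqref{opt:main-prime} would be a maximiser of \eqref{opt:x-infty}, contradicting the uniqueness in \Cref{thm:infinite-wealth}.

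\textbf{Main obstacle.} The only delicate point is ensuring the objective value is finite, so that "maximiser" is meaningful and the uniqueness argument via strict concavity applies; this is handled as in Step 2. \Cref{asm:e_min} is needed only for well-posedness and plays no further role here.
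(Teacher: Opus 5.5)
Your proposal is correct and matches the paper's own justification. The paper gives no separate proof and only remarks that $\Ginvbw$ solves \eqref{opt:main-prime} once $x_0^\infty\le x_0$. That is the relaxation argument of \Cref{cor:sol-P-inv-ep-large}, with existence and uniqueness inherited from \Cref{thm:infinite-wealth}, exactly as in your Steps 1--4.

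One claim in Step 2 is unsupported: that the optimal value of \eqref{opt:x-infty} is finite, which \Cref{thm:infinite-wealth} does not state. Your argument does not need it if you take uniqueness directly from \Cref{thm:infinite-wealth}, as in the alternative you give in Step 4.
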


\begin{table}[t]
    \centering
    \begin{tabular}{c c c c c }
     $\ep$    &  $x_0$ & budget & $\alpha$-BW & solution \\
     \toprule \toprule
     $\ep < \ep_{\min}$    & $x_0>0$ & -- & -- & no solution
     \\
     \midrule
     \multirow{2}{*}{$\ep_{\min}\le \ep < \epcost$}
     & $ 0 < x_0 < x_0^\infty$ & binding & binding & \Cref{Thm:main-optimal-quantile}
     \\
    & $x_0 \ge x_0^\infty$  & not binding & binding & \Cref{cor:sol-xo-infty}
    \\
    \midrule
    $\ep \ge \epcost$ & $x_0 > 0$ & binding & not binding & \Cref{cor:sol-P-inv-ep-large}
    \end{tabular}
    \caption{Solution to \eqref{opt:main-prime} for different choice of tolerance distance $\ep$ and budget $x_0$. The budget and $\alpha$-BW column indicate whether the respective constraints are binding. Recall that $\ep_{\min}$ and $\epcost$ depend on $x_0$, and that $x_0^\infty$ depends on $\ep$. }
    \label{tab:Lagrangme-mulitplier}
\end{table}
\Cref{tab:Lagrangme-mulitplier} collects the combinations of tolerance distance $\ep$ and budget $x_0$, and states whether there exists a solution to optimisation problem \eqref{opt:main-prime} and which of the constraints are binding. Recall that $\ep_{\min}$ depends on the $x_0$, implying that there is no solution if either the $\DBW$-ball or the budget it too small. Further, if $x_0$ is chosen to satisfy $\int_0^1 \Finvbench(u) \Finvsig(1-u)\, \d u \le x_0$, that is the benchmark satisfies the budget, then $\ep_{\min} = 0$.

\section{Optimal solution for binding constraints}

In this section we prove the representation of an optimal quantile function of optimisation problem \eqref{opt:main-prime} when both constraints are binding (\Cref{sec:optimal-quantile}) and we show that under additional assumptions a unique solution to optimisation problem \eqref{opt:main-prime} exists (\Cref{sec:existence}).

For this we first establish the conditions when both constraints are binding. 
From the last section we observe that for both constraints to be binding, it is necessary that $\ep_{\min} \le \ep < \epcost$ and $0< x_0 < x_0^\infty$, see also \Cref{tab:Lagrangme-mulitplier} for an overview of the different cases. 
The next result shows that a converse holds in the following sense. If the tolerance distance $\ep$ and the budget $x_0$ are suitably chosen and an optimal solution to optimisation problem \eqref{opt:main-prime} exists, then it needs to bind both constraints. 

\begin{proposition}[Binding constraints]\label{prop:constraints-binding}
Let $c\le \min\{\frac{x_0}{y_0}, 1\}$, $\ep_{\min} \le \ep< \epcost$, $0<x_0 < x_0^\infty$, and Assumptions \ref{asm:sdf-cont}, \ref{asm:bench-cont}, and \ref{asm:e_min} be satisfied. If a solution to optimisation problem \eqref{opt:main-prime} exists, then both constraints are binding. 
\end{proposition}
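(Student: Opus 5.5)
The plan is to argue by contradiction, using \Cref{prop:one-const-binding} together with the two boundary problems of \Cref{sec:boundary-cases}. Let $\Ginv^*$ denote the (unique, by \Cref{thm:quantile-reform}) solution of \eqref{opt:main-prime}. By \Cref{prop:one-const-binding} at least one constraint binds. It therefore suffices to exclude two cases: (a) the budget binds but $\DBW(\Ginv^*,\Finvbench)<\ep$, and (b) the $\alpha$-BW constraint binds but $\int_0^1\Ginv^*\xi\,\d u<x_0$. In each case the idea is to perturb $\Ginv^*$ towards the relevant boundary-case optimiser. The slack constraint absorbs the perturbation, and strict concavity of the objective gives a strict improvement.

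\emph{Case (a).} I will show that $\Ginv^*$ must then solve \eqref{opt:ep-infty}. Suppose not. By \Cref{thm:ep-infty} the unique solution $\Ginvcost$ of \eqref{opt:ep-infty} has a strictly larger objective value than $\Ginv^*$. For $w\in(0,1)$ set $\Ginv_w:=w\,\Ginvcost+(1-w)\Ginv^*\in\mcM$.
\begin{itemize}
\item The budget is linear, so $\Ginv_w$ satisfies $\int_0^1\Ginv_w\xi\,\d u\le x_0$.
\item By \Cref{lemma:convex-alpha-BW}, $\DBW(\Ginv_w,\Finvbench)\le w\,\epcost+(1-w)\DBW(\Ginv^*,\Finvbench)$, and the right-hand side is $<\ep$ for $w$ small enough.
\item By concavity of $U$, the objective at $\Ginv_w$ is at least $w$ times the value at $\Ginvcost$ plus $(1-w)$ times the value at $\Ginv^*$, which strictly exceeds the value at $\Ginv^*$.
\end{itemize}
This contradicts optimality of $\Ginv^*$. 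Hence $\Ginv^*$ solves \eqref{opt:ep-infty}, so $\Ginv^*=\Ginvcost$ by uniqueness. But then $\DBW(\Ginv^*,\Finvbench)=\epcost>\ep$, contradicting feasibility.

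\emph{Case (b).} The argument is symmetric, with $\Ginvbw$ from \Cref{thm:infinite-wealth} in place of $\Ginvcost$. If $\Ginv^*$ does not solve \eqref{opt:x-infty}, then $\Ginvbw$ does strictly better. The mixture $\Ginv_w=w\,\Ginvbw+(1-w)\Ginv^*$ has the following properties.
\begin{itemize}
\item Its $\alpha$-BW divergence is at most $w\ep+(1-w)\ep=\ep$, by \Cref{lemma:convex-alpha-BW}.
\item Its cost equals $w\,x_0^\infty+(1-w)\int_0^1\Ginv^*\xi\,\d u$, which is $<x_0$ for small $w$ because the budget is slack.
\item Its objective value is strictly larger than that of $\Ginv^*$, again by concavity of $U$.
\end{itemize}
This is a contradiction. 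Hence $\Ginv^*=\Ginvbw$, whose cost is $x_0^\infty>x_0$, which is infeasible.

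The step needing most care is the strict improvement under mixing. One must check that the objective values involved are finite, so that the concavity inequality is meaningful and strict rather than of the form $\infty\le\infty$. Finiteness of the value at $\Ginv^*$ follows from well-posedness (\Cref{asm:e_min} together with $\ep\ge\ep_{\min}$). For the boundary optimisers I would use $c\le\min\{x_0/y_0,1\}$ and the explicit forms \eqref{eq:Ginv-ep-infty} and \eqref{eq:G-infinite-wealth}. Strictness then comes from the uniqueness statements in \Cref{thm:ep-infty} and \Cref{thm:infinite-wealth}: if $\Ginv^*$ differs from the boundary optimiser, its value is strictly smaller.
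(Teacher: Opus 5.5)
Your proposal is correct and is essentially the paper's argument. After \Cref{prop:one-const-binding}, each one-sided case is excluded by mixing the candidate with the corresponding boundary optimiser ($\Ginvcost$ when only the budget binds, $\Ginvbw$ when only the $\alpha$-BW constraint binds), letting the slack constraint absorb the perturbation and using concavity of $U$ for a strict gain. The only difference is how you get strictness: you use uniqueness of the boundary optimiser to force $\Ginv^*$ to coincide with it, which contradicts $\epcost>\ep$ (resp.\ $x_0^\infty>x_0$); the paper instead applies strict concavity of $U$ directly to the mixture. Both versions tacitly require $\epcost$ and $x_0^\infty$ to be finite.
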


\begin{proof}
Recall that by \Cref{prop:one-const-binding} at least one of the constraints is binding. Thus, we split the proof into two cases and proceed by contradiction. First, we assume that only the budget constraint is binding and in a second step, we assume that only the $\alpha$-BW divergence is binding. 

\underline{Part 1:} Budget is binding. 
Let $\Ginv$ denote the solution to optimisation problem \eqref{opt:main-prime}, satisfying $\cost(\Ginv) = x_0$ and $\DBW(\Ginv, \Finvbench) = \ep^\dagger< \ep$. Next, recall $\Ginvcost$ from \Cref{cor:sol-P-inv-ep-large}. $\Ginvcost$ satisfies $\cost(\Ginvcost) = x_0$ and $\DBW(\Ginv, \Finvbench) = \epcost$. Moreover, as $\Ginvcost$ is the solution to \eqref{opt:ep-infty}, we have $\int_0^1 U(\Ginvcost(u)) \d u \ge \int_0^1 U(\Ginv(u)) \d u $, as \eqref{opt:ep-infty} has less constraints than \eqref{opt:main-prime}.

Next, for $w \in [0,1]$, consider the family of quantile functions $\Ginv^w(u):= w \Ginv(u) +(1- w) \Ginvcost(u)$. Clearly $\cost(\Ginv^w) = x_0$ and by strict concavity of the utility we have
\begin{equation*}
    \int_0^1U\big(\Ginv^w(u)\big)\d u
    > w\int_0^1U\big(\Ginv(u)\big)\d u + (1-w) \int_0^1U\big(\Ginvcost(u)\big)\d u
    \ge \int_0^1U\big(\Ginv(u)\big)\d u\,.
\end{equation*}
Moreover, by  convexity of the $\alpha$-BW divergence in its first component, it holds that 
\begin{equation*}
    \DBW\big(\Ginv^w, \Finvbench) \le
    w \DBW\big(\Ginv, \Finvbench)  + (1 - w) \DBW\big(\Ginvcost, \Finvbench) 
    =w \ep^\dagger + (1-w) \ep^\infty. 
\end{equation*}
Thus, there exists a $w^\dagger\in (0,1)$ such that $\DBW\big(\Ginv^{w^\dagger}, \Finvbench) \le \ep$. As $\Ginv^{w^\dagger} $ is feasible for optimisation problem \eqref{opt:main-prime} and has a strictly larger utility, we arrive at a contradiction.

\underline{Part 2:} $\alpha$-BW divergence is binding. 
Let $\Ginv$ denote the solution to optimisation problem \eqref{opt:main-prime} satisfying $\cost(\Ginv)  = x^\dagger < x_0$ and $\DBW(\Ginv, \Finvbench) = \ep$. Next, recall $\Ginvbw$ from \Cref{cor:sol-xo-infty}. $\Ginvbw$ satisfies $\cost(\Ginvbw) = \xbw$ and $\DBW(\Ginvbw, \Finvbench) = \ep$. Moreover, as $\Ginvbw$ is the solution to \eqref{opt:x-infty}, we have $\int_0^1 U(\Ginvbw(u)) \d u \ge \int_0^1 U(\Ginv(u)) \d u $, as \eqref{opt:x-infty} has one less constraints than \eqref{opt:main-prime}.

Next, for $w \in [0,1]$, define the family of quantile functions $\Ginv^w(u):= w \Ginv(u) +(1- w) \Ginvbw(u)$. Using similar arguments as in part 1, we have $\cost(\Ginv^w) = w x^\dagger + (1-w)\xbw$, $\int_0^1 U\big(\Ginv^w(u)\big)\d u> \int_0^1 U\big(\Ginv(u)\big)\d u$, and $\DBW(\Ginv^w, \Finvbench) \le \ep$. Thus there exists a $w^\dagger \in (0,1)$, such that $\cost(\Ginv^{w^\dagger}) = x_0$, which implies that $\Ginv^{w^\dagger}$ is a feasible to \eqref{opt:main-prime} and has a strictly larger utility than $\Ginv$, providing a contradiction. 

Combining, if a solution exists, both constraints have to be binding. 
\end{proof}

\subsection{Optimal quantile function}\label{sec:optimal-quantile}

Before stating the representation of the optimal quantile function, we define a family of quantile functions that are instrumental for the solution of optimisation problem \eqref{opt:main-prime}. For $\bfeta \in [0, \infty)^2$ and $\beta \in (0,1)$ define the quantile functions 
\begin{equation}\label{eq:def-opt-sol}
    \Ginv^{(\beta)}_{\bfeta}(u)
    :=
    \tilde{H}_{\beta}^{-1}\Big(\; \Big[ \eta_2 \, \beta\, \phi'\big(\Finvbench(u)\big) - \eta_1 \xi(u) \Big]^\uparrow\Big)\,,
\end{equation}%
where $\tilde{H}_{\beta}(x) := - U'\big(x - c\Finvbench(u)\big) + \eta_2\, \beta\, \phi'\big(x\big)$. For simplicity, we omit the dependence of $\tilde{H}_\beta$ on $u$. 

The next result characterises the optimal quantile function of optimisation problem \eqref{opt:main-prime}, if a solution exists. We observe that the $\alpha$-BW divergence creates two cases for the optimal quantile function. One for underperforming and another for outperforming the benchmark. We note that the optimal quantile function depends on the utility and the constant $c$ though the function $\tilde{H}_\beta$.

\begin{theorem}[Representation of the optimal quantile function]\label{Thm:main-optimal-quantile}
Let $c\le \min\{\frac{x_0}{y_0}, 1\}$, $\ep_{\min} \le \ep< \epcost$, $0<x_0 < x_0^\infty$, and Assumptions \ref{asm:sdf-cont}, \ref{asm:bench-cont}, and \ref{asm:e_min} be satisfied. Define $\alpha^\dagger := \alpha\Id_{\{\alpha> \frac12\}}  + (1-\alpha)\Id_{\{\alpha \le \frac12\}}$. If a solution to optimisation problem \eqref{opt:main-prime} exists, then it is unique and given by 
\begin{equation*}
    \Ginv_{\bfeta^*}(u) :=
    \begin{cases}
    \Ginv^{(\alpha)}_{\bfeta^*}(u) & \text{if} \qquad  \Finvbench(u) \le \Ginv^{(\alpha^\dagger)}_{\bfeta^*}(u)\,,
    \\[0.5em]
     \Ginv^{(1-\alpha)}_{\bfeta^*}(u) & \text{if} \qquad   \Finvbench(u) > \Ginv^{(\alpha^\dagger)}_{\bfeta^*}(u)
    \,,
\end{cases}
\end{equation*}
where $\bfeta^*\in (0, \infty)^2$ are such that the budget and the $\alpha$-BW constraints are binding, that is $\newline\int_0^1 \Ginv_{\bfeta^*}(u) \xi(u) \d u= x_0$ and $\DBW \big(\Ginv_{\bfeta^*}, \Finvbench) = \ep$. Moreover, $\Ginv_{\bfeta^*}(\cdot)$ is continuous.
\end{theorem}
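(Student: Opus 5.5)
Uniqueness is immediate from \Cref{thm:quantile-reform}, and by \Cref{prop:constraints-binding} any solution binds both constraints. The remaining task is the representation, which I would obtain by Lagrangian duality as in the proofs of \Cref{Prop:epsilon-bound} and \Cref{thm:infinite-wealth}. Problem \eqref{opt:main-prime} is a concave maximisation over the convex set $\mcM$ with a linear constraint and a convex constraint (\Cref{lemma:convex-alpha-BW}), and Slater's condition holds. Hence a solution is a saddle point of
\begin{equation*}
\mL(\Ginv,\bfeta)=\int_0^1 \Big[-U\big(\Ginv(u)-c\Finvbench(u)\big)+\eta_1\,\xi(u)\,\Ginv(u)+\eta_2\,\big|\Id_{\Ginv(u)\le \Finvbench(u)}-\alpha\big|\,B_\phi\big(\Ginv(u),\Finvbench(u)\big)\Big]\d u ,
\end{equation*}
with multipliers $\bfeta=(\eta_1,\eta_2)\in[0,\infty)^2$. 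Since both constraints bind, complementary slackness gives $\bfeta^*\in(0,\infty)^2$.

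The first step is, for fixed $\bfeta$ and fixed $\beta\in(0,1)$, to minimise the ``pure'' Lagrangian $\int_0^1 -U(\Ginv-c\Finvbench)+\eta_1\xi\Ginv+\eta_2\beta B_\phi(\Ginv,\Finvbench)\,\d u$ over $\mcM$. Expanding $B_\phi$, the integrand has the form $\Lambda_u(x)-\big(\eta_2\beta\phi'(\Finvbench(u))-\eta_1\xi(u)\big)x$, where $\Lambda_u(x)=-U(x-c\Finvbench(u))+\eta_2\beta\phi(x)$ is strictly convex with derivative $\tilde H_\beta$, non-increasing in $u$. \Cref{thm:generalised-iso} then gives the minimiser $\Ginv^{(\beta)}_{\bfeta}$ in \eqref{eq:def-opt-sol}. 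Monotonicity in $u$ follows as in the proof of \Cref{thm:infinite-wealth}: $\tilde H_\beta$ is increasing in $x$ and non-increasing in $u$, and its argument is an isotonic projection.

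The second step handles the indicator. The true integrand is
\begin{equation*}
\ell_u(x)=\ell^{(\alpha)}_u(x)\,\Id_{x>\Finvbench(u)}+\ell^{(1-\alpha)}_u(x)\,\Id_{x\le\Finvbench(u)} .
\end{equation*}
This function is convex in $x$, because its derivative jumps upward, as shown in \Cref{lemma:convex-alpha-BW}. The two branches and their derivatives agree at $x=\Finvbench(u)$, since $B_\phi$ and $\partial_xB_\phi$ vanish there. Consequently the minimiser of $\ell_u$ lies in $\{x>\Finvbench(u)\}$ exactly when the $\alpha$-branch minimiser lies there, and in $\{x\le\Finvbench(u)\}$ exactly when the $(1-\alpha)$-branch minimiser lies there.

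Because the isotonic projection couples different values of $u$, I cannot apply this pointwise directly. The plan is to show that the projections $[\,\cdot\,]^\uparrow$ in the two branches can be compared through $\alpha^\dagger=\max\{\alpha,1-\alpha\}$. Using \Cref{prop:isotonic-properties} together with monotonicity of $\tilde H_\beta^{-1}$ in $\beta$, the sets $\{\Finvbench(u)\le\Ginv^{(\alpha^\dagger)}_{\bfeta}(u)\}$ and $\{\Finvbench(u)>\Ginv^{(\alpha^\dagger)}_{\bfeta}(u)\}$ should correspond to the regions where the $\alpha$-branch, respectively the $(1-\alpha)$-branch, is active. On the outperformance region I would use $\Ginv^{(\alpha)}_{\bfeta}$ and otherwise $\Ginv^{(1-\alpha)}_{\bfeta}$. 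I would then verify that the glued function is non-decreasing and minimises $\int_0^1\ell_u(\Ginv(u))\,\d u$ over $\mcM$ by checking the variational inequality $\int_0^1 \ell_u'(\Ginv_{\bfeta}(u))\,(h(u)-\Ginv_{\bfeta}(u))\,\d u\ge0$ for all $h\in\mcM$, using convexity of $\ell_u$.

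Continuity follows because $\tilde H_\beta^{-1}$ is continuous and the isotonic projection of a continuous function is continuous. At the switching points $\Ginv_{\bfeta}=\Finvbench$, the first-order conditions force both branches to agree, since $\partial_x B_\phi=0$ there. Plugging in $\bfeta^*$ chosen so that both constraints bind then yields the stated optimal quantile function, by the saddle-point theorem.

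The main obstacle is the second step. Gluing two separately isotonically projected candidates need not give the global minimiser over $\mcM$, because the projection is nonlocal, and the choice of $\alpha^\dagger$ as the threshold must be justified. My first attempt would be to order $\Ginv^{(\alpha)}_{\bfeta}$, $\Ginv^{(1-\alpha)}_{\bfeta}$ and $\Finvbench$ pointwise, so as to show that the crossing sets coincide. Failing that, I would work with the concave-envelope characterisation of $[\,\cdot\,]^\uparrow$ on each maximal interval.
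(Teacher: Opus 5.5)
You do not prove the step that actually yields the representation. Your first step matches the paper: you obtain the branch minimisers $\Ginv^{(\beta)}_{\bfeta}$ via \Cref{thm:generalised-iso}. What is missing is a proof that the minimiser over $\mcM$ of the Lagrangian with switching weight $|\Id_{\Ginv(u)\le\Finvbench(u)}-\alpha|$ is the glued function, and of why the switch is decided by $\Ginv^{(\alpha^\dagger)}_{\bfeta}$. You write that the relevant sets ``should correspond'', propose a variational inequality without checking it, and flag this as the main obstacle. As it stands the proposal is a plan, not a proof.

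The missing idea is to compare the two branch functionals globally rather than pointwise. For $\alpha\ge\frac12$ one has $\L^{(\alpha)}_{\bfeta}-\L^{(1-\alpha)}_{\bfeta}=\eta_2(2\alpha-1)B_\phi\ge0$, with equality at $\Ginv=\Finvbench$. Hence $\bar{\L}^{(\alpha)}_{\bfeta}\ge\bar{\L}^{(1-\alpha)}_{\bfeta}$ on $\mcM$, and the two touch at $\Finvbench$.
\begin{itemize}
\item \Cref{lemma:auxiliary-1-main-proof} then rules out $\Finvbench$ lying strictly between $\Ginv^{(\alpha)}_{\bfeta}$ and $\Ginv^{(1-\alpha)}_{\bfeta}$.
\item \Cref{lemma:auxiliary-2-main-proof} handles the two remaining cases. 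Where $\Finvbench\le\min\{\Ginv^{(\alpha)}_{\bfeta},\Ginv^{(1-\alpha)}_{\bfeta}\}$, the outperformance piece is minimised by $\Ginv^{(\alpha)}_{\bfeta}$. The underperformance piece is bounded below by $\bar{\L}^{(1-\alpha)}_{\bfeta}(\Finvbench)=\bar{\L}^{(\alpha)}_{\bfeta}(\Finvbench)\ge\bar{\L}^{(\alpha)}_{\bfeta}(\Ginv^{(\alpha)}_{\bfeta})$. The case $\Finvbench>\max\{\cdot\}$ is symmetric.
\item Continuity of $\Finvbench$ produces a switching point at which both branches coincide. This gives monotonicity and continuity of the glued function.
\item The collapse of the $\min/\max$ conditions to the single branch $\alpha^\dagger=\max\{\alpha,1-\alpha\}$ is \Cref{lemma:Ginv-monotonicity-beta}. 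Differentiating the defining equation in $\beta$ shows that $\partial_\beta\Ginv^{(\beta)}_{\bfeta}(u)$ has the sign of $\phi'(\Finvbench(u))-\phi'(\Ginv^{(\beta)}_{\bfeta}(u))$.
\end{itemize}
Your mention of ``monotonicity of $\tilde{H}_\beta^{-1}$ in $\beta$'' points in this direction, but it is never stated or used.

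Two smaller points. First, complementary slackness does not give $\bfeta^*\in(0,\infty)^2$ from binding constraints; it only gives the converse implication. Positivity comes instead from the boundary cases. Setting $\eta_2=0$ returns $\Ginvcost$, which violates the $\alpha$-BW constraint because $\ep<\epcost$. Setting $\eta_1=0$ returns $\Ginvbw$, which violates the budget because $x_0<x_0^\infty$.

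Second, your pointwise reasoning can be completed when the projection is inactive, for instance under \Cref{asm:xi-non-increasing}. The first-order condition at $x=\Finvbench(u)$ shows that $\Ginv^{(\beta)}_{\bfeta}(u)>\Finvbench(u)$ if and only if $U'\big((1-c)\Finvbench(u)\big)>\eta_1\xi(u)$, independently of $\beta$. Moreover $\partial_x\ell_u(x)$ is non-increasing in $u$, so the pointwise minimiser of your convex integrand is already non-decreasing. With a genuine isotonic projection this decoupling is lost, which is why a functional-level comparison is needed for the theorem as stated.
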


\begin{proof}
We first prove the case $\alpha \ge \frac12$ by driving the Lagrangian associated to problem \eqref{opt:main-prime}. Since we minimise the Lagrangian (for fixed Lagrange parameter) over the set of quantile function, we only consider the Lagrangian for the cases when it not equal to $+\infty$, that is for $\Ginv\in \mcM$ satisfying $\Ginv(u) \ge c\Finvbench(u)$, $u \in (0,1)$. Thus, the Lagrangian associated to problem \eqref{opt:main-prime} with Lagrange parameter $\bfeta:= (\eta_1, \eta_2)\ge 0$ is 
\begin{align}\label{eq:proof-lagrangian}
    \L(\Ginv, \bfeta )
    :&= 
    \int_0^1 \L_{\bfeta}^{(1-\alpha)} (\Ginv, u ) \Id_{\Ginv(u) \in[ c\Finvbench(u), \, \Finvbench(u))}\, \d u
    +    
    \int_0^1 \L_{\bfeta}^{(\alpha)} (\Ginv, u ) \Id_{\Ginv(u) > \Finvbench(u)}\, \d u
\,,
\end{align}
where for all $\beta \in (0,1)$
\begin{align*}
    \L_{\bfeta}^{(\beta)} (\Ginv, u ):&= - U\Big(\Ginv(u)-c \Finvbench(u)\Big) + \eta_1 \Ginv(u) \xi(u) 
    + \eta_2 \, \beta \, B_\phi \big(\Ginv(u), \, \Finvbench(u)\big)
    \,.
\end{align*} 
We want to apply \Cref{lemma:auxiliary-2-main-proof} with $h = \L_{\bfeta}^{(1-\alpha)}$ and $\ell = \L_{\bfeta}^{(\alpha)}$. Thus we first check that the assumptions of \Cref{lemma:auxiliary-2-main-proof} are satisfied. 

Note that $\L_{\bfeta}^{(\beta)}$, $\beta \in (0,1]$ are strictly convex in their first component and they satisfy 
\begin{enumerate}[label = $\roman*)$]
    \item $\L_{\bfeta}^{(\alpha)}(\Ginv, u) \ge \L_{\bfeta}^{(1-\alpha)}(\Ginv, u)$ for all $\Ginv \in \mcM$ and $u \in (0,1)$, 
    
    \item 
    $\L_{\bfeta}^{(\alpha)}(\Finvbench, u) = \L_{\bfeta}^{(1-\alpha)}(\Finvbench, u)$ for all $u\in(0,1)$,
\end{enumerate}
and clearly $c\Finvbench(u) \le \Finvbench(u)$ for all $u \in (0,1)$. Thus assumptions $i)$ and $ii)$ of \Cref{lemma:auxiliary-2-main-proof} are satisfied for $h = \L_{\bfeta}^{(1-\alpha)}$, $\ell = \L_{\bfeta}^{(\alpha)}$, and $\Ginv_1 = \Finvbench$. 

Next we define the integrals corresponding to $ \bar{h}$ and $\bar{\ell}$ in \Cref{lemma:auxiliary-2-main-proof}. That is for $\beta \in (0,1)$, we set
\begin{align*}
    \bar{\L}_{\bfeta}^{(\beta)} (\Ginv)
    :=
    \int_0^1 \L_{\bfeta}^{(\beta)} (\Ginv, u ) \, du.
\end{align*}
For fixed $\bfeta$, we calculate the minimisers of $\bar{\L}_{\bfeta}^{(\beta)}$
\begin{align}\label{eq:proof-g-beta-argmin}
\begin{split}
    \argmin_{\Ginv \in \mcM} \bar{\L}_{\bfeta}^{(\beta)}
    =
    \argmin_{\Ginv \in \mcM} 
    \int_0^1 &- U\Big(\Ginv(u)-c \Finvbench(u)\Big) + \eta_1 \Ginv(u) \xi(u) 
    \\
    & \qquad + \eta_2 \, \beta\,  \Big(\phi\big(\Ginv(u)\big) - \phi'\big(\Finvbench(u)\big) \Ginv(u)\Big)\, du\,.
\end{split}
\end{align}
As by definition of the utility it holds that $U(x) = - \infty$ whenever $x<0$, the solution to \eqref{eq:proof-g-beta-argmin} which we denote by $\Ginv^{(\beta)}$ must satisfy $\Ginv^{(\beta)}(u) \ge c\Finvbench(u)$, $u \in (0,1)$. Applying \Cref{thm:generalised-iso} to characterise $\Ginv^{(\beta)}$, we obtain
\begin{equation*}
    \Ginv^{(\beta)}
    :=
    \argmin_{\Ginv\in\mcM} \bar{\L}_{\bfeta}^{(\beta)}
    =\tilde{H}^{-1}\Big(\; \Big[ \eta_2 \, \beta\, \phi'\big(\Finvbench(u)\big) - \eta_1 \xi(u) \Big]^\uparrow\Big)\,,
\end{equation*}
where $\tilde{H}(x) := - U'\big(x - c\Finvbench(u)\big) + \eta_2\, \beta\, \phi'\big(x\big)$. 

As we have that $G^{(\beta)}(u) \ge c \Finvbench(u)$, all assumption of \Cref{lemma:auxiliary-2-main-proof} are satisfied with $\Ginv_0 = c \Finvbench$, $\Ginv_1 = \Finvbench$, $x^*_h = \Ginv^{(1-\alpha)}$, and $x^*_\ell = \Ginv^{(\alpha)}$. Thus applying \Cref{lemma:auxiliary-2-main-proof}, the minimiser of $ \L(\Ginv, \bfeta )$, defined in \eqref{eq:proof-lagrangian}, has representation 
\begin{equation*}
    \Ginv_{\bfeta}(u) =
    \begin{cases}
    \Ginv^{(\alpha)}_{\bfeta}(u) & \text{if} \qquad  \Finvbench(u) \le \min\big\{\Ginv^{(\alpha)}_{\bfeta}(u), \Ginv^{(1-\alpha)}_{\bfeta}(u)\big\}
    \\[0.5em]
     \Ginv^{(1-\alpha)}_{\bfeta}(u) & \text{if} \qquad   \Finvbench(u) > \max\big\{\Ginv^{(\alpha)}_{\bfeta^*}(u), \Ginv^{(1-\alpha)}_{\bfeta}(u)\big\}    \,.
\end{cases}
\end{equation*}
As by \Cref{asm:bench-cont}, the benchmark quantile is continuous, we obtain, again by \Cref{lemma:auxiliary-2-main-proof}, that $\Ginv_{\bfeta}\in \mcM$ and that $\Ginv_{\bfeta}(\cdot)$ is continuous.

Next, we simplify the representation of $\Ginv_\bfeta$. For this we apply \Cref{lemma:Ginv-monotonicity-beta}, and see that $\Ginv_\bfeta^{(\beta)}$ is strictly decreasing in $\beta$ on the set $\{u \in (0,1)~|~ \Finvbench(u)  < \Ginv_\bfeta^{(\beta)}\}$. Thus, since $\alpha> 1- \alpha$, we have that 
\begin{equation*}
     \Finvbench(u)\le \min\big\{\Ginv^{(\alpha)}_{\bfeta}(u), \Ginv^{(1-\alpha)}_{\bfeta}(u)\big\} = \Ginv_\bfeta^{\alpha}(u)\,.
\end{equation*}
Similarly, again by \Cref{lemma:Ginv-monotonicity-beta} it holds that $\Ginv_\bfeta^{(\beta)}$ is strictly increasing in $\beta$ on the set $\{u \in (0,1)~|~ \Finvbench(u)  > \Ginv_\bfeta^{(\beta)}\}$. Thus, 
\begin{equation*}
     \Finvbench(u)> \max\big\{\Ginv^{(\alpha)}_{\bfeta}(u), \Ginv^{(1-\alpha)}_{\bfeta}(u)\big\} = \Ginv_\bfeta^{\alpha}(u)\,.
\end{equation*}
Thus, the representation of $\Ginv_\bfeta$ simplifies to the one in the statement.

Finally, by \Cref{prop:constraints-binding} the constraints are binding, thus the optimal Lagrange multipliers $\bfeta^*$ have to be chosen to bind the constraints. Uniqueness follows by \Cref{prop:one-const-binding}.

The case $\alpha < \frac12$ follows using similar steps, by applying \Cref{lemma:auxiliary-2-main-proof-alpha-small} instead of \Cref{lemma:auxiliary-2-main-proof} with $\ell = \L_{\bfeta}^{(1-\alpha)}$ and $h = \L_{\bfeta}^{(\alpha)}$.
\end{proof}

Next, we provide three simplified examples. The first pertains to the BW divergence ($\alpha = \frac12$), the second is the symmetric and popular (squared) 2-Wasserstein distance ($\alpha = \frac12$ and $\phi(x) = x^2$), and third if in addition $c = 0$. We observe that for $\alpha = \frac12$, the optimal quantile function simplifies and the investor is indifferent between under- and outperformance. However, the investor still penalises (absolute) gains and losses differently through the choice of the Bregman generator $\phi$.

\begin{corollary}[BW divergence and Wasserstein distance]
Let $c\le \min\{\frac{x_0}{y_0}, 1\}$, $\ep_{\min} \le \ep< \epcost$, $0<x_0 < x_0^\infty$, and Assumptions \ref{asm:sdf-cont}, \ref{asm:bench-cont}, and \ref{asm:e_min} be satisfied. If a solution to optimisation problem \eqref{opt:main-prime} exists, then it is unique and the following hold: 
\begin{enumerate}[label = $\roman*)$]
    \item If $\alpha = \frac12$, then the solution to optimisation problem \eqref{opt:main-prime} is given by $ \Ginv^{\frac12}_{\bfeta^*}(u) $ defined in \eqref{eq:def-opt-sol}.
    
    \item 
    if $\alpha = \frac12$ and $\phi(x) = x^2$, which implies that the $\alpha$-BW equals to one-half the squared Wasserstein distance. Then the solution to optimisation problem \eqref{opt:main-prime} is given by
\begin{align*}
    \Ginv^{W}_{\bfeta^*}(u)
    :=
    \hat{H}^{-1}\Big(\; \Big[ \eta_2^* \, \Finvbench(u) - \eta_1^* \xi(u) \Big]^\uparrow\Big)\,,
\end{align*}
with $\hat{H}(x) := - U'\big(x - c\Finvbench(u)\big) +  \eta_2^*\,x$. 

\item If $c = 0$, that is if the investor only considers the utility to the optimal strategy's wealth (and not the difference to a proportion of the benchmark's), then case $ii)$ simplifies to 
\begin{equation*}
    \Ginv^{\frac12,0}_{\bfeta^*}(u)
   : =
    \hat{H}_{0}^{-1}\Big(\; \Big[ \eta_2^* \, \Finvbench(u) - \eta_1^* \xi(u) \Big]^\uparrow\Big).
\end{equation*}
where $\hat{H}_{0}(x) := - U'\big(x \big) +  \eta_2^*\,x$. 

\end{enumerate}
\end{corollary}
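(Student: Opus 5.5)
The plan is to derive all three statements from \Cref{Thm:main-optimal-quantile} by specialising $\alpha$, $\phi$ and $c$, and to check that the case distinction in that theorem disappears. Existence is assumed and the hypotheses are identical, so uniqueness comes directly from \Cref{Thm:main-optimal-quantile}; only the representation has to be worked out.

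For $i)$, take $\alpha=\frac12$. Then $\alpha\le\frac12$, so $\alpha^\dagger = 1-\alpha = \frac12$ and $\alpha = 1-\alpha = \frac12$. Both branches of $\Ginv_{\bfeta^*}$ in \Cref{Thm:main-optimal-quantile} are therefore the same function $\Ginv^{(\frac12)}_{\bfeta^*}$ defined in \eqref{eq:def-opt-sol}. The defining condition $\Finvbench(u)\le \Ginv^{(\frac12)}_{\bfeta^*}(u)$ only chooses between two identical expressions, so $\Ginv_{\bfeta^*}=\Ginv^{(\frac12)}_{\bfeta^*}$ on $(0,1)$. The multipliers $\bfeta^*$ are the same ones that make both constraints binding.

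For $ii)$, set $\phi(x)=x^2$, so $\phi'(x)=2x$. Then $\beta\,\phi'(x)$ with $\beta=\frac12$ equals $x$. Consequently the argument of the isotonic projection in \eqref{eq:def-opt-sol} becomes $\eta_2^*\Finvbench(u)-\eta_1^*\xi(u)$. Likewise $\tilde H_{1/2}(x)=-U'(x-c\Finvbench(u))+\eta_2^* x=\hat H(x)$, which gives the stated formula. We should also confirm that $\BW^{1/2}$ with $B_\phi(z_1,z_2)=(z_1-z_2)^2$ equals one half the squared $2$-Wasserstein distance. This is immediate from \eqref{eq:DWB}: the weight is identically $\frac12$, and $\int_0^1(\Ginv-\Finvbench)^2\,\d u$ is the squared Wasserstein distance.

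For $iii)$, substitute $c=0$ into $\hat H$, which gives $\hat H_0$. The hypothesis $c\le\min\{x_0/y_0,1\}$ holds trivially for $c=0$.

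There is no real obstacle. The only point to handle with care is $i)$: \Cref{Thm:main-optimal-quantile} uses the convention $\alpha^\dagger=1-\alpha$ at $\alpha=\frac12$, and its proof for $\alpha\le\frac12$ goes through \Cref{lemma:auxiliary-2-main-proof-alpha-small}. When $h=\ell$, the auxiliary lemma should trivially return the common minimiser $x^*_h=x^*_\ell$. So the representation holds without needing any strict monotonicity in $\beta$ from \Cref{lemma:Ginv-monotonicity-beta}.
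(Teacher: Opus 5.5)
Your proposal is correct and takes the same approach as the paper. The paper gives no separate proof and treats the corollary as a direct specialisation of \Cref{Thm:main-optimal-quantile}: setting $\alpha=1-\alpha=\alpha^\dagger=\frac12$ collapses the two branches, and $\phi'(x)=2x$ with $\beta=\frac12$ turns $\tilde H_{1/2}$ into $\hat H$. One small, immaterial slip: at $\alpha=\frac12$ the theorem's proof falls in its $\alpha\ge\frac12$ branch, which uses \Cref{lemma:auxiliary-2-main-proof} rather than \Cref{lemma:auxiliary-2-main-proof-alpha-small}, but with $h=\ell$ both return the same common minimiser.
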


We compare our solution to the works of \cite{Pesenti2024WP}, who consider an investor maximising expected utility subject to a budget and BW constraint. Here, if $c = 0$ and $\alpha = \frac12$, then the agents considers only the utility of their own strategy and the BW divergence, however, because of the comonotonicity constraint, our results are different from those in \cite{Pesenti2024WP}. Specifically, without the comonotonicity constraint, the optimal terminal wealth is countermonotone to the stochastic discount factor. Thus the optimal quantile function of \cite{Pesenti2024WP} arises from \eqref{eq:def-opt-sol} with $c = 0$, $\alpha = \frac12$, $\ep$ rescaled to $2\ep$, and when $\xi(\cdot)$ is replaced by $\Finvsig(1 - \cdot)$. As $\Finvsig$ is a quantile function the isotonic projection is not required in \cite{Pesenti2024WP}.

From the above corollary we observe, that the $\alpha$-BW divergence with $\alpha\neq\frac12$ gives raise to treating outperformance and underperformance relative to the benchmark differently. Indeed if $\alpha = \frac12$, then by case $i)$, we observe that there is no distinction between outperforming or underperforming the benchmark. However, absolute gains are still treated in an asymmetric way through the Bregman generator $\phi$. If additionally $\phi(x) = x^2$, in which case the $\alpha$-BW divergence equals one-half the squared Wasserstein distance, that is case $ii)$ in the above corollary, then gains and losses are treated symmetrically.

The next results state that the optimal quantile function is continuous with respect to convergence of the Lagrange multipliers and we recover the boundary cases as limiting cases. Specifically, case $i)$ states that the solution to optimisation problem \eqref{opt:main-prime} converges (for $\eta_1$ going to 0) pointwise to the solution of optimisation problem \eqref{opt:x-infty}. Moreover, case $ii)$ states that the solution to optimisation problem \eqref{opt:main-prime} converges (for $\eta_2$ going to 0) pointwise to the solution of optimisation problem \eqref{opt:ep-infty}.

\begin{proposition}
    Let $c\le \min\{\frac{x_0}{y_0}, 1\}$, $\ep \ge \ep_{\min}$ and let Assumptions \ref{asm:sdf-cont}, \ref{asm:bench-cont}, and \ref{asm:e_min} be satisfied and let a solution to optimisation problem \eqref{opt:main-prime} exists. Then, the following holds

    \begin{enumerate}[label = $\roman*)$]
        \item $\displaystyle\lim_{\eta_1 \searrow 0} \Ginv_{\bfeta^*}(u) =\Ginvbw(u) $, for all $u \in (0,1)$, where $\Ginvbw$ is given in \eqref{eq:G-infinite-wealth}. 

        \item $\displaystyle\lim_{\eta_2 \searrow 0} \Ginv_{\bfeta^*}(u) =\Ginvcost(u) $, for all $u \in (0,1)$, where $\Ginvcost$ is given in \eqref{eq:Ginv-ep-infty}. 
    \end{enumerate}
\end{proposition}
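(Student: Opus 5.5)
The statement concerns the family $\Ginv_{\bfeta}$ defined from $\Ginv^{(\beta)}_{\bfeta}$ in \eqref{eq:def-opt-sol}, viewed as a function of the multipliers $\bfeta=(\eta_1,\eta_2)$. It asserts pointwise convergence to the boundary solutions as one multiplier vanishes. My plan is to work at the level of the pointwise characterisation: for fixed $u$, $\Ginv^{(\beta)}_{\bfeta}(u)$ is the unique root $x$ of
\begin{equation*}
\tilde H_\beta(x) = \Big[\eta_2\beta\phi'\big(\Finvbench(\cdot)\big)-\eta_1\xi(\cdot)\Big]^\uparrow(u).
\end{equation*}
Since $\tilde H_\beta$ is strictly increasing in $x$ and jointly continuous in $(x,\eta_2)$, continuity of the root in $\bfeta$ reduces to continuity of the right-hand side in $\bfeta$. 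The isotonic projection is $1$-Lipschitz in $L^2$ (it is a projection onto a closed convex cone). It also commutes with adding monotone functions and with positive scaling (\Cref{prop:isotonic-properties}). Hence the right-hand side converges in $L^2$, and at continuity points pointwise, as $\eta_1\searrow0$ or $\eta_2\searrow0$.

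For $i)$, I set $\eta_1=0$. Since $\phi'\circ\Finvbench$ is non-decreasing, its isotonic projection is itself. The equation then becomes
\begin{equation*}
-U'\big(x-c\Finvbench(u)\big)+\eta_2\beta\phi'(x)=\eta_2\beta\phi'\big(\Finvbench(u)\big),
\end{equation*}
which is exactly the first-order condition \eqref{eq:ginv-lambda} with $\lambda\beta=\eta_2\beta$. By the argument in \Cref{thm:infinite-wealth}, the root lies in $\{x\ge\Finvbench(u)\}$, so the case selection in \Cref{Thm:main-optimal-quantile} picks $\beta=\alpha$. The limit is therefore $\Ginv^{(\alpha)}_{(0,\eta_2)}$. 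It remains to identify the limiting $\eta_2$ with $\lambw$. Here I would argue that $\eta_2^*$ stays bounded away from $0$ and $\infty$ along the limit and that the $\alpha$-BW constraint remains binding. Passing to the limit in $\DBW(\Ginv_{\bfeta^*},\Finvbench)=\ep$ by dominated convergence, and using the uniqueness of $\lambw$ from \Cref{thm:infinite-wealth}, then gives $\eta_2^*\to\lambw$.

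For $ii)$, I set $\eta_2\to0$. Then $\tilde H_\beta(x)\to -U'(x-c\Finvbench(u))$, and the projected term tends to $[-\eta_1\xi]^\uparrow=-\eta_1\xi^\downarrow$. The limiting root therefore solves $U'(x-c\Finvbench(u))=\eta_1\xi(u)^\downarrow$, which is \eqref{eq:Ginv-ep-infty}. Both branches ($\beta=\alpha$ and $\beta=1-\alpha$) share this limit, because $\beta$ enters only through $\eta_2\beta$, so the case split becomes irrelevant. The multiplier $\eta_1^*$ converges to $\lamcost$ by passing to the limit in the binding budget constraint and invoking the uniqueness in \Cref{thm:ep-infty}.

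I expect the main obstacle to be the interplay between the limit of the multipliers and the case selection. The selection involves the indicator $\Finvbench(u)\le\Ginv^{(\alpha^\dagger)}_{\bfeta}(u)$, which might flip in the limit. To handle it, I would use that $\Ginv_{\bfeta}$ is continuous and that the two branches coincide at the switching point $\Ginv=\Finvbench$ (property $ii)$ of the Lagrangians). This means the piecewise definition is itself a continuous function of the root data, so no discontinuity arises. Justifying the dominated convergence will require a uniform integrable bound. For that I would bound $\Ginv_{\bfeta}$ above by the $\eta_1=0$ solution and below by $c\Finvbench$, using the monotonicity of $\Ginv^{(\beta)}_{\bfeta}$ in $\eta_1$ from \Cref{prop:isotonic-properties}.
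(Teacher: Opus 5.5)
Your proposal follows the paper's own route. The paper evaluates $\Ginv^{(\beta)}_{\bfeta}$ at $\eta_1=0$ and recognises \eqref{eq:ginv-lambda} with $\lambda=\eta_2$. It then uses $\Ginv^{(\beta)}_{\lambda}\ge\Finvbench$ from the proof of \Cref{thm:infinite-wealth} so that only the $\alpha$-branch applies. For $ii)$ it evaluates at $\eta_2=0$ and notes the result no longer depends on $\beta$. Your $L^2$-Lipschitz continuity argument and the identification of the limiting multipliers with $\lambw$ and $\lamcost$ are extra rigour that the paper leaves implicit.

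One small correction: the isotonic projection does not commute with adding non-decreasing functions, and \Cref{prop:isotonic-properties} does not claim this. You never actually need it, since you only use that the projection fixes non-decreasing functions, is positively homogeneous, and is $1$-Lipschitz.
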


\begin{proof}
Case $i)$ follows by evaluating $\Ginv_{\bfeta}^{(\beta)}$ at $\eta_1 = 0$, which gives 
$\Ginv^{(\beta)}_{(0, \eta_2)}(u) = \Ginv^{(\beta)}_{\lambda}$, where $\Ginv^{(\beta)}_{\lambda}$ is defined in \eqref{eq:ginv-lambda}. Moreover, from the proof of \Cref{thm:infinite-wealth} it follows that $\Ginv^{(\beta)}_{\lambda}(u) \ge\Finvbench(u)$, for all $u \in (0,1)$ and $\beta \in (0,1)$, thus only the first case of $\Ginv_{\bfeta^*}$ applies.

Case $ii)$ follows by evaluating $\Ginv_{\bfeta}^{(\alpha)}$ at $\eta_2 = 0$ and noting that it is independent of $\alpha$. 
\end{proof}

\subsection{Existence of a solution}\label{sec:existence}

This subsection provides existence of a solution to optimisation problem \eqref{opt:main-prime}, when both constraints are binding. For this we require additional assumptions. One is that $\xi$ is non-increasing and thus the isotonic projection is not needed. 

\begin{assumption}\label{asm:xi-non-increasing}
    The function $\xi$ is non-increasing.
\end{assumption}

With this assumption at hand, we recall the representation of the optimal quantile function  $\Ginv_\bfeta$ as defined in \Cref{Thm:main-optimal-quantile} but for general $\bfeta \in [0, \infty)^2$, that is
\begin{equation}\label{eq:Ginv-eta}
    \Ginv_{\bfeta}(u) :=
    \begin{cases}
    \Ginv^{(\alpha)}_{\bfeta}(u) & \text{if} \qquad  \Finvbench(u) \le \Ginv^{(\alpha^\dagger)}_{\bfeta}(u)\,,
    \\[0.5em]
     \Ginv^{(1-\alpha)}_{\bfeta}(u) & \text{if} \qquad   \Finvbench(u) > \Ginv^{(\alpha^\dagger)}_{\bfeta}(u)
    \,,
\end{cases}
\end{equation}
 where 
\begin{equation}\label{eq:Ginv-beta}
    \Ginv^{(\beta)}_{\bfeta}(u)
    :=
    \tilde{H}_{\beta}^{-1}\Big(\; \Big[ \eta_2 \, \beta\, \phi'\big(\Finvbench(u)\big) - \eta_1 \xi(u) \Big]\Big)\,, \quad \text{for} \quad \beta \in (0,1)\,
\end{equation}
with $\tilde{H}_{\beta}$ defined in \eqref{eq:def-opt-sol}. By \Cref{asm:xi-non-increasing}, $\Ginv^{(\beta)}_{\bfeta}$ given in \eqref{eq:Ginv-beta} is already non-decreasing, thus the isotonic projection is not needed. 

To prove existence of a solution to optimisation problem \eqref{opt:main-prime} when both constraints are binding, we show that the optimal Lagrange multipliers are strictly positive. For ease of notation, we define the constraints of $\Ginv_\bfeta$ as functions of the Lagrange multipliers, that is
\begin{align*}
\cost(\bfeta)&:=  \int_0^1 \Ginv_{\bfeta }(u) \Finvsig(1-u)\d u
\quad \text{and} 
    \\[0.5em]
   \bw(\bfeta)&:=\; \DBW\left(\Ginv_{\bfeta}, \;\Finvbench\right)\,. 
\end{align*}

For existence we furthermore require integrability, in particular that the constraints and the expected utility of the candidate solutions are finite. From the proof of \Cref{Thm:main-optimal-quantile} we see that $\eta_1$ is the Lagrange multiplier for the budget constraint and $\eta_2$ for the $\alpha$-BW constraint.

\begin{assumption}[Integrability]\label{asm:integrability}
Let $c\le \min\{\frac{x_0}{y_0}, 1\}$. Then for all $\bfeta \in(0,\lamcost)\times(0,\lambw]$, it holds that $\cost(\bfeta)< +\infty$, $\bw(\bfeta) < +\infty$, and 
\begin{equation*}
    \int_0^1 U\big(\Ginv_\bfeta(u) - c\Finvbench(u)\big)\, du < +\infty\,,
\end{equation*}
where $\Ginv_\bfeta$ is given in \eqref{eq:Ginv-eta}.
\end{assumption}

Due to the $\alpha$-BW divergence, the constraint functions $\bw$ and $\cost$ are not monotone in both arguments. Thus, we further require that the Lagrange multipliers $(\eta_1,\lambw)$, for all $\eta_1 \in (0, \lamcost)$, is either not a feasible candidate or its $\alpha$-BW divergence is equal to $\ep_{\min}$. Moreover, we need that the Lagrange multipliers $(\lamcost,\eta_2)$, for all $\eta_2 \in (0, \lambw]$, does not bind the budget constraint.

\begin{assumption}\label{asm:admissibility}
It holds that $\bw(\eta_1, \lambw) \le \ep_{\min}$ for all $\eta_1 \in (0, \lamcost)$, and that $\cost(\lamcost, \eta_2)< x_0$ for all $\eta_2 \in (0,\lambw]$.
\end{assumption}

\Cref{asm:admissibility} is weaker than requiring that $\cost$ and $\bw$ are strictly decreasing in both arguments. Simple examples when \Cref{asm:admissibility} holds are $c=1$ and $\alpha=1/2$ (in which case the constraint functions become monotone). In \Cref{sec:ex}, we numerically verify that all assumptions are satisfied.

Next, we define the smallest value that $\eta_1$ can attain 
\begin{equation}
    \etaOneMin:= \sup\big\{ \eta_1 >0~|~ \cost(\eta_1, \lambw)\ge x_0\big\}\, 
    \label{eq:eta1-min}
\end{equation}
and we are ready to state the result on existence.
\begin{theorem}\label{thm:existence}
Let $c\le \min\{\frac{x_0}{y_0}, 1\}$, $\ep_{\min} \le \ep< \epcost$, and $0<x_0 < x_0^\infty$. Moreover let Assumptions \ref{asm:sdf-cont}, \ref{asm:bench-cont}, \ref{asm:e_min}, \ref{asm:xi-non-increasing}, \ref{asm:integrability}, and \ref{asm:admissibility} be fulfilled. 

Then there exists a unique solution to optimisation problem \eqref{opt:main-prime}. Moreover, both constraints, the budget and the $\alpha$-BW divergence, are binding and the optimal Lagrange multipliers satisfy $(\eta_1^*, \eta_2^*) \in \big[\etaOneMin, \lamcost\big) \times \big(0, \lambw\big]$. 
\end{theorem}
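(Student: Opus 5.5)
Uniqueness is already given by \Cref{Thm:main-optimal-quantile}, since a solution must be of the form $\Ginv_{\bfeta^*}$ with both constraints binding (\Cref{prop:constraints-binding}). The task is therefore existence. I would argue it in three steps. First, show that for Lagrange multipliers $\bfeta^*$ solving the two binding equations $\cost(\bfeta^*)=x_0$ and $\bw(\bfeta^*)=\ep$, the function $\Ginv_{\bfeta^*}$ solves \eqref{opt:main-prime}. Second, produce such $\bfeta^*$ in the rectangle $[\etaOneMin,\lamcost)\times(0,\lambw]$ by a continuity and intermediate value argument. Third, confirm optimality through the saddle point.

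\textbf{Optimality of $\Ginv_{\bfeta^*}$.} From the proof of \Cref{Thm:main-optimal-quantile}, $\Ginv_{\bfeta}$ minimises the Lagrangian $\L(\cdot,\bfeta)$ over $\mcM$ for every fixed $\bfeta\in(0,\infty)^2$. Under \Cref{asm:xi-non-increasing}, this holds without isotonic projection, and $\Ginv_\bfeta\in\mcM$ is continuous. By \Cref{asm:integrability}, all terms are finite for $\bfeta$ in the rectangle. Hence, if $\bfeta^*$ binds both constraints, the usual weak-duality argument applies. For any feasible $\Ginv$,
\[
-\int_0^1 U(\Ginv-c\Finvbench)\,\d u\ \ge\ \L(\Ginv,\bfeta^*)-\eta_1^*x_0-\eta_2^*\ep\ \ge\ \L(\Ginv_{\bfeta^*},\bfeta^*)-\eta_1^*x_0-\eta_2^*\ep\ =\ -\int_0^1 U(\Ginv_{\bfeta^*}-c\Finvbench)\,\d u .
\]
This gives optimality, and strict concavity gives uniqueness.

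\textbf{Existence of multipliers.} I would first prove that $\bfeta\mapsto \Ginv_\bfeta(u)$ is continuous for each $u$. This follows because $\tilde H_\beta$ is strictly increasing and continuous in $x$ and jointly continuous in the parameters, and the switching set $\{\Finvbench\le \Ginv^{(\alpha^\dagger)}_\bfeta\}$ is glued continuously by \Cref{lemma:auxiliary-2-main-proof}. Dominated convergence, with dominating bounds from \Cref{asm:integrability} (monotone bounds at the rectangle's corners), then gives continuity of $\cost$ and $\bw$. Next, for each $\eta_2\in(0,\lambw]$, I would look at $\eta_1\mapsto\cost(\eta_1,\eta_2)$.
\begin{itemize}
\item As $\eta_1\downarrow 0$, the proposition on limiting cases gives $\Ginv_\bfeta\to\Ginvbw$-type solutions with cost at least $x_0$. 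At $\eta_2=\lambw$ this is exactly $x_0^\infty>x_0$.
\item At $\eta_1=\lamcost$, \Cref{asm:admissibility} gives $\cost<x_0$.
\end{itemize}
So the set $\{\eta_1:\cost(\eta_1,\eta_2)=x_0\}$ is nonempty, and I would select a branch $\eta_1(\eta_2)$, say the largest root, which lies in $[\etaOneMin,\lamcost)$ by the definition \eqref{eq:eta1-min}. I would then study $g(\eta_2):=\bw(\eta_1(\eta_2),\eta_2)$ on $(0,\lambw]$.
\begin{itemize}
\item As $\eta_2\downarrow0$, the limiting-case proposition ii) sends $\Ginv_\bfeta$ to $\Ginvcost$, so $g\to\epcost>\ep$.
\item At $\eta_2=\lambw$, \Cref{asm:admissibility} gives $g(\lambw)\le\ep_{\min}\le\ep$.
\end{itemize}
An intermediate value argument then yields $\eta_2^*$ with $g(\eta_2^*)=\ep$, with $\eta_1^*=\eta_1(\eta_2^*)\in[\etaOneMin,\lamcost)$.

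\textbf{Main obstacle.} The hard step is that the root branch $\eta_1(\eta_2)$ need not be continuous, because $\cost$ is not monotone in $\eta_1$ once $\alpha\ne\frac12$. To get around this, I would avoid selecting a branch. Instead, I would consider the zero set $Z=\{\bfeta:\cost(\bfeta)=x_0\}$ in the closed rectangle and apply a connectedness (Poincaré--Miranda type) argument. On the side $\eta_1=\lamcost$ we have $\cost<x_0$. Near $\eta_1=0$ we have $\cost>x_0$. Therefore $Z$ contains a connected component that meets both the side $\eta_2\to0$ and the side $\eta_2=\lambw$. The function $\bw$ is continuous on this component and takes values near $\epcost>\ep$ at one end and at most $\ep$ at the other, so it attains $\ep$ somewhere on it. Handling the open edge $\eta_2\to0$ needs the limit $\Ginv_\bfeta\to\Ginvcost$ to be uniform enough in $\eta_1$. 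I would secure this by compactifying in $\eta_2$, extending $\bw$ and $\cost$ continuously to $\eta_2=0$ through the limiting-case proposition.
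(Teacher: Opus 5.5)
Your overall scheme (find $\bfeta^*$ with $\cost(\bfeta^*)=x_0$ and $\bw(\bfeta^*)=\ep$ by continuity plus an intermediate value argument, then conclude by weak duality) is sound. Your explicit duality chain is more careful than the paper, which only asserts that positive binding multipliers yield a solution.

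However, the ``main obstacle'' you identify does not exist. $\cost$ \emph{is} strictly decreasing in $\eta_1$ for fixed $\eta_2$ (\Cref{lemma:c-b-decreasing} $i)$, via \Cref{Le:monoG}). Pointwise, $\Ginv_\bfeta(u)$ is the unique zero of the continuous, strictly increasing map $x\mapsto -U'\big(x-c\Finvbench(u)\big)+\eta_1\xi(u)+\eta_2\,|\Id_{x\le \Finvbench(u)}-\alpha|\,\big(\phi'(x)-\phi'(\Finvbench(u))\big)$. Raising $\eta_1$ shifts this map up because $\xi$ is positive. The non-monotonicities are of $\cost$ in $\eta_2$ and of $\bw$ in $\eta_1$.

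Consequently your first route already closes:
\begin{enumerate}
\item For each $\eta_2\in(0,\lambw]$ the root $\eta_1(\eta_2)$ is unique and depends continuously on $\eta_2$ by a bracketing argument.
\item At the top edge, $\eta_1(\lambw)=\etaOneMin\in(0,\lamcost)$.
\item The map $g(\eta_2)=\bw(\eta_1(\eta_2),\eta_2)$ goes from $\epcost>\ep$ as $\eta_2\downarrow 0$ to $\bw(\etaOneMin,\lambw)\le\ep_{\min}\le\ep$ by \Cref{asm:admissibility}.
\end{enumerate}
For $\eta_2<\lambw$ you need the strict bound $\cost(0,\eta_2)\ge x_0^\infty>x_0$, not just ``at least $x_0$''. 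It holds because at $\eta_1=0$ the candidate lies above $\Finvbench$ and is decreasing in $\eta_2$. The connectedness argument, with its uniformity issues near $\eta_1=0$, is unnecessary.

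The paper instead parametrises by $\eta_1\in[\etaOneMin,\lamcost)$. It takes $\ell(\eta_1)$ as the unique root of $\bw(\eta_1,\cdot)=\ep$ and $k(\eta_1)$ as the largest root in $[0,\lambw]$ of $\cost(\eta_1,\cdot)=x_0$. It then applies the intermediate value theorem to $k-\ell$ using the endpoint values of \Cref{lemma:crossing}.

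\begin{itemize}
\item \emph{Paper's route.} It yields $\eta_1^*\ge\etaOneMin$ by construction. But it needs continuity of $k$, which is delicate precisely because $\cost$ is not monotone in $\eta_2$; \Cref{lemma:inplicit-fct} states this continuity without argument.
\item \emph{Your route.} You solve each equation only in its monotone variable, so the continuity you need is automatic.
\end{itemize}

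The price of your route is the range claim. Your assertion that $\eta_1(\eta_2)\in[\etaOneMin,\lamcost)$ ``by the definition'' of $\etaOneMin$ is unjustified for $\eta_2<\lambw$. $\etaOneMin$ is defined only along $\eta_2=\lambw$, and $\cost(\eta_1,\cdot)$ may increase or decrease as $\eta_2$ moves. As written you obtain $(\eta_1^*,\eta_2^*)\in(0,\lamcost)\times(0,\lambw]$, not the lower bound $\eta_1^*\ge\etaOneMin$ asserted in the statement.
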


The proof of Theorem \ref{thm:existence} is split into a sequence of technical lemmas. First, we show that the constraints are continuous and monotone (in one of their components). Second, we derive the range of the Lagrange multipliers and third, we show the existence of a fixed point.

\begin{lemma}\label{lemma:c-b-decreasing}
Let the assumptions in \Cref{thm:existence} be satisfied. Then the functions $\cost(\bfeta)$ and $\bw (\bfeta)$ are continuous in both $\eta_1$ and $\eta_2$. Moreover,
\begin{enumerate}[label = $\roman*)$]
    \item For fixed $\eta_2>0$, $\cost(\bfeta)$ is strictly decreasing in $\eta_1$.
    
    \item For fixed $\eta_1>0$, $\bw (\bfeta)$ is strictly decreasing in $\eta_2$.

\end{enumerate}
\end{lemma}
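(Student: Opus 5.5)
The plan is to work pointwise in $u$ through the first-order condition defining $\Ginv^{(\beta)}_{\bfeta}$, and then to integrate. By \Cref{asm:xi-non-increasing}, no isotonic projection is needed. For fixed $u$, $\Ginv^{(\beta)}_{\bfeta}(u)$ is the unique root $x > c\Finvbench(u)$ of
\begin{equation*}
    \eta_2\beta\big(\phi'(x)-\phi'(\Finvbench(u))\big) = U'\big(x-c\Finvbench(u)\big) - \eta_1\xi(u)\,.
\end{equation*}
The map $x\mapsto \tilde H_\beta(x)$ is continuous and strictly increasing, because $-U'$ and $\phi'$ are strictly increasing; it is also jointly continuous in $(x,\bfeta)$. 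An implicit-function / monotone-inverse argument then gives that $\bfeta\mapsto \Ginv^{(\beta)}_{\bfeta}(u)$ is continuous for every $u$ and $\beta$.

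The first key step handles the case split in \eqref{eq:Ginv-eta}. Evaluating the equation at $x=\Finvbench(u)$ shows that $\Ginv^{(\beta)}_{\bfeta}(u)\gtrless\Finvbench(u)$ if and only if $U'\big((1-c)\Finvbench(u)\big)\gtrless\eta_1\xi(u)$. This condition does not depend on $\beta$ or on $\eta_2$. Hence the branch of $\Ginv_{\bfeta}(u)$ (outperformance versus underperformance) is determined by the sign of $U'((1-c)\Finvbench(u))-\eta_1\xi(u)$. On the switching set both branches equal $\Finvbench(u)$. Consequently $\bfeta\mapsto\Ginv_{\bfeta}(u)$ is continuous pointwise, since it glues two continuous functions that agree on the interface.

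For i), differentiate the first-order condition implicitly in $\eta_1$, or argue by monotone comparison. Increasing $\eta_1$ lowers the right-hand side strictly because $\xi>0$, so each branch is strictly decreasing in $\eta_1$. A continuous function that is piecewise strictly decreasing with continuous gluing is strictly decreasing. Therefore $\cost(\bfeta)$ is strictly decreasing, as $\xi>0$ on $(0,1)$.

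For ii), rewrite the equation as
\begin{equation*}
    \phi'(x)-\phi'(\Finvbench(u)) = \frac{U'(x-c\Finvbench(u))-\eta_1\xi(u)}{\eta_2\beta}\,.
\end{equation*}
The sign of $x-\Finvbench(u)$ is fixed independently of $\eta_2$, so the weight $|\Id_{x\le \Finvbench(u)}-\alpha|$ is also fixed. As $\eta_2$ increases, a comparison of the two sides shows that the root moves strictly toward $\Finvbench(u)$ wherever $x\neq\Finvbench(u)$. Since $B_\phi(\cdot,\Finvbench(u))$ is strictly decreasing on $(-\infty,\Finvbench(u)]$ and strictly increasing on $[\Finvbench(u),\infty)$, the integrand of \eqref{eq:DWB} decreases pointwise, strictly on $\{u:\Ginv_{\bfeta}(u)\neq\Finvbench(u)\}$. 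This set has positive measure. If instead $\Ginv_{\bfeta}=\Finvbench$ a.e., then $U'((1-c)\Finvbench)=\eta_1\xi$ a.e., a degenerate case that forces $\bw\equiv 0$ for every $\eta_2$; it is incompatible with the ranges under consideration and I would exclude it explicitly. This yields ii).

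The main obstacle is upgrading pointwise continuity to continuity of the integrals $\cost$ and $\bw$. I would use dominated convergence on a compact neighbourhood $[a_1,b_1]\times[a_2,b_2]\subset(0,\lamcost)\times(0,\lambw]$. Monotonicity in $\eta_1$, together with the fact that increasing $\eta_2$ pulls toward $\Finvbench$, sandwiches $\Ginv_{\bfeta}(u)$ between $\Ginv_{(b_1,a_2)}(u)$ and $\Ginv_{(a_1,a_2)}(u)$ on the outperformance side, and between $\Ginv_{(b_1,a_2)}(u)$ and $\Finvbench(u)$ on the other side. \Cref{asm:integrability} gives integrability of these envelopes against $\xi$ and of their $\alpha$-BW integrands, which supplies the dominating functions. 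Continuity of $\phi$ and $\phi'$ then completes the argument for $\bw$.
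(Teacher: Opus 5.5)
Your proof is correct and is essentially the paper's argument: pointwise comparison in the first-order condition defining $\Ginv^{(\beta)}_{\bfeta}(u)$ (the content of \Cref{Le:monoblock} and \Cref{Le:monoG}), then linearity of $\cost$ and the monotonicity of $B_\phi(\cdot,\Finvbench(u))$ on either side of $\Finvbench(u)$ for $\bw$. Your observation that the branch is fixed by the sign of $U'\big((1-c)\Finvbench(u)\big)-\eta_1\xi(u)$, independently of $\eta_2$ and $\beta$, your dominated-convergence step, and your flagging of the degenerate case $\bw(\eta_1,\cdot)\equiv 0$ supply justifications the paper leaves implicit; one small fix is that on the outperformance side the lower envelope is $\Finvbench(u)$, not $\Ginv_{(b_1,a_2)}(u)$ (which exceeds $\Ginv_{(b_1,b_2)}(u)$ wherever $u$ is also in the outperformance region at $b_1$), but with $\Finvbench$ in its place (integrable against $\xi$ since $y_0<+\infty$) the domination goes through unchanged.
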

\begin{proof}
Recall that the SDF $\varsigma$ and the benchmark quantile $\Finvbench$ are continuous. Thus $\Ginv^{(\beta)}_{\bfeta }$ defined in \eqref{eq:Ginv-beta} is continuous in $\bfeta$ for all $\beta \in (0,1)$. This implies that both $\cost(\bfeta)$ and $\bw (\bfeta)$ are continuous in $\eta_1$ and $\eta_2$. 

For $i)$ let $\eta_2>0$ be fixed. As the budget $\cost$ is linear in $\Ginv_\bfeta$ and as by \Cref{Le:monoG}  $\Ginv_\bfeta$ is strictly decreasing in $\eta_1$, we obtain that $\cost$ is also strictly decreasing in $\eta_1$. 

For case $ii)$ let $\eta_1>0$ be fixed. Moreover, note that, for fixed $y$, the Bregman divergence $B_\phi(x,y)$ is strictly increasing in $x$ on the interval $[y, +\infty$), and strictly decreasing in $x$  on the interval $[- \infty, y)$. Moreover, by \Cref{Le:monoG} we have that $\Ginv_{\bfeta}(u)$ is strictly decreasing in $\eta_2$ on $\{u\in[0,1]~|~ \Ginv_{\bfeta}(u)\ge \Finvbench(u)\}$ and strictly increasing on $\{u\in[0,1]~|~ c\Finvbench(u)< \Ginv_{\bfeta}(u)<\Finvbench(u)\}$. Combining these two facts implies that $\bw$ is decreasing in $\eta_2$.
\end{proof}

Next we recall the boundary cases discussed in \Cref{sec:boundary-cases} and that $\eta_1$ is the Lagrange multiplier for the budget and $\eta_2$ the Lagrange multiplier for the $\alpha$-BW constraint. For the case when $\ep \ge\epcost$, that is the $\alpha$-BW constraint is not binding and optimisation problem \eqref{opt:main-prime} reduces to \eqref{opt:ep-infty}, it must hold that $\eta_2 =  0$. Moreover, by \Cref{cor:sol-P-inv-ep-large}, we have that $\bw(\lamcost, 0) = \epcost$ and $(\lamcost, 0)$ are the optimal Lagrange multiplier for optimisation problem \eqref{opt:main-prime} with $\ep \ge\epcost$. Furthermore, for the case when $x_0>x_0^\infty$, that is when the budget constraint is not binding and optimisation problem \eqref{opt:main-prime} reduces to \eqref{opt:x-infty}, it must hold that $\eta_1 = 0$. Then by \Cref{cor:sol-xo-infty} we have that $\cost(0, \lambw) = \xbw$ and $(0, \lambw)$ are the optimal Lagrange multipliers for optimisation problem \eqref{opt:main-prime} with $x_0>x_0^\infty$. Moreover, since by \Cref{prop:constraints-binding} both constraints are binding, the Lagrange multipliers for optimisation problem \eqref{opt:main-prime} must be strictly positive.

Next, for each $\eta_1\in[ \etaOneMin, \lamcost)$, we define
\begin{subequations}\label{eq:eta-min-max}
    \begin{align}
    \eta_2^{\min}(\eta_1):&= \sup\big\{ \eta_2 > 0~|~ \bw(\eta_1, \eta_2)\ge \epcost\big\}
     \quad \text{and}
    \label{eq:eta1-min}
    \\[0.5em]
    \eta_2^{\max}(\eta_1):&= \inf\; \big\{ \eta_2 >0~|~ \bw(\eta_1, \eta_2)\le \ep_{\min}\big\}\,,
    \label{eq:eta2-max}
    \end{align}
where we use the convention that $\sup \emptyset := 0$ and $\inf\emptyset := + \infty$. 
\end{subequations}
Thus, for each $\eta_1 \in [ \etaOneMin, \lamcost)$, the range of $\eta_2$ as a function of $\eta_1$ is given by $[\eta_2^{\min}(\eta_1), \eta_2^{\min}(\eta_1)]$.

\begin{lemma}\label{lemma:multiplier-range}
Let the assumptions in \Cref{thm:existence} be satisfied. Then it holds that $0< \etaOneMin \le \lamcost$ and $\etaTwoMin(\eta_1) < \etaTwoMax(\eta_1)$, for all $\eta_1 \in [\etaOneMin, \lamcost)$.
\end{lemma}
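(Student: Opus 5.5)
The plan is to read both inequalities off the monotonicity and continuity of Lemma~\ref{lemma:c-b-decreasing}, anchored at the boundary values supplied by the boundary cases (\Cref{thm:ep-infty}, \Cref{thm:infinite-wealth}) and by \Cref{asm:admissibility}. The two claims are handled separately.

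\textbf{Claim $0<\etaOneMin\le\lamcost$.} Consider $g(\eta_1):=\cost(\eta_1,\lambw)$ for $\eta_1\ge 0$.
\begin{enumerate}[label=$\roman*)$]
\item \emph{Left endpoint.} At $\eta_1=0$, $\Ginv_{(0,\lambw)}$ coincides with the infinite-wealth solution $\Ginvbw$; this is case $i)$ of the limiting proposition, where only the outperformance branch is active. Hence $g(0)=\xbw>x_0$.
\item \emph{Lower bound.} By continuity of $\cost$ (Lemma~\ref{lemma:c-b-decreasing}), $g(\eta_1)>x_0$ on a neighbourhood of $0$. So the set in \eqref{eq:eta1-min} is nonempty, and its supremum is strictly positive.
\item \emph{Upper bound.} By \Cref{asm:admissibility}, $\cost(\lamcost,\lambw)<x_0$. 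Because $g$ is strictly decreasing in $\eta_1$ (Lemma~\ref{lemma:c-b-decreasing}~$i)$), $g(\eta_1)<x_0$ for all $\eta_1\ge\lamcost$. Hence every $\eta_1$ in the set satisfies $\eta_1<\lamcost$, and so $\etaOneMin\le\lamcost$. Continuity further gives $g(\etaOneMin)=x_0$.
\end{enumerate}

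\textbf{Claim $\etaTwoMin(\eta_1)<\etaTwoMax(\eta_1)$.} Fix $\eta_1\in[\etaOneMin,\lamcost)$ and set $h(\eta_2):=\bw(\eta_1,\eta_2)$. By Lemma~\ref{lemma:c-b-decreasing}~$ii)$, $h$ is continuous and strictly decreasing. Therefore the superlevel set $\{h\ge\epcost\}$ is an interval $(0,\etaTwoMin]$ (or empty, giving $\etaTwoMin=0$), and the sublevel set $\{h\le\ep_{\min}\}$ is an interval $[\etaTwoMax,\infty)$ (or empty, giving $\etaTwoMax=+\infty$).
\begin{enumerate}[label=$\roman*)$]
\item \emph{Separation of the two thresholds.} The hypothesis $\ep_{\min}\le\ep<\epcost$ gives $\ep_{\min}<\epcost$.
\item \emph{Conclusion when both sets are nonempty.} Continuity gives $h(\etaTwoMin)=\epcost>\ep_{\min}=h(\etaTwoMax)$. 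Since $h$ is strictly decreasing, this forces $\etaTwoMin<\etaTwoMax$.
\item \emph{Degenerate cases.} If $\{h\ge\epcost\}$ is empty, then $\etaTwoMin=0<\etaTwoMax$, since $\etaTwoMax$ is an infimum over positive $\eta_2$ and hence is not $0$ unless $h\le\ep_{\min}$ near $0$. If $\{h\le\ep_{\min}\}$ is empty, then $\etaTwoMax=+\infty$ and the inequality is trivial.
\end{enumerate}

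\textbf{Main obstacle.} The difficult part is the degenerate case $\etaTwoMax=0$. This would require $h(\eta_2)\le\ep_{\min}$ for every small $\eta_2>0$, which must be ruled out. The natural approach is to let $\eta_2\searrow0$ and show $h(\eta_2)\to\bw(\eta_1,0)$. At $\eta_2=0$ the candidate is the budget-type quantile $(U')^{-1}(\eta_1\xi)+c\Finvbench$ from \Cref{thm:ep-infty}. Since $\eta_1<\lamcost$, this quantile has cost larger than $x_0$. That cost property is what should force its divergence to exceed $\ep_{\min}$, via the minimality defining \eqref{opt:ep-min} together with strict monotonicity in $\eta_1$. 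The delicate points are justifying continuity of $\bw$ at the boundary $\eta_2=0$, which needs the integrability in \Cref{asm:integrability}, and the comparison with $\ep_{\min}$.
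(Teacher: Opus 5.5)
Your argument for $0<\etaOneMin\le\lamcost$, and for $\etaTwoMin(\eta_1)<\etaTwoMax(\eta_1)$ when both level sets are attained, is the paper's proof. The paper uses $\cost(0,\lambw)=\xbw>x_0$, strict decrease in $\eta_1$, and the second half of \Cref{asm:admissibility}. It then asserts $\bw(\eta_1,\etaTwoMin(\eta_1))=\epcost>\ep_{\min}=\bw(\eta_1,\etaTwoMax(\eta_1))$ ``by definition'' and invokes strict decrease in $\eta_2$. The paper never treats the degenerate cases, so your main line is as complete as the published one. The problem lies only in how you propose to close those cases.

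\textbf{Excluding $\etaTwoMax(\eta_1)=0$.} The mechanism you suggest does not work, because cost above $x_0$ gives no lower bound on the divergence. Minimality in \eqref{opt:ep-min} only says that quantile functions with cost at most $x_0$ have divergence at least $\ep_{\min}$. For instance, when $y_0>x_0$ the benchmark has cost above $x_0$ and divergence $0<\ep_{\min}$. Nor is there monotonicity of $\bw(\cdot,0)$ in $\eta_1$ to fall back on, since $\Ginv_{(\eta_1,0)}$ can lie on both sides of $\Finvbench$.

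The fix is to apply minimality at a point of cost exactly $x_0$, via the intermediate value argument later used for $k$ in \Cref{lemma:inplicit-fct}:
\begin{enumerate}
\item For $\eta_1\in[\etaOneMin,\lamcost)$ we have $\cost(\eta_1,0)>\cost(\lamcost,0)=x_0$ and $\cost(\eta_1,\lambw)\le\cost(\etaOneMin,\lambw)=x_0$.
\item Continuity of $\cost(\eta_1,\cdot)$ then yields $\eta_2^\circ\in(0,\lambw]$ with $\cost(\eta_1,\eta_2^\circ)=x_0$.
\item So $\Ginv_{(\eta_1,\eta_2^\circ)}$ is feasible for \eqref{opt:ep-min}, hence $\bw(\eta_1,\eta_2^\circ)\ge\ep_{\min}$.
\item Strict decrease of $\bw$ in $\eta_2$ gives $\bw(\eta_1,\eta_2)>\ep_{\min}$ on $(0,\eta_2^\circ)$, i.e. $\etaTwoMax(\eta_1)\ge\eta_2^\circ>0$.
\end{enumerate}
No limit of $\bw$ as $\eta_2\searrow 0$ is needed.

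\textbf{The empty sublevel set.} Your claim that this case is ``trivial'' is wrong as written: it fails if $\etaTwoMin(\eta_1)=+\infty$ as well. However, the case cannot occur. The first half of \Cref{asm:admissibility} gives $\bw(\eta_1,\lambw)\le\ep_{\min}<\epcost$. Hence the sublevel set is nonempty, and both $\etaTwoMin(\eta_1)$ and $\etaTwoMax(\eta_1)$ are at most $\lambw$.
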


\begin{proof}
First note that $\cost(0, \lambw) = \xbw$ and that by definition of $\etaOneMin$ we have $x_0 = \cost(\etaOneMin, \lambw) < \cost(0, \lambw)$. Thus, as $\cost$ is strictly decreasing in its first argument (see \Cref{lemma:c-b-decreasing}), we obtain that $\etaOneMin>0$. Next we show that $\etaOneMin \le\lamcost$. To see this we proceed by contradiction and let $\etaOneMin >\lamcost$. By definition of $\etaOneMin$, as $\cost$ is strictly decreasing in its first argument, and using \Cref{asm:admissibility}, we obtain the contradiction $\cost(\etaOneMin,\lambw)=x_0<\cost(\lamcost,\lambw)<x_0$. Thus $\etaOneMin \le \lamcost$.

Finally, let $\eta_1 \in [\etaOneMin, \lamcost)$, then by definition $ \bw(\eta_1, \etaTwoMin(\eta_1)) = \epcost > \ep_{\min} = \bw(\eta_1, \etaTwoMax(\eta_1))$. Since $\bw$ is strictly decreasing in its second argument (see \Cref{lemma:c-b-decreasing}), we have $\etaTwoMin(\eta_1) < \etaTwoMax(\eta_1)$. 
\end{proof}

Next, we derive the existence of implicit functions such that for each $\eta_1 \in [\etaOneMin, \lamcost)$ the constraints are binding. 

\begin{lemma}\label{lemma:inplicit-fct}
Let the assumptions in \Cref{thm:existence} be satisfied. Then, there exist continuous functions $k,\ell\colon [ \etaOneMin, \lamcost)\mapsto [0, +\infty)$ such that $\cost\big(\eta_1, k(\eta_1)\big)=x_0$ and $\bw\big(\eta_1,\ell(\eta_1)\big)=\epsilon$, for all $\eta_1\in [ \etaOneMin, \lamcost)$. Moreover, $k$ is given by 
\begin{equation}\label{eq:k-implicit}
     k(\eta_1): = \sup\big\{ y \in [0, \lambw]~|~ \cost(\eta_1, y) = x_0\big\}\,, 
     \quad \text{for all} \quad 
     \eta_1 \in [\etaOneMin, \lamcost)\,.
\end{equation}
\end{lemma}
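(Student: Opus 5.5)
The plan is to obtain both implicit functions from the intermediate value theorem, using the monotonicity in \Cref{lemma:c-b-decreasing} for $\ell$ and a boundary-value argument at $\eta_2\in\{0,\lambw\}$ for $k$. Continuity of each implicit function will then come from strict monotonicity of the relevant constraint function in the implicit variable, together with joint continuity.

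\textbf{Step 1: the function $\ell$.} Fix $\eta_1\in[\etaOneMin,\lamcost)$. By \Cref{lemma:c-b-decreasing}, $\eta_2\mapsto\bw(\eta_1,\eta_2)$ is continuous and strictly decreasing. By the definitions \eqref{eq:eta-min-max} and \Cref{lemma:multiplier-range}, on $[\etaTwoMin(\eta_1),\etaTwoMax(\eta_1)]$ it takes the values $\epcost$ at the left endpoint and $\ep_{\min}$ at the right endpoint. Since $\ep_{\min}\le\ep<\epcost$, there is a unique root $\ell(\eta_1)$ of $\bw(\eta_1,\cdot)=\ep$ in this interval.

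If $\etaTwoMin(\eta_1)=0$ (the convention $\sup\emptyset=0$), I would use continuity of $\bw$ at $\eta_2=0$. Here $\bw(\eta_1,0)=\DBW(\Ginv_{(\eta_1,0)},\Finvbench)$, and $\Ginv_{(\eta_1,0)}$ is the budget-only quantile with multiplier $\eta_1<\lamcost$, which is larger than $\Ginvcost$. I expect its divergence to be at least $\epcost$, and I would check this via \Cref{Le:monoG}.

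Continuity of $\ell$ is the standard implicit-function argument for a jointly continuous function that is strictly monotone in one variable. Take $\eta_1^n\to\eta_1$ and a subsequence with $\ell(\eta_1^n)\to L$. Joint continuity gives $\bw(\eta_1,L)=\ep$, and uniqueness of the root gives $L=\ell(\eta_1)$. Boundedness of $\ell$ on compact subintervals follows from $\ell\le\etaTwoMax$, or from a monotone comparison.

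\textbf{Step 2: the function $k$.} This is the main obstacle, because $\cost$ is monotone only in $\eta_1$, not in $\eta_2$. For existence, I would evaluate $y\mapsto\cost(\eta_1,y)$ at the endpoints of $[0,\lambw]$.

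\emph{Endpoint $y=\lambw$.} By the definition \eqref{eq:eta1-min} of $\etaOneMin$ and strict decrease of $\cost$ in $\eta_1$, we have $\cost(\eta_1,\lambw)\le x_0$ for $\eta_1\ge\etaOneMin$.

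\emph{Endpoint $y=0$.} The quantile $\Ginv_{(\eta_1,0)}$ is the budget-only solution with multiplier $\eta_1<\lamcost$. By \Cref{thm:ep-infty} its cost is strictly decreasing in the multiplier, so $\cost(\eta_1,0)>\cost(\lamcost,0)=x_0$.

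The intermediate value theorem then makes the set in \eqref{eq:k-implicit} nonempty and closed, so its supremum $k(\eta_1)$ is attained and $\cost(\eta_1,k(\eta_1))=x_0$.

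Continuity of $k$ is the delicate part, since the largest root can jump if roots are not isolated. I would first try to show $\cost(\eta_1,y)<x_0$ for all $y\in(k(\eta_1),\lambw]$, which follows from the definition of the supremum together with the endpoint sign. Upper semicontinuity of $k$ then follows from closedness. For lower semicontinuity, I would use strict decrease in $\eta_1$: if $\eta_1'$ is close to $\eta_1$, then $\cost(\eta_1',k(\eta_1))$ is near $x_0$. Perturbing $y$ slightly below $k(\eta_1)$, where I would need $\cost>x_0$ (a transversality condition), yields a root close to $k(\eta_1)$.

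If transversality cannot be established directly, I would fall back on \Cref{asm:admissibility} and the strict monotonicity in $\eta_1$. The aim would be to show that the graph $\{\cost=x_0\}$ is a continuous curve parametrised by $\eta_1$, since for each fixed $y$ the level set in $\eta_1$ is a single point. This makes the level set the graph of a continuous function of $y$, and I would try to invert it.
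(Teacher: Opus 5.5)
Your existence arguments are the paper's. For $k$ you apply the intermediate value theorem on $[0,\lambw]$ using $\cost(\eta_1,0)>\cost(\lamcost,0)=x_0$ and $\cost(\eta_1,\lambw)\le x_0$; your non-strict inequality is the correct one at $\eta_1=\etaOneMin$, where the paper writes a strict one. For $\ell$ you apply it between $\etaTwoMin(\eta_1)$ and $\etaTwoMax(\eta_1)$, with uniqueness from strict decrease of $\bw(\eta_1,\cdot)$. The paper's proof stops there and gives no argument for continuity. Your continuity argument for $\ell$ is a valid addition. Most directly, $\bw(\eta_1,\ell(\eta_1)-\delta)>\ep>\bw(\eta_1,\ell(\eta_1)+\delta)$ persists for $\eta_1'$ near $\eta_1$, so you need no bound via $\etaTwoMax$, which may be $+\infty$.

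The genuine gap is continuity of $k$, which the lemma asserts and the crossing argument in \Cref{thm:existence} relies on. Your ingredients give more than you extracted. Besides upper semicontinuity, $k$ is strictly decreasing: for $\eta_1'<\eta_1$ one has $\cost(\eta_1',k(\eta_1))>x_0\ge\cost(\eta_1',\lambw)$, which forces a root in $(k(\eta_1),\lambw]$. Hence $k$ is left-continuous, and it can only fail to be right-continuous. That failure occurs exactly when $\cost(\eta_1,\cdot)\le x_0$ on some $(k(\eta_1)-\delta,k(\eta_1))$. In that case $\cost(\eta_1',\cdot)<x_0$ on $(k(\eta_1)-\delta,\lambw]$ for every $\eta_1'>\eta_1$, so $k(\eta_1')\le k(\eta_1)-\delta$ and the largest root drops to a lower branch.

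This is precisely your missing transversality condition, and the hypotheses do not exclude its failure:
\begin{itemize}
\item \Cref{lemma:c-b-decreasing} gives monotonicity of $\cost$ only in $\eta_1$.
\item \Cref{asm:admissibility} only constrains the edges $\eta_1=\lamcost$ and $\eta_2=\lambw$.
\item By \Cref{Le:monoG}, raising $\eta_2$ lowers $\Ginv_\bfeta$ where it lies above $\Finvbench$ and raises it where it lies below.
\end{itemize}
So nothing prevents $\eta_2\mapsto\cost(\eta_1,\eta_2)$ from having an interior local maximum equal to $x_0$.

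Your fallback does not get around this. The level set $\{\cost=x_0\}$ is indeed the graph of a continuous map $y\mapsto m(y)$. However, $k(\eta_1)=\max m^{-1}(\eta_1)$ is continuous only if $m$ has no local maximum at the largest point of $m^{-1}(\eta_1)$, which is the same transversality condition in other coordinates. An additional hypothesis is therefore required. Examples are strict monotonicity of $\cost(\eta_1,\cdot)$, or differentiability with $\partial_{\eta_2}\cost\neq0$ on $\{\cost=x_0\}$ so that the implicit function theorem applies. Neither your proposal nor the paper establishes continuity of $k$ as defined by the supremum.

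A smaller point in your Step 1: if $\etaTwoMin(\eta_1)=0$, \Cref{Le:monoG} does not give $\bw(\eta_1,0)\ge\epcost$. It only yields $\Ginv_{(\eta_1,0)}\ge\Ginvcost$ pointwise. A pointwise increase lowers $B_\phi(\cdot,\Finvbench(u))$ wherever the quantile lies below $\Finvbench(u)$, so the divergence need not grow. The paper reads $\bw(\eta_1,\etaTwoMin(\eta_1))=\epcost$ directly off \eqref{eq:eta-min-max}, tacitly assuming the defining set is nonempty. You need that, or at least $\bw(\eta_1,0)>\ep$, as an input too.
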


\begin{proof}
By Lemma \ref{lemma:c-b-decreasing}, for fixed $\eta_1> 0$, the functions $\cost(\eta_1,\cdot)$ and $\bw(\eta_1,\cdot)$ are continuous. Next, we observe that for 
$\eta_1\in [\etaOneMin, \lamcost)$, we have by strict decreasingness of $\cost$ in its first argument (see \Cref{lemma:c-b-decreasing}), that
$\cost(\eta_1,0)>  \cost(\lamcost,0)=x_0$, where the equality follows by \Cref{cor:sol-P-inv-ep-large}. Moreover, again by strict decreasingness of $\cost$, we have that $\cost(\eta_1,\lambw) < \cost(\etaOneMin,\lambw) =x_0$, where the equality follows by definition of $\etaOneMin$. Therefore, 
by the Intermediate Value Theorem, there exists $\eta_2\in(0,\lambw)$ that satisfies $\cost(\eta_1,\eta_2)=x_0$. As $\eta_2$ might not be unique, we define
\begin{equation*}
    k(\eta_1): = \sup\big\{ y \in [0, \lambw]~|~ \cost(\eta_1, y) = x_0\big\},
\end{equation*}
which satisfies $\cost(\eta_1, k(\eta_1)) = x_0$ for all $\eta_1 \in [\etaOneMin, \lamcost)$. 

To see the existence of $\ell$, note that by \eqref{eq:eta-min-max}, it holds that $\bw(\eta_1,\etaTwoMax(\eta_1)) = \ep_{\min}< \epcost =\bw(\eta_1,\eta_2^{\min}(\eta_1))$ for any $\eta_1\in [\etaOneMin, \lamcost)$.
Thus, for  $\eta_1\in [ \etaOneMin, \lamcost)$  there exists $\ell(\eta_1)\in [\eta_2^{\min}(\eta_1),\eta_2^{\max}(\eta_1))$ such that $\bw(\eta_1, \ell(\eta_1))=\epsilon$. Moreover, by strict decreasingness of $\bw$ in its second argument, $\ell$ is unique. 
\end{proof}

The next lemma shows that the two functions $k$ and $\ell$ cross. 

\begin{lemma}\label{lemma:crossing}
Let the assumptions in \Cref{thm:existence} be satisfied. Then for $k$ and $\ell$ defined in \Cref{lemma:inplicit-fct} the following holds 
\begin{enumerate}[label = $\roman*)$]
    \item
    $
    k(\etaOneMin)=\lambw$ and $k(\lamcost)=  0$,
    
    \item $\ell(\etaOneMin) \le \lambw$ and $\ell(\lamcost) > 0$.
\end{enumerate} 
\end{lemma}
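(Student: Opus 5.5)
For part $i)$, I will work directly from the definitions of $\etaOneMin$ and $k$ together with the boundary values of $\cost$. At $\eta_1=\etaOneMin$, the definition \eqref{eq:eta1-min} and continuity of $\cost$ (\Cref{lemma:c-b-decreasing}) give $\cost(\etaOneMin,\lambw)=x_0$. Hence $\lambw$ lies in the set in \eqref{eq:k-implicit}. Since that set is contained in $[0,\lambw]$, its supremum is $\lambw$, so $k(\etaOneMin)=\lambw$.

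At the right endpoint, $k$ is only defined on $[\etaOneMin,\lamcost)$, so I will read $k(\lamcost)$ either as evaluation of the formula \eqref{eq:k-implicit} at $\eta_1=\lamcost$ or as the limit $\eta_1\nearrow\lamcost$. By \Cref{cor:sol-P-inv-ep-large} and the discussion preceding \Cref{lemma:multiplier-range}, $\cost(\lamcost,0)=x_0$, so $0$ belongs to the set. By \Cref{asm:admissibility}, $\cost(\lamcost,\eta_2)<x_0$ for every $\eta_2\in(0,\lambw]$, so no positive element is in the set. Therefore the supremum is $0$ and $k(\lamcost)=0$. For the limit interpretation, I take $\eta_2^n=k(\eta_1^n)$ with $\eta_1^n\nearrow\lamcost$ and pass to any accumulation point $\bar\eta_2\in[0,\lambw]$. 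Continuity of $\cost$ gives $\cost(\lamcost,\bar\eta_2)=x_0$, so $\bar\eta_2=0$ by the same admissibility argument.

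For part $ii)$, I will use strict monotonicity of $\bw$ in $\eta_2$ (\Cref{lemma:c-b-decreasing} $ii)$) to turn the claims about $\ell$ into comparisons of $\bw$-values. Because $\bw(\eta_1,\cdot)$ is strictly decreasing, $\ell(\etaOneMin)\le\lambw$ is equivalent to $\bw(\etaOneMin,\lambw)\le\ep=\bw(\etaOneMin,\ell(\etaOneMin))$. If $\etaOneMin<\lamcost$, \Cref{asm:admissibility} gives $\bw(\etaOneMin,\lambw)\le\ep_{\min}\le\ep$, which settles the claim. If $\etaOneMin=\lamcost$, I use continuity of $\bw$ in $\eta_1$ to take the limit of the inequality.

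For $\ell(\lamcost)>0$, I use $\bw(\lamcost,0)=\epcost>\ep$ from \Cref{cor:sol-P-inv-ep-large}. By continuity, $\bw(\lamcost,\eta_2)>\ep$ for small $\eta_2\ge0$. Since $\bw(\lamcost,\ell(\lamcost))=\ep$ and $\bw$ is strictly decreasing in $\eta_2$, it follows that $\ell(\lamcost)>0$.

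I expect the endpoint $\eta_1=\lamcost$ to be the main obstacle. The interval is open there, so I must make sure $\ell$ and $k$ extend continuously to it. For $\ell$, this requires that $\bw(\lamcost,\cdot)$ still attains the value $\ep$. I will get this from the intermediate value theorem between $\bw(\lamcost,0)=\epcost>\ep$ and $\bw(\lamcost,\eta_2)\to$ something $\le\ep_{\min}$ at $\etaTwoMax(\lamcost)$, using \Cref{lemma:multiplier-range} extended by continuity. If that extension fails, I will fall back on proving the limiting statements $\liminf_{\eta_1\nearrow\lamcost}\ell(\eta_1)>0$ and $\lim_{\eta_1\nearrow\lamcost}k(\eta_1)=0$, which are what the later crossing argument actually needs.
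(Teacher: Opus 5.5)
Your proposal is correct and follows the paper's proof essentially step by step. Part $i)$ is the same supremum argument, using $\cost(\etaOneMin,\lambw)=x_0$, $\cost(\lamcost,0)=x_0$ and \Cref{asm:admissibility}. Part $ii)$ uses the same ingredients, namely \Cref{asm:admissibility} and strict decrease of $\bw$ in $\eta_2$. Your direct comparison $\bw(\etaOneMin,\lambw)\le\ep_{\min}\le\ep=\bw(\etaOneMin,\ell(\etaOneMin))$ only shortcuts the paper's contradiction argument via $\etaTwoMax(\etaOneMin)\le\lambw$. Your added care is not in the paper: $\lamcost$ lies outside the domain $[\etaOneMin,\lamcost)$ of $k,\ell$, and \Cref{asm:admissibility} is not directly available if $\etaOneMin=\lamcost$. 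Your fallback limit statements, obtained via joint continuity, are exactly what the later crossing argument needs.
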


\begin{proof}
For $i)$, we have by definition of $\etaOneMin$, that $x_0 = \cost(\etaOneMin, \lambw) = \cost(\etaOneMin, k(\etaOneMin))$, which implies that $k(\etaOneMin) \ge \lambw$. 
Moreover, by definition of $k$, see \eqref{eq:k-implicit}, we have $k(\etaOneMin) \le \lambw$, thus $k(\etaOneMin) = \lambw$. 
For the second part of $i)$, note that by \Cref{cor:sol-P-inv-ep-large} $x_0 = \cost(\lamcost, 0) = \cost\big(\lamcost, k(\lamcost)\big)$. Moreover, by \Cref{asm:admissibility}, it holds that for all $\eta_2 \in (0, \lambw)$ that $\cost(\lamcost, \eta_2)<x_0$, which implies that $k(\lamcost)$ is unique and thus $k(\lamcost) = 0$.

For $ii)$ note that from the proof of \Cref{lemma:inplicit-fct}, we know that $\ell(\etaOneMin)\le  \etaTwoMax(\etaOneMin)$. Next we show that $\etaTwoMax(\etaOneMin) \le \lambw$. To see this, we proceed by contradiction and assume that $\lambw < \etaTwoMax(\etaOneMin)$. Then by definition of $\etaTwoMax$ and as $\bw$ is strictly decreasing in its second argument, we obtain $\ep_{\min} = \bw(\etaOneMin, \etaTwoMax(\etaOneMin))< \bw(\etaOneMin, \lambw) \le \ep_{\min}$, where the last inequality follows by \Cref{asm:admissibility}. Thus, we arrive at a contradiction and indeed $\etaTwoMax(\etaOneMin) \le \lambw$, which implies that $\ell(\etaOneMin) \le \lambw$. For the second part of $ii)$, i.e., $\ell(\lamcost)>0$, note that $\bw(\lamcost, 0)= \epcost >  \ep = \bw(\lamcost, \ell (\lamcost))$, and as $\bw$ is strictly decreasing in its second argument (see \Cref{lemma:c-b-decreasing}), we obtain that $\ell (\lamcost) > 0$.
\end{proof}

We are ready to prove \Cref{thm:existence}, that is we show a fixed point of the functions $k$ and $\ell$.

\begin{proof}[Proof of \Cref{thm:existence}]
    By \Cref{lemma:inplicit-fct}, the functions $k, \ell$ are continuous and satisfy
\begin{equation*}
    \cost\big(\eta_1, k(\eta_1)\big)=x_0
    \quad \text{and}
    \quad
    \bw\big(\eta_1,\ell(\eta_1)\big)=\epsilon\,,
\end{equation*}
for all $\eta_1 \in [\etaOneMin, \lamcost)$. 
Next, by \Cref{lemma:crossing}, it holds that $k(\etaOneMin) = \lambw\ge \ell(\etaOneMin)$ and $k(\lamcost) = 0 < \ell(\lamcost)$.
Thus, by continuity of $k, \ell$ and $\bw, \cost$, there exists a crossing point, i.e., $\bfeta^\dagger: = (\eta_1^\dagger, \eta_2^\dagger)$ such that $k(\eta_1^\dagger) = \ell(\eta_1^\dagger)$ and moreover, $\bw(\eta^\dagger_1,k(\eta_1^\dagger))  = \ep$ and $\cost(\eta^\dagger_1,k(\eta_1^\dagger)) = x_0$. Thus $(\eta_1^\dagger, k(\eta_1^\dagger)) $ is the optimal Lagrange multiplier. Note that $\eta_1^\dagger \in [\etaOneMin, \lamcost)$ and that $\eta_2^\dagger \in (0, \lambw]$, where the latter follows by definition of $k$. 

By \Cref{prop:constraints-binding}, both constraints are binding, that is the Lagrange multipliers are strictly positive. Thus, we showed the existence of Lagrange multipliers that are strictly positive, therefore there exists a solution to optimisation problem \eqref{opt:main-prime} and both constraints are binding. 
\end{proof}

\section{Numerical illustration}\label{sec:ex}

Here we work in a GBM market model and continue the setup of \Cref{ex:gbm-xi}. For the numerical implementation we choose the following market parameters $R = 1$, $\Gamma = 2$, $\Psi = 0.8$ and set the benchmark's initial wealth to $y_0 = 1$. The investor has a CRRA utility function given by 
\begin{equation*}
    U(x) = \frac{x^{1-\gamma}}{1- \gamma}, \quad \text{for} \quad 
    \gamma \in  (0,1) \,.
\end{equation*}
For the $\alpha$-BW divergence, the investor considers the asymmetric Bregman divergence generated by the Power family, that is for $p>1$ 
\begin{equation*}
 \phi_p(x) = \frac{2 x^p}{p(p-1)} \, , \quad x>0    \,,
\end{equation*}
which is illustrated in \Cref{ex: GBM-MM}. For $p = 2$, we obtain one-half the symmetric squared Wasserstein distance. In the examples, the parameters are chosen such that all assumptions for the existence of a solution to \cref{opt:main-prime} are satisfied. 

\begin{figure}[h]
    \centering
    \includegraphics[width=0.48\textwidth]{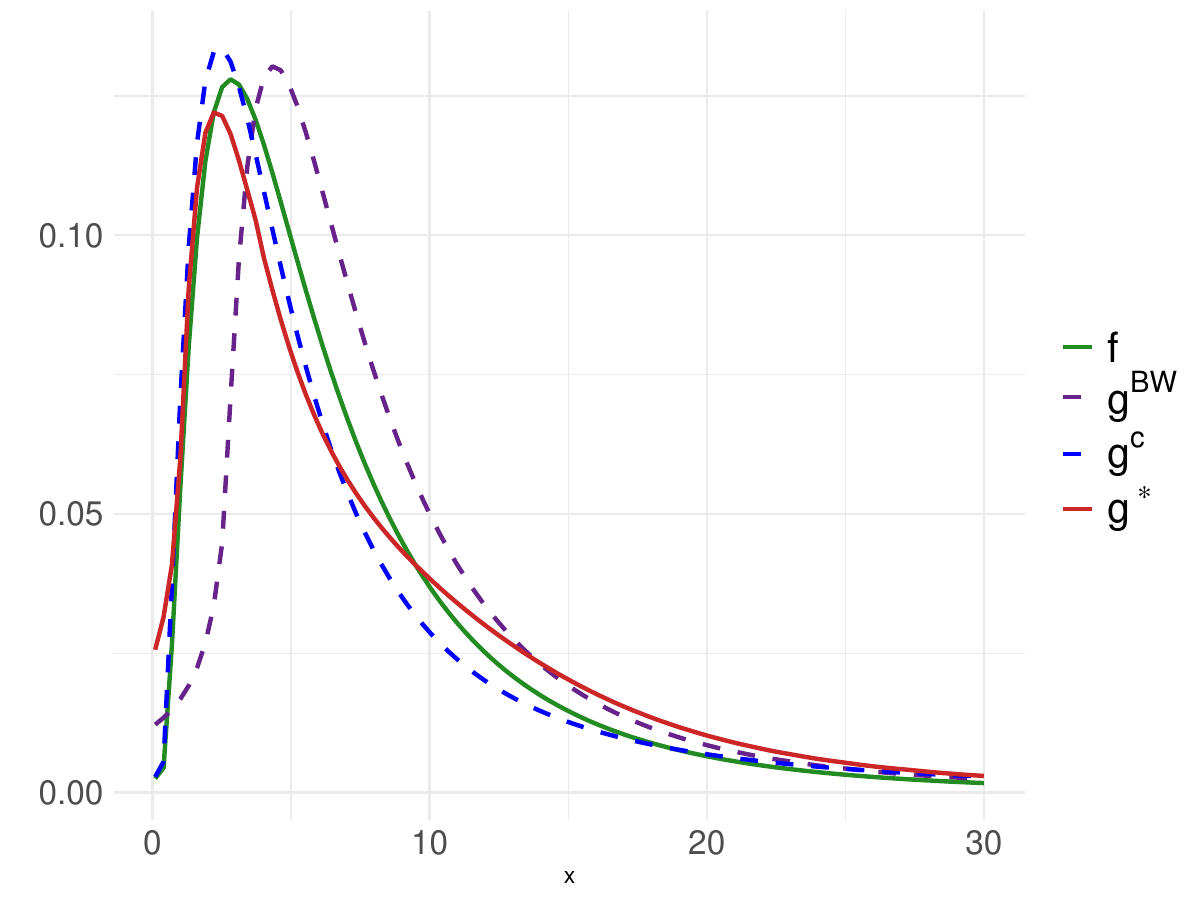}
    \includegraphics[width=0.48\textwidth]{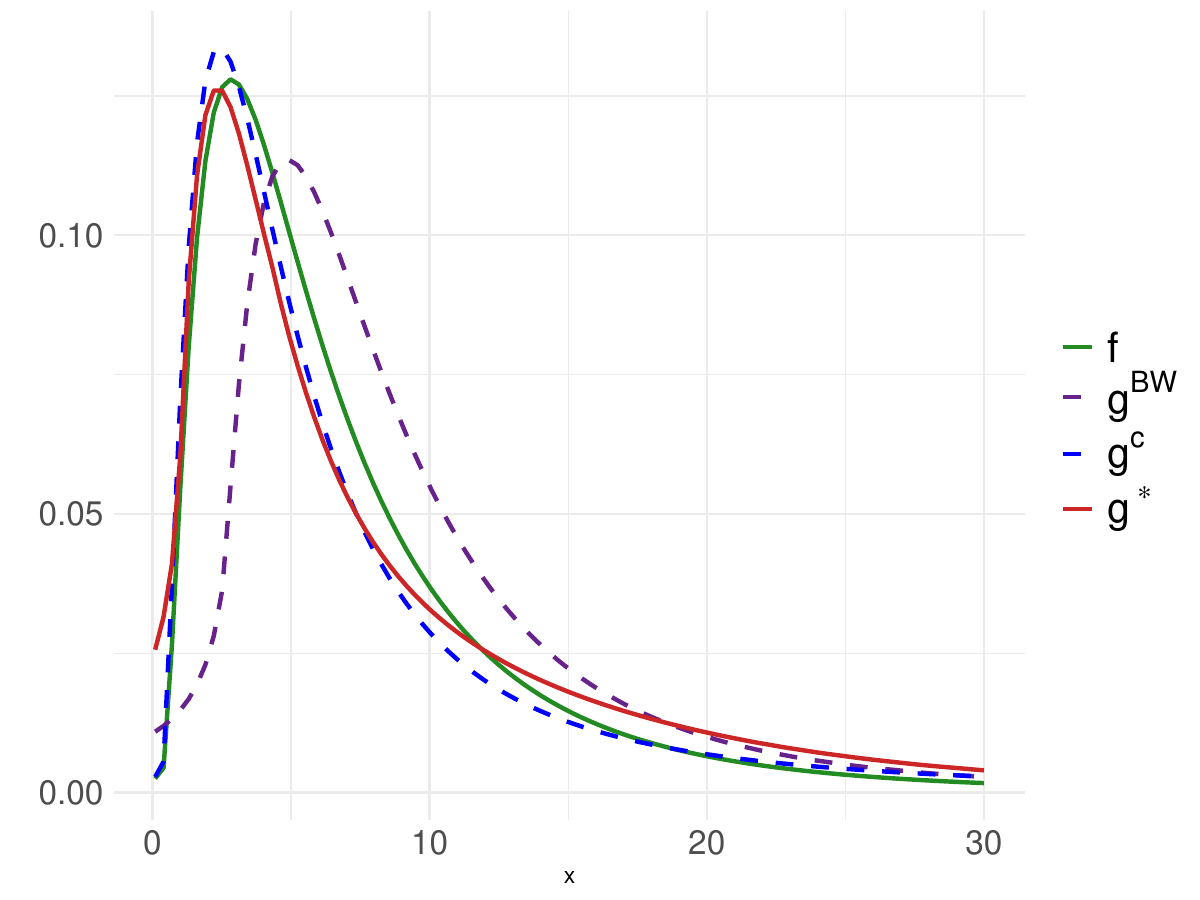}

    \caption{Optimal density functions for the Power Bregman generator $\phi_p$. Left: $p  = 2$ (one-half the squared Wasserstein distance). Right: $p = 1.6$,. Common parameters $\alpha = 0.25$, $x_0 = 1$, $c = 0.9$, $\gamma = \frac12$, and $\ep = \frac12$.}
    \label{fig:opt-density}
\end{figure}

\Cref{fig:opt-density} displays the density functions corresponding to $\Ginv_{\bfeta^*}$ to \eqref{opt:main-prime} with both constraint binding (which in the plots we simply denote as $g^{*}$). It also displays  $g^\cost$ the density corresponding to $\Ginvcost$ the solution to \eqref{opt:ep-infty} -- the optimisation problem with only the cost constraint -- and the density $g^{\BW}$ corresponding to $\Ginvbw$ the solution to \eqref{opt:x-infty} -- the optimisation problem with only the $\alpha$-BW divergence -- as well as the benchmark density $f$. The left panel displays the one-half squared Wasserstein distance ($p = 2$) and the right panel the Power Bregman generator with parameter $p = 1.6$. Recall from the discussion of \Cref{ex: GBM-MM}, that $p = 1.6<2$ and $\alpha = 0.25$ illustrates an investor who wishes that the left tail of their terminal wealth is close to that of the benchmark -- a feature we clearly observe in \Cref{fig:opt-density} -- and penalises underperformance relative to the benchmark more. Indeed, the left tails of the densities of the benchmark $f$ in green and the density of the optimal terminal wealth $g^*$ are close to indistinguishable. However, the right tail of $g^*$ is significantly heavier tailed indicating larger gains than the benchmark. Comparing $g^\cost$ and $g^\DBW$ with the benchmark $f$, we observe that the $g^\DBW$ (which only binds the divergence constraint) is shifted to the right, however, the strategy's cost of 1.463 for $p = 2$ (1.55 for $p =1.6$) is outside the budget, see  also \Cref{tab:opt-quant}. The density $g^\cost$ (only binding the budget constraint) on the other hand has a lighter right tail and deviates further from the benchmark; realising an $\alpha$-BW divergence of 43.55 for $p = 2$ and 9.12 for $p = 1.6$. 
\begin{table}[ht]
    \centering
    \begin{tabular}{c c c c c }
     $\phi_p(\cdot)$  &  constraints & $\Ginv^\DBW$ & $\Ginv^c$  &  $\Ginv_{\bfeta^*}$\\
         \toprule \toprule 
    \multirow{3}{*}{$p = 2$}   &    $\DBW(\cdot, \Finvbench)$ 
            & 0.5  & 43.55 & 0.5          \\
        & $\cost(\cdot)$  & 1.463 & 1.0  & 1.0\\
        & $\E[U(\cdot - c \Finvbench)]$  & 2.918 & 2.713  & 2.146 \\[0.5em]
        \midrule
        \multirow{3}{*}{$p = 1.6$}  
      &  $\DBW(\cdot, \Finvbench)$ 
            & 0.5  & 9.12  & 0.5          \\
       &  $\cost(\cdot)$  & 1.550 & 1.0  & 1.0\\
       & $\E[U(\cdot- c \Finvbench)]$  & 3.310 & 2.713 & 2.278 \\
        \bottomrule \bottomrule
    \end{tabular}
    \caption{Constraints for different quantile function for the same parameters as in \Cref{fig:opt-density}. The benchmarks expected utility is 1.587, its cost is 1, and $\DBW(\Finvbench, \Finvbench) = 0$.}
    \label{tab:opt-quant}
\end{table}
\Cref{tab:opt-quant} collects the $\alpha$-BW divergence, the cost, and the expected utility of each strategy displayed in \Cref{fig:opt-density}. We observe as expected that $\Ginvcost$ has a significantly larger $\alpha$-BW divergence than $\Ginv_{\bfeta^*}$, similarly, $\Ginvbw$ has a significantly larger budget $\Ginv_{\bfeta^*}$; as $\Ginv_{\bfeta^*}$ binds both constraints. Moreover, the expected utility of $\Ginvcost$ and $\Ginvbw$ are larger than $\Ginv_{\bfeta^*}$, as they only bind of of the constraint. However, when compared the the expected utility of the benchmark alone, which is 1.587, all strategies outperform the benchmark significantly. Comparing the left and right panel of \Cref{fig:opt-density}, we observe that for the symmetric case $p = 2$, we see a decrease in the spread of $g^*$. An in-depth comparison of the optimal density for different $p$ is discussed in \Cref{tab:statistics-various-alpha}.

\begin{figure}
    \centering
    \includegraphics[width=0.48\textwidth]{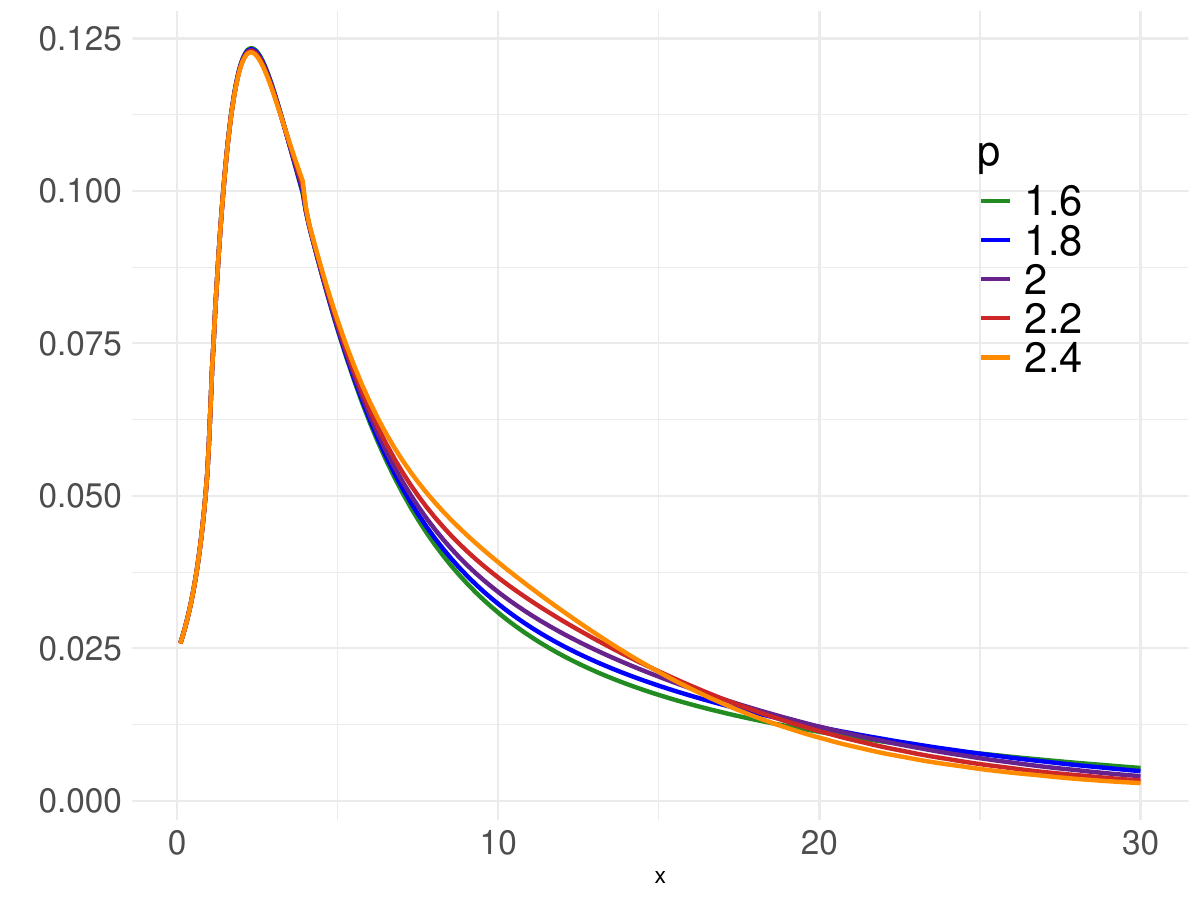}
    \includegraphics[width=0.48\textwidth]{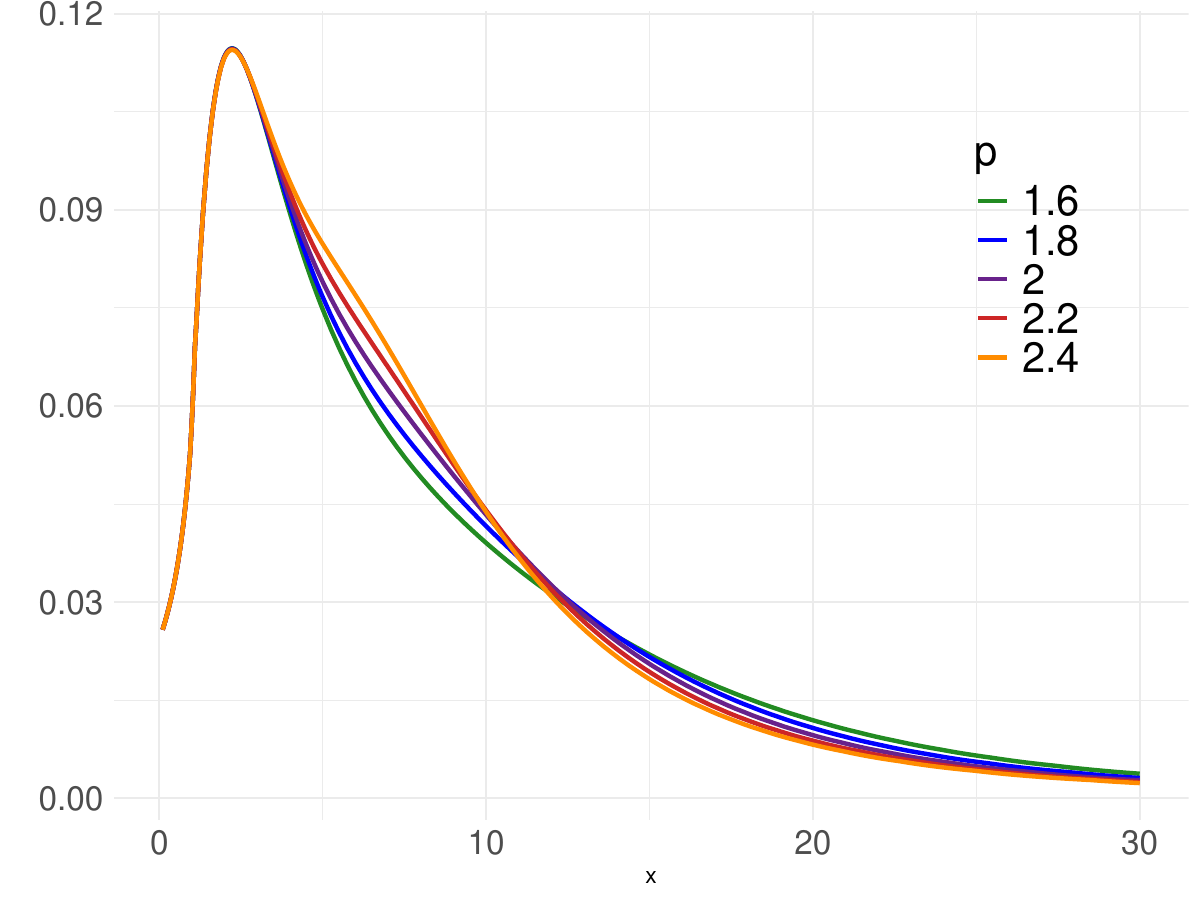}
    
\caption{Optimal density function $g{^*}$ of \eqref{opt:main-prime} when both constraints are binding for different choices of $p \in \{1.6, 1.8, 2, 2.2, 2.4\}$ of the Power Bregman generator. Left $\alpha = 0.1$, right $\alpha = 0.5$.
    Common parameters are $x_0 = 1$, $c = 0.9$, $\gamma = \frac12$, and $\ep = 0.5$.}
    \label{fig:density-BW-p}
\end{figure}

Next, we illustrate the density of the optimal terminal wealth $g^*$ corresponding to $\Ginv_{\bfeta^*}$, when both constraints are binding, for varying parameters. \Cref{fig:density-BW-p} displays $g^*$ for different choices of the Poisson Bregman generator $p \in \{1.6, 1.8, 2, 2.2, 2.4\}$ for $\alpha = 0.1$ (left panel) and $\alpha = 0.5$ (right panel). First, we observe that the left tails are all close to each other and the benchmark, which is due to choice of the Poisson generator. For $\alpha = 0.5$ and $p>2$, mass from the right tail of $g^*$ is shifted towards the middle of the distribution, while for $\alpha = 0.5$ and $p<2$ mass from the centre is shifted towards the right tail (leading to larger gains). This is inline with $p>2$ penalising more deviation from the right tail of the benchmark. For $\alpha = 0.1$ in the left panel of \Cref{fig:density-BW-p}, we observe a similar pattern, that the left tails and part of the centre of the distributions are close to each other. Moreover the shifting of mass happens further to the right. This is inline with the choice of $\alpha = 0.1$, as smaller $\alpha$ penalise underperforming the benchmark more than outperforming it.

In \Cref{fig:density-BW-alpha} we consider $g^*$ (again for both constraints binding) for different choices of $\alpha\in \{0.1, 0.5, 0.9\}$. The left panel for the Poisson Bregman generator $p = 1.6$, the middle panel for the symmetric $p = 2$, and the right panel for $p = 2.4$. The red dashed line indicates the change from under- to outperformance of the benchmark. Specifically, in all panels, to the left of the red dashed line the benchmark outperforms the investor's strategy, i.e. $\Finvbench(\cdot)> \Ginv_{\bfeta^*}(\cdot)$, and to the right of the red dashed line the investor's strategy outperforms the benchmark, i.e., $\Ginv_{\bfeta^*}(\cdot)>\Finvbench(\cdot)$. The red lines are at 3.574 ($p = 1.6$), at 3.108 ($p=  2)$, and at 2.629 ($p = 2.4$). In each the panels, the red lines are numerically indistinguishable between the different $\alpha$ parameters. From \Cref{fig:density-BW-alpha} we observe that for larger $p$ (right panel) the mass is shifted further to the middle of the density compared to $p = 1.6$ (left panel). Moreover, for larger $\alpha$ the shift is more pronounced. 
\begin{figure}[h]
    \centering
    \includegraphics[width=0.32\linewidth]{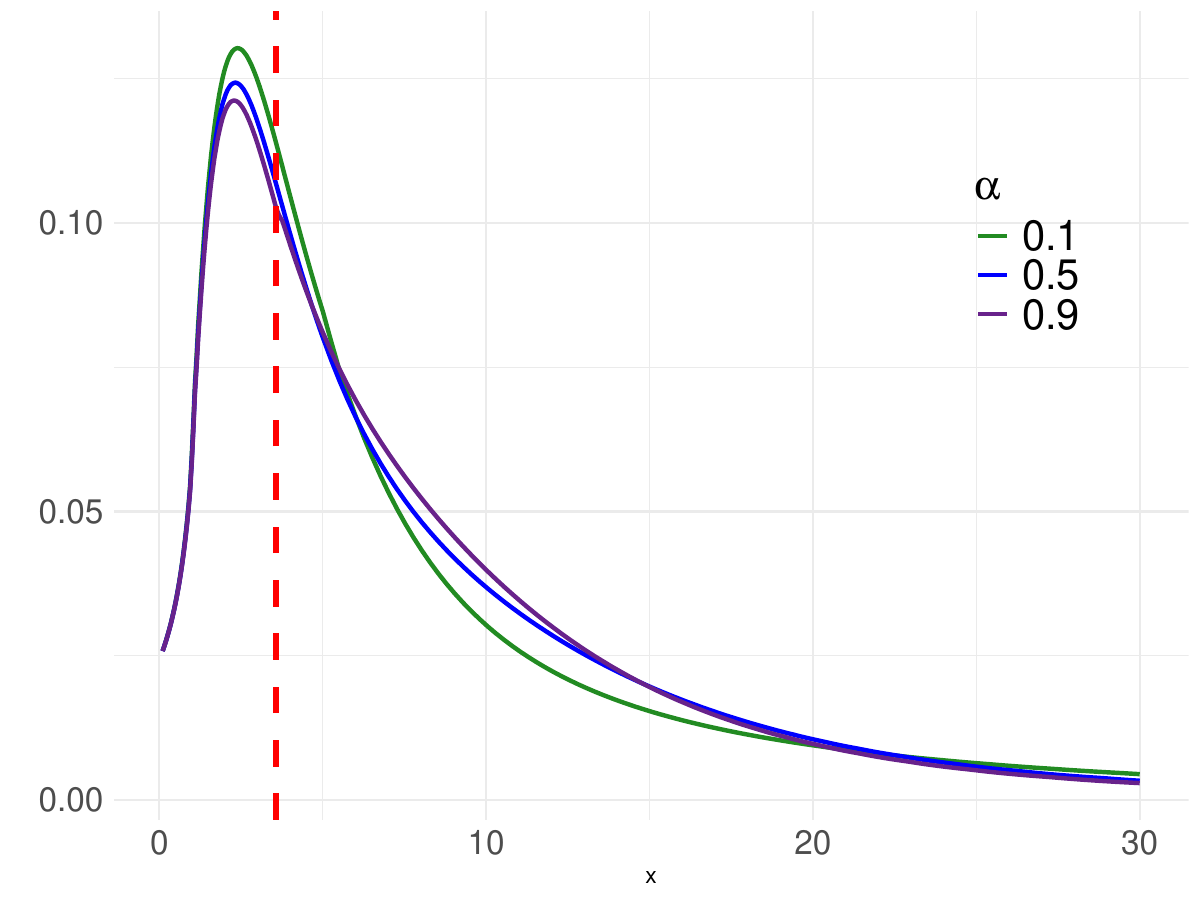}
    \includegraphics[width=0.32\linewidth]{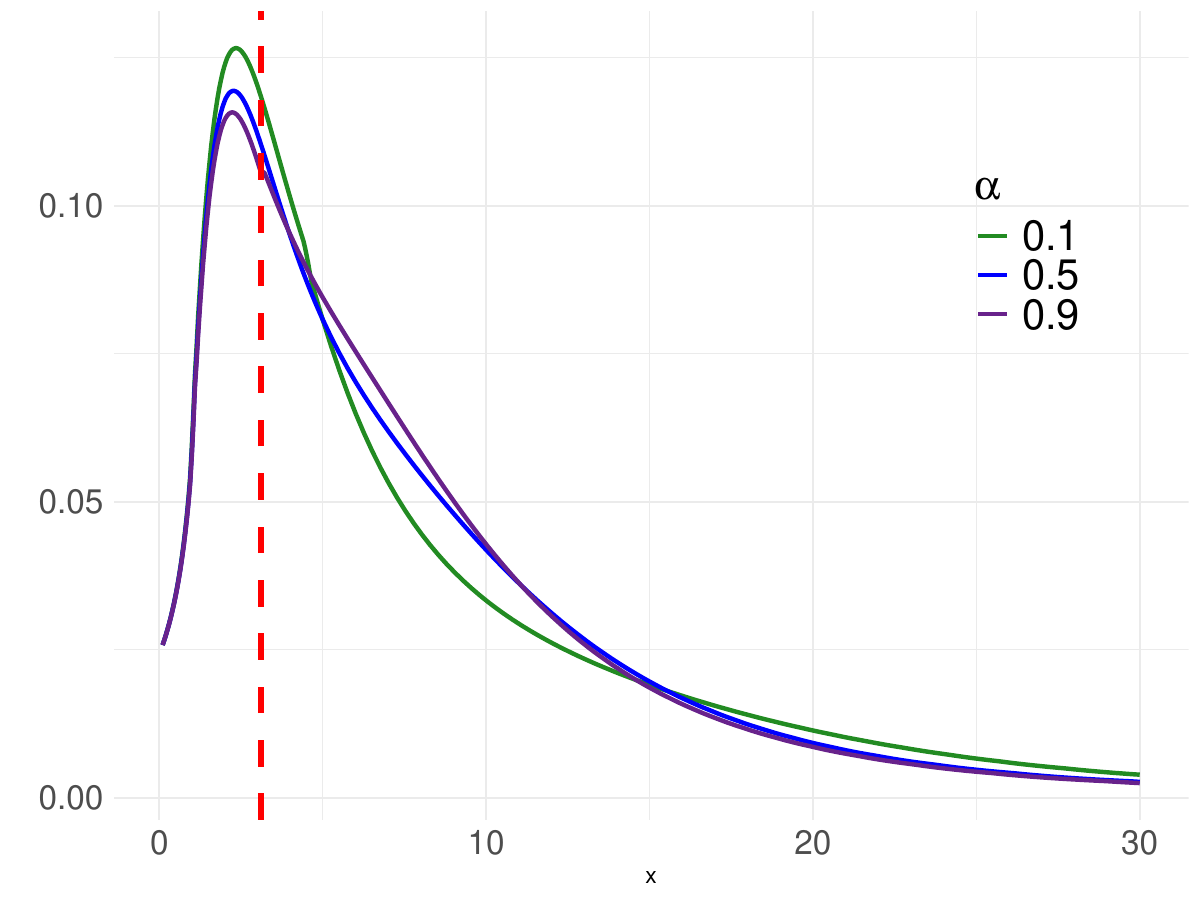}
    \includegraphics[width=0.32\linewidth]{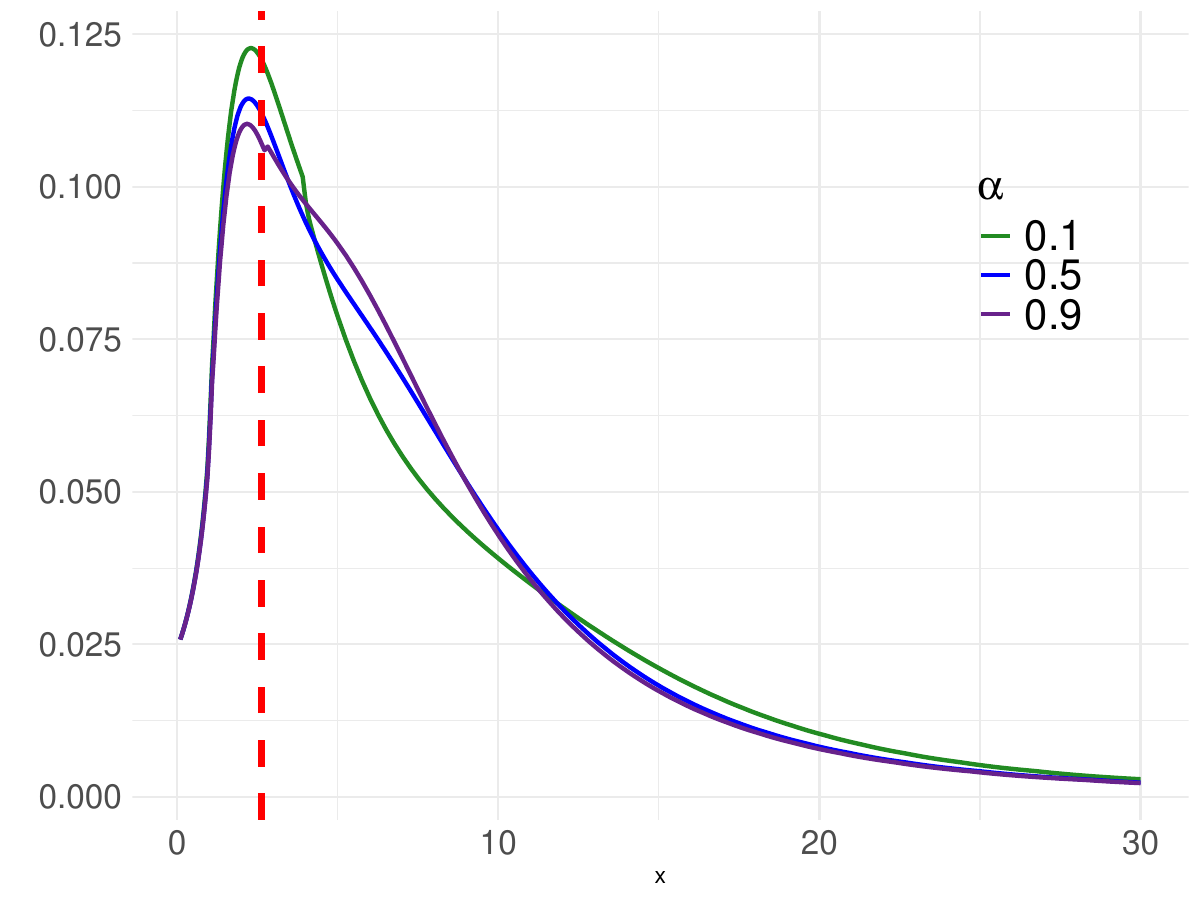}
    \caption{Optimal density function $g{^*}$ of \eqref{opt:main-prime} when both constraints are binding for different choices of $\alpha \in \{0.1, 0.5, 0.9\}$. Left $p = 1.6$, middle $p = 2$, right $p = 2.4$.  To the left of the red dashed vertical line, the strategies underperform the benchmark, i.e., $\Finvbench(\cdot)> \Ginv_{\bfeta*, \alpha}(\cdot)$. To the right of the red dashed vertical line, the strategies outperform the benchmark. The red lines are numerically same for $\Ginv_{\bfeta*, \alpha}$, $\alpha\in \{0.1, 0.5, 0.9\}$ and fixed $p$.
    Common parameters are $c = 0.9$, $x_0 = 1$, $\gamma = \frac12$, and $\ep = 0.5$.
    }
    \label{fig:density-BW-alpha}
\end{figure}

For the parameter choices in \Cref{fig:density-BW-alpha}, we compute summary statistics of the benchmark's and the optimal strategies' terminal wealth. We report the gain-loss ratio (GLR) using the expected wealth of the benchmark strategy as the reference point \citep{deLanghe2015MS}. Specifically, the GLR for a strategy with terminal wealth $X$ and initial cost $X_0$ is defined as 
\begin{equation*}
    \GLR:= \frac{\E\left[\left(\frac{X}{X_0} - \E\left[\frac{Y}{y_0}\right]\right)_+\right]}{\E\left[\left(\E\left[\frac{Y}{y_0}\right] - \frac{X}{X_0}\right)_+\right]}\,.
\end{equation*}
We further report the mean, standard deviation, Value-at-Risk (VaR) at level 0.05, Expected Shortfall (ES) at level 0.05, and the upper tail expectation (UTE) at level 0.9. The VaR at level $\beta \in (0,1)$ of a random variable $Z$ with quantile function $\Finv_Z$ is given by $\VaR_{\beta}(Z):= -\Finv_Z(\beta)$. The ES of a random variable $Z$ at level $\beta \in (0,1)$ is 
\begin{equation*}
    \ES_{\beta}(Z):= -\frac{1}{\beta}\int_0^\beta \Finv_Z(u) \, \d u\,
\end{equation*}
and the UTE of $Z$ at level $\beta \in (0,1)$ is 
\begin{equation*}
    \UTE_\beta(Z):= \frac{1}{1-\beta} \int_{\beta}^1\Finv_Z(u) \, \d u\,.
\end{equation*}

\Cref{tab:statistics-various-alpha} reports these statistics for the benchmark's terminal wealth, as well as for the varying $p = \{1.6, 2.2, 2.4\}$ and $\alpha  = \{0.1, 0.5, 0.9\}$.
\begin{table}[t!]
    \centering
    \begin{tabular}{cccccccc}
   $p$ & $\alpha$ & GLR & mean & std. dev & $\VaR_{0.05}$ & $\ES_{0.05}$ & $\UTE_{0.9}$
    \\
    \toprule \toprule
    \multirow{3}{*}{$p = 1.6$}   
    & 0.1& 1.864 & 9.401 &  11.113 & -1.305&  -0.968&  36.242\\
    & 0.5& 1.475 & 8.451 &   8.384&  -1.311&  -0.971 &  27.673\\
    & 0.9& 1.379 &8.225  & 7.918 &-1.315 &-0.973 & 26.208\\
    \midrule
    \multirow{3}{*}{$p = 2$}
    & 0.1 & 1.557 &8.656 &  8.756 &-1.308 &-0.970 &28.939\\
    & 0.5 & 1.317 &8.082 &  7.586 &-1.318 &-0.974 &25.171\\
    & 0.9 & 1.254 &7.939 &  7.376 &-1.323 &-0.977 &24.532\\
    \midrule
    \multirow{3}{*}{$p = 2.4$}
    & 0.1 &  1.383& 8.239   &7.793& -1.313& -0.972& 25.782\\
    & 0.5&   1.220& 7.863 &  7.237& -1.325& -0.978& 24.086\\
    & 0.9 &  1.176& 7.768  & 7.139& -1.333& -0.981& 23.806\\
    \midrule
    \multicolumn{2}{c}{$\Finvbench$} & 
     1.000  & 7.389  & 6.996 & -1.439 & -1.071 &  23.280 \\
    \bottomrule\bottomrule
    \end{tabular}
    \caption{Various summary statistics of the benchmark's terminal wealth and the optimal strategy's terminal wealth for varying $p \in \{1.6, 2, 2.4\}$ and $\alpha \in \{0.1, 0.5, 0.9\}$. Common parameters are $c = 0.9$, $x_0 = 1$, $\gamma = \frac12$, and $\ep = 0.5$.
    }
    \label{tab:statistics-various-alpha}
\end{table}
First, we observe that the strategies for all choices of $p$ and $\alpha$, lead to larger mean, standard deviation, and UTE than the benchmark, while the risk measures VaR and ES are comparable to the benchmark. Recall that larger $p$ penalises deviations from the right tail of the benchmark more compared to smaller $p$, which is observed by the mean, GLR, and UTE being larger for $p = 1.6$ compared the $p = 2.4$. Moreover, $\alpha = 0.1$, puts a strong penalty on underperforming the benchmark, and indeed, the pair $(p = 2.4, \alpha = 0.5)$ yields the best performing strategy, particularly in the GLR, mean, and UTE. This is consistent with \Cref{fig:density-BW-alpha}, as for smaller $p$ (left panel) the mass from the centre is shifted to the very right of the distribution.

\begin{figure}[h]
    \centering
    \includegraphics[width=0.48\linewidth]{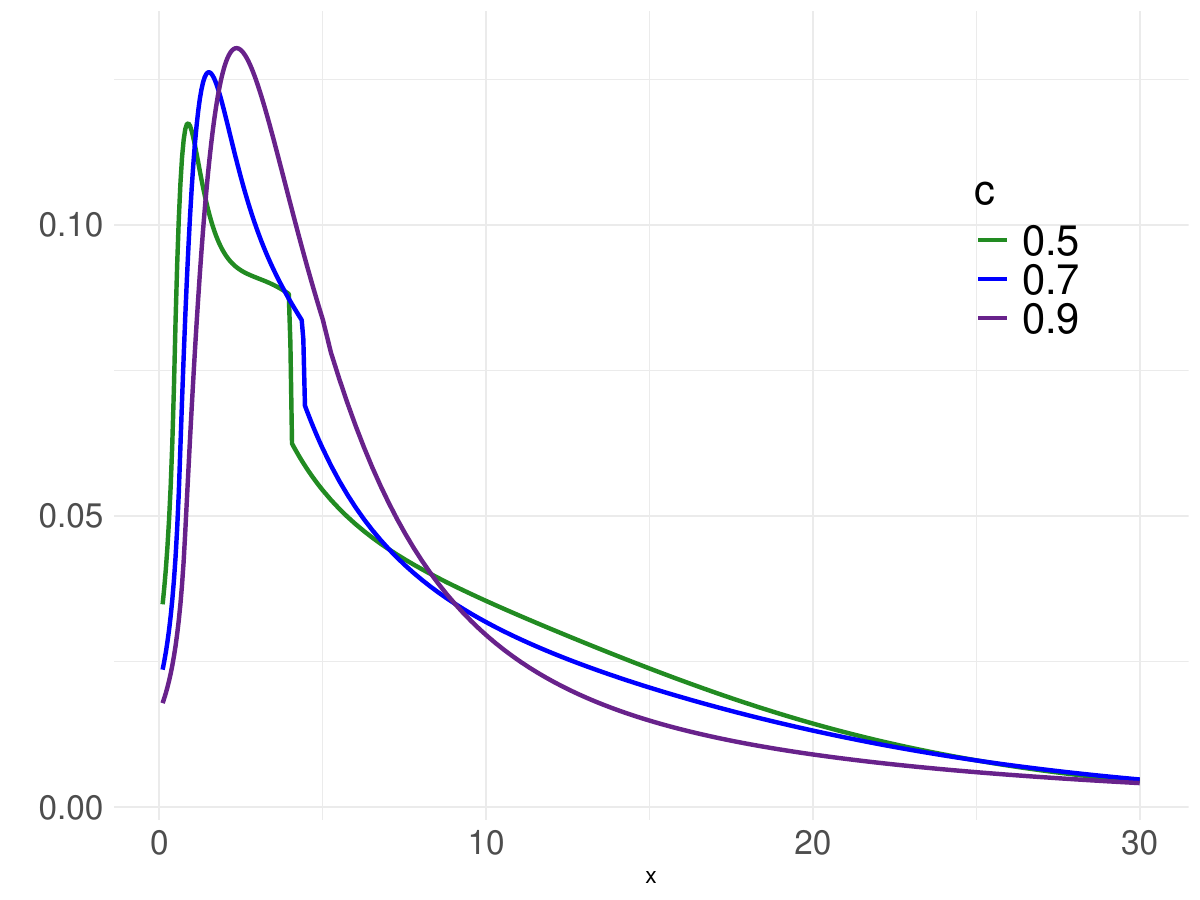}
    \includegraphics[width=0.48\linewidth]{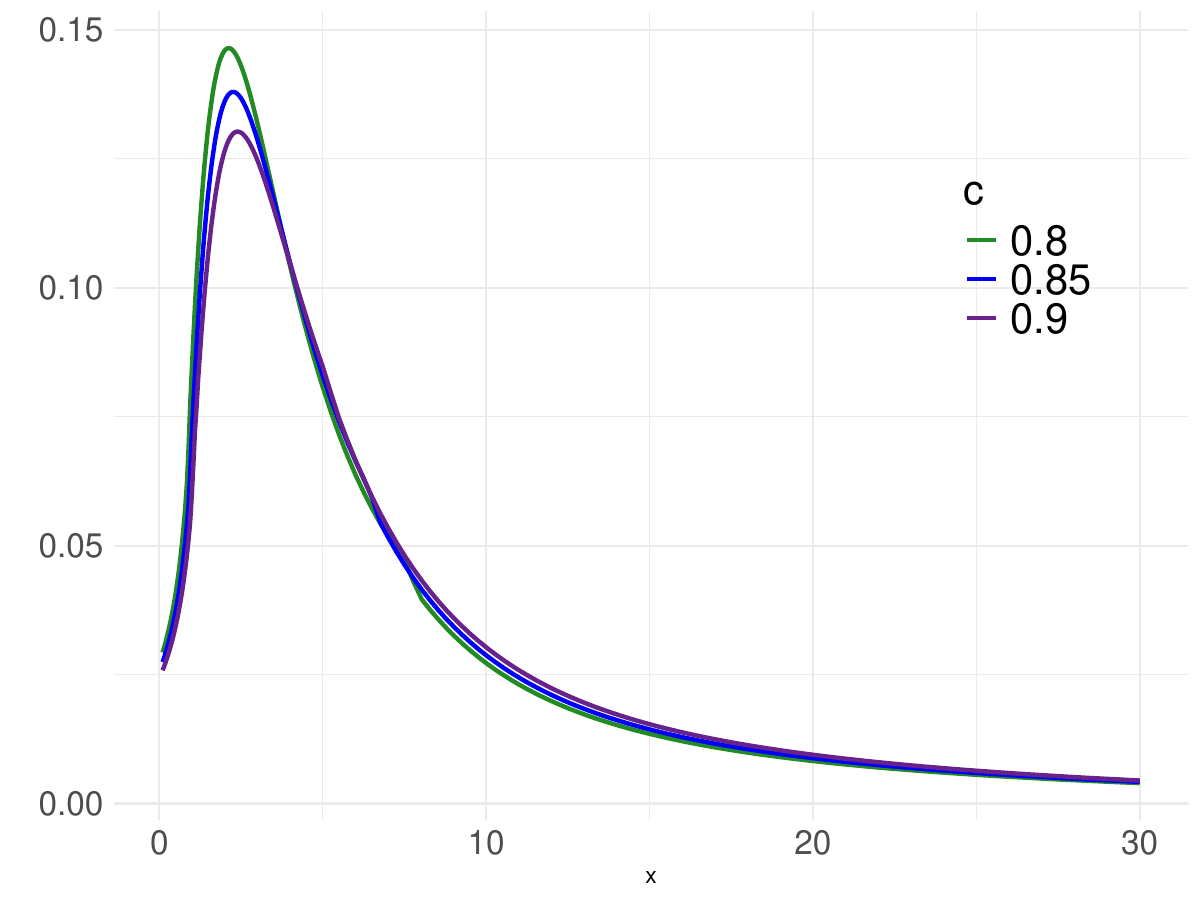}
    \caption{Optimal density function $g{^*}$ of \eqref{opt:main-prime} when both constraints are binding for different choices of $c$. Left $c \in \{0.5, 0.7, 0.9\}$ and $x_0 = 1$. Right $(c, x_0) \in \{(0.8, 0.9), (0.85, 0.95), (0.9, 1)\}$. 
    Common parameters are $\alpha = 0.1$, $p=1.6$, $\gamma = \frac12$, and $\ep = 0.5$.
    }
    \label{fig:density-BW-c}
\end{figure}

Finally, in \Cref{fig:density-BW-c} we consider the optimal density $g^*$ for different choices of $c$. Recall that the investor's criterion is the expected utility of their terminal wealth minus a proportion of the benchmark's, i.e. $\E[U(X - cY)]$. The left panel of \Cref{fig:density-BW-c} depicts the optimal densities $g^*$ for $c \in \{0.5, 0.7, 0.9\}$ and $x_0 = y_0 = 1$. As the investor's budget $x_0$ is equal to the cost of the benchmark $y_0$, the benchmark is attainable in this scenario. Thus the smaller $c$, the less the investor objective depends on the benchmark, and in the limiting case, when $c = 0$, it equals the expected utility of the investor's strategy's terminal wealth, i.e. $\E[U(X)]$. Note however that in all cases in the figure, $g^*$ still attains the $\alpha$-BW divergence constraints, thus is close in distribution to the benchmark's terminal wealth. We observe from the left panel of \Cref{fig:density-BW-c} that the smaller $c$ the more $g^*$ deviates from the benchmark and for small $c$ the density becomes close to bimodal. The right panel of \Cref{fig:density-BW-c} displays $g^*$ for jointly varying $c$ and $x_0$, i.e for $(c, x_0) \in \{(0.8, 0.9), (0.85, 0.95), (0.9, 1)\}$. A joint choice of $c$ and $x_0$ reflects the attainability of the benchmark. For example $(c, x_0) = (0.8, 0.9)$, means that the investor's budget it 0.9 while the cost of the benchmark is 1. Thus, the investor's criterion of $\E[U(X - 0.8\, Y)]$ reflects that the benchmark is not attainable and thus the investor should only compare their strategy to a proportion of that of the benchmark's, displayed in the rigth panel of \Cref{fig:density-BW-c}.

\section{Conclusion}
We solve the portfolio choice problem of an investor who aims at outperforming a benchmark strategy. The investor's criterion is the expected utility of the difference of their strategy's terminal wealth to a proportion of the benchmark's. The investor's strategy needs to satisfy a budget constraint and deviations from the benchmark's terminal wealth are taken into account via the asymmetric $\alpha$-Bregman-Wasserstein divergence. The $\alpha$-Bregman-Wasserstein divergence allows through the Bregman generator to penalise deviations from the benchmark according to absolute wealth, e.g., a larger penalty for deviations from the benchmark that result in losses compared to gains. The $\alpha$ parameter in the divergences, moreover, allows the investor to distinguish the penalty between over- and underperforming the benchmark.  

We solve the portfolio choice problem by deriving its quantile reformulation, and characterise the solutions when only one constraint (either the budget or the $\alpha$-Bregman-Wasserstein divergence) is binding. For the case when both constraints are binding, we derive the optimal quantile function and give explicit conditions when a solution exists. We conclude with an illustration of the optimal quantile functions in a geometric Brownian Motion market model.

\appendix
\setcounter{section}{0}
\justifying

\section{Auxiliary Results}
In this appendix, we understand inequalities and equalities on the level of quantile function pointwise in $u\in(0,1)$, e.g., $\Finv\le\Ginv$ means $\Finv(u) \le  \Ginv(u)$ for all $u\in(0,1)$, that is $\Ginv$ first-order stochastically dominates $\Finv$.

\begin{lemma}\label{lemma:auxiliary-1-main-proof}
   Let $\ell, h\colon \mcM \to \R$ be two strictly convex functions. Assume that $\ell(\Ginv)\ge h (\Ginv)$ for all $\Ginv\in \mcM$ and there exists $\Ginv_0\in \mcM$ such that $\ell(\Ginv_0)=h(\Ginv_0)$. Let $x_\ell^*,x_h^*\in \mcM$ be the unique minimiser of $\ell$ and $h$ respectively. Then, $\Ginv_0(u)\notin \big(\min\{x_\ell^*(u),x_h^*(u)\}, \max\{x_\ell^*(u),x_h^*(u)\}\big)$ for all $u \in (0,1)$. 
\end{lemma}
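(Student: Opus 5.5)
The plan is to reduce the statement to a one-dimensional convexity argument at the touching point $\Ginv_0$. The two ingredients are that $\ell-h\ge 0$ attains its minimum value $0$ at $\Ginv_0$, and that each strictly convex function is strictly monotone on either side of its minimiser. The key observation is that for every $\Ginv\in\mcM$ and $t\in[0,1]$ the segment $\Ginv_t:=\Ginv_0+t(\Ginv-\Ginv_0)$ stays in the convex set $\mcM$. The map $t\mapsto(\ell-h)(\Ginv_t)$ is non-negative and vanishes at $t=0$. Hence the one-sided directional derivatives, which exist by convexity, satisfy
\begin{equation*}
    h'(\Ginv_0;\Ginv-\Ginv_0)\le \ell'(\Ginv_0;\Ginv-\Ginv_0)\quad\text{for all }\Ginv\in\mcM.
\end{equation*}
Strictly speaking this compares one-sided limits of difference quotients of a difference of convex functions, and I would verify it carefully from the definition rather than invoke a calculus rule.

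Next I would exploit strict convexity along the segments towards the minimisers. Along $d_\ell:=x^*_\ell-\Ginv_0$, strict convexity and minimality of $x^*_\ell$ give $\ell'(\Ginv_0;d_\ell)\le \ell(x^*_\ell)-\ell(\Ginv_0)<0$ whenever $\Ginv_0\neq x^*_\ell$. Likewise $h'(\Ginv_0;d_h)<0$ along $d_h:=x^*_h-\Ginv_0$. Reversing the directions, $\ell'(\Ginv_0;-d_\ell)>0$ and $h'(\Ginv_0;-d_h)>0$, provided $-d_\ell$ and $-d_h$ are admissible directions at $\Ginv_0$, i.e.\ $\Ginv_0-s\,d\in\mcM$ for small $s>0$.

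Now suppose, for contradiction, that at some $u_0$ we have, say, $x^*_h(u_0)<\Ginv_0(u_0)<x^*_\ell(u_0)$; the other ordering is symmetric with the roles of $\ell$ and $h$ exchanged. In the scalar picture this means $\Ginv_0$ lies to the right of the minimiser of $h$, so $h$ is increasing there, and to the left of the minimiser of $\ell$, so $\ell$ is decreasing there. With one-sided derivatives this reads
\begin{equation*}
    h'_-(\Ginv_0)>0\quad\text{and}\quad \ell'_+(\Ginv_0)<0.
\end{equation*}
The touching condition gives $\ell'_+\ge h'_+$. Combining, $0>\ell'_+\ge h'_+\ge h'_->0$, which is a contradiction.

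The main obstacle is transporting this scalar argument to the functional setting, since betweenness is only pointwise at $u_0$ while $\ell$ and $h$ are functionals on $\mcM$. I would handle this by choosing a direction $d$ that is localised near $u_0$, on the maximal interval around $u_0$ where the strict ordering persists. On that interval $d$ should be positively aligned with $x^*_\ell-\Ginv_0$ and negatively aligned with $x^*_h-\Ginv_0$, concretely a truncation of $\min\{x^*_\ell,\cdot\}$ towards $\Ginv_0$, adjusted to keep $\Ginv_0+sd$ non-decreasing and left-continuous. I would then show $\ell'(\Ginv_0;d)<0<h'(\Ginv_0;d)$ by monotonicity of each functional along such localised perturbations.

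This is exactly where the structure used in the paper's application enters: there $\ell$ and $h$ are integrals $\int_0^1 L(\Ginv(u),u)\,\d u$ with strictly convex integrands, as in the proof of \Cref{Thm:main-optimal-quantile}. For such functionals the localised directional derivative is the integral of the pointwise one over the interval, and the scalar contradiction above applies $u$-by-$u$ on a set of positive measure. If that structure is not available, I expect this step to fail in full generality. In that case I would state and prove the lemma under the integral-functional hypothesis, which is all that \Cref{lemma:auxiliary-2-main-proof} requires.
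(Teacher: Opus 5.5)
Your scalar core is correct and is essentially the paper's argument. The paper compares values, $h(x_\ell^*)>h(\Ginv_0)=\ell(\Ginv_0)>\ell(x_\ell^*)$, where you compare one-sided derivatives. Like you, it applies this as if $\mcM$ were totally ordered, and it never addresses the passage from betweenness at a single $u_0$ to the functionals. You rightly flag that passage as the crux, but your proposed repair cannot work. The statement is false as written, and it stays false for integral functionals with pointwise domination and pointwise touching, which is exactly the setting of \Cref{lemma:auxiliary-2-main-proof} you propose to fall back on.

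Here is a counterexample. Take $k:=\Id_{(0,\frac12]}-\Id_{(\frac12,1)}$, the continuous quantile function $\Ginv_0(u):=\min\{\max\{3u-\tfrac12,\tfrac14\},1\}$, and
\begin{equation*}
h(\Ginv):=\int_0^1\big(\Ginv(u)-k(u)\big)^2\,\d u\,,\qquad \ell(\Ginv):=h(\Ginv)+2\int_{\frac12}^1\big(\Ginv(u)-1\big)^2\,\d u\,.
\end{equation*}
Both are strictly convex and finite on square-integrable quantile functions. Their integrands satisfy $\ell(x,u)\ge h(x,u)$, with equality at $x=\Ginv_0(u)$ for every $u$, because $\Ginv_0=1$ on $[\frac12,1)$. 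By Jensen, replacing $\Ginv$ by its averages $a\le b$ on $(0,\frac12]$ and $(\frac12,1)$ does not increase either functional. The unconstrained reduced minimisers $(a,b)=(1,-1)$ and $(1,\frac13)$ both violate $a\le b$, so both problems pool, giving $x_h^*\equiv0$ and $x_\ell^*\equiv\frac12$. Hence $x_h^*(u)<\Ginv_0(u)<x_\ell^*(u)$ for all $u\in(0,\frac13)$.

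The failing step is your claim that, for integral functionals, ``the scalar contradiction applies $u$-by-$u$''. That contradiction needs $x_h^*(u)$ and $x_\ell^*(u)$ to minimise the integrands at $u$. They are instead minimisers over $\mcM$, and these pool across $u$ (isotonic projection). In the example the two integrands coincide on $(0,\frac12]$ and have $x$-derivative $2(\Ginv_0(u)-1)<0$ there. So every admissible $d\ge0$ supported in $(0,\frac12)$ gives $\ell'(\Ginv_0;d)=h'(\Ginv_0;d)<0$, even though $x_h^*(u_0)<\Ginv_0(u_0)$. The sign $h'(\Ginv_0;d)>0$ you need is wrong, and no choice of localised direction can fix it.

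A correct statement needs an extra hypothesis, for example that $x_\ell^*(u)$ and $x_h^*(u)$ minimise $x\mapsto\ell(x,u)$ and $x\mapsto h(x,u)$ for each $u$ (no pooling). Under that hypothesis your one-dimensional argument at each fixed $u$ proves the claim directly, with no localisation at all. In the paper's setting this happens, for instance, when $\xi$ is non-increasing (\Cref{asm:xi-non-increasing}), since the pointwise first-order solutions are then already non-decreasing in $u$. Without such an assumption, the lemma must be revised rather than reproved, and so must \eqref{eq:interval-rep} in \Cref{lemma:auxiliary-2-main-proof}, which rests on it and which the same continuous example also violates.
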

\begin{proof}
If $x_h^*= x_\ell^*$, then  $\big(\min\{x_\ell^*(u),x_h^*(u)\}, \max\{x_\ell^*(u),x_h^*(u)\}\big) = \emptyset$. 

For the other cases we argue by contradiction. Assume first that $x_h^*< x_\ell^*$, therefore $x_h^*<\Ginv_0<x_\ell^*$. Due to strict convexity, $h$ (resp. $\ell$) is strictly increasing (resp. strictly decreasing) on $(x_h^*,x_\ell^*)$. It follows that $h(x_\ell^*)>h(\Ginv_0)=\ell(\Ginv_0)>\ell(x_\ell^*)$, which contradicts the assumption that $\ell(\Ginv)\ge h (\Ginv)$ for all $\Ginv\in \mcM$.
Next, suppose that  $x_\ell^*<x_h^*$, i.e. $x_\ell^*<\Ginv_0<x_h^*$. Similarly, $\ell$ (resp. $h$) is strictly increasing (resp. decreasing) on $(x_\ell^*,x_h^*)$, which implies that $h(x_\ell^*)>h(\Ginv_0)=\ell(\Ginv_0)>\ell(x_\ell^*)$, which also contradicts the assumption that $\ell(\Ginv)\ge h (\Ginv)$ for all $\Ginv\in \mcM$.
\end{proof}

\begin{lemma}\label{lemma:auxiliary-2-main-proof}
   Let $h, \ell \colon \mcM \times (0,1) \to \R$ be strictly convex in their first components and assume that
      \begin{enumerate}[label = $\roman*)$]
       \item 
       $\ell(\Ginv(u), u) \ge h(\Ginv(u), u)$, for all $\Ginv \in \mcM$ and $u \in (0,1)$.

       \item 
       there exit $\Ginv_1\in \mcM$ which is continuous and satisfies $\ell(\Ginv_1(u),u)= h(\Ginv_1(u), u)$, for all $u \in (0,1)$. 
   \end{enumerate}
Consider the following functions $\bar{h}, \bar{\ell}\colon \mcM \to \R$ given by 
\begin{equation*}
    \bar{h}(\Ginv):= \int_0^1 h(\Ginv(u), u) \, du \quad \text{and} \quad
    \bar{\ell}(\Ginv):= \int_0^1 \ell(\Ginv(u), u) \, du\, ,
\end{equation*}
and denote by $x_h^*, x_\ell^*\in \mcM$ the minimiser of $\bar{h}$ and $\bar{\ell}$, respectively. That is $x_h^* =\argmin_{\Ginv\in\mcM}\bar{h}(\Ginv)$ and $x_\ell^* =\argmin_{\Ginv\in\mcM}\bar{\ell}(\Ginv)$. 

Next, for $\Ginv_0\in \mcM$ that satisfies $\Ginv_0(u) \le \Ginv_1(u)$, for all $u \in (0,1)$, define the function $H \colon \mcM \to \R$ by
   \begin{equation*}
       H(\Ginv)
       :=
       \int_0^1        
         h\big(\Ginv(u), u\big)\Id_{\Ginv(u) 
         \in [\Ginv_0(u), \Ginv_1(u))}
        +
       \ell\big(\Ginv(u), u\big)\Id_{\Ginv(u) \ge \Ginv_1(u)} 
    \, du \,.
   \end{equation*}
If $x_h^*(u) \ge \Ginv_0(u)$ for all $u \in (0,1)$, then the unique minimiser of $H$ is
\begin{equation*}
    \argmin_{\Ginv \in \mcM} H(\Ginv) =
    \begin{cases}
        x^*_\ell & \text{if} \qquad \Ginv_1 \le \min\{x_\ell^*, x_h^*\}\,,
        \\
        x^*_h & \text{if} \qquad \Ginv_1 > \max\{x_\ell^*, x_h^*\}\,,
    \end{cases}
\end{equation*}
which defines a quantile function.
\end{lemma}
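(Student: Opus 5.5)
Our plan is to reduce the minimisation of $H$ over $\mcM$ to a pointwise comparison and then use \Cref{lemma:auxiliary-1-main-proof}, applied for each fixed $u$, to determine on which side of $\Ginv_1(u)$ each of $x_h^*(u)$ and $x_\ell^*(u)$ lies. For fixed $u$, define the scalar functions $x\mapsto h(x,u)$ and $x\mapsto\ell(x,u)$. By i) and ii) they satisfy $\ell\ge h$ and touch at $x=\Ginv_1(u)$. We will use the one-dimensional analogue of \Cref{lemma:auxiliary-1-main-proof}, whose proof carries over verbatim. If $x_h^*$ and $x_\ell^*$ are in fact pointwise minimisers, this gives $\Ginv_1(u)\notin(\min\{x_\ell^*(u),x_h^*(u)\},\max\{x_\ell^*(u),x_h^*(u)\})$. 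Hence every $u$ satisfies either $\Ginv_1(u)\le\min\{x_\ell^*(u),x_h^*(u)\}$ or $\Ginv_1(u)\ge\max\{x_\ell^*(u),x_h^*(u)\}$, so the two cases in the statement exhaust $(0,1)$, apart from the tie $\Ginv_1(u)=\max$, which we absorb into the first case by continuity.

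Next, we define the candidate $G^*(u):=x_\ell^*(u)$ on $A:=\{u:\Ginv_1(u)\le\min\{x_\ell^*(u),x_h^*(u)\}\}$ and $G^*(u):=x_h^*(u)$ on the complement. We show that it minimises the integrand of $H$ pointwise. The integrand $k_u(x)=h(x,u)\Id_{[\Ginv_0(u),\Ginv_1(u))}(x)+\ell(x,u)\Id_{x\ge\Ginv_1(u)}$ is continuous at $\Ginv_1(u)$ by ii), and it is restricted to $x\ge\Ginv_0(u)$; this is where the hypothesis $x_h^*\ge\Ginv_0$ enters. On $A$, $\ell(\cdot,u)$ is strictly convex with minimiser to the right of $\Ginv_1(u)$. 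On $[\Ginv_0(u),\Ginv_1(u))$ we have $k_u=h\ge\ldots$, and since $h(\cdot,u)$ has its minimiser at $x_h^*(u)\ge\Ginv_1(u)$, it is decreasing there. Thus $k_u(x)\ge h(\Ginv_1(u),u)=\ell(\Ginv_1(u),u)\ge\ell(x_\ell^*(u),u)$. On the complement, symmetrically, $x_h^*(u)\in[\Ginv_0(u),\Ginv_1(u))$ and $\ell(\cdot,u)$ is increasing on $[\Ginv_1(u),\infty)$, so $k_u(x)\ge \ell(\Ginv_1(u),u)=h(\Ginv_1(u),u)\ge h(x_h^*(u),u)$ there. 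Integrating over $u$, $H(G)\ge H(G^*)$ for every $G$, and strict convexity gives uniqueness.

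The remaining steps, and the main obstacle, are twofold. First, we need $x_h^*$ and $x_\ell^*$, defined as minimisers over $\mcM$, to coincide with the pointwise minimisers. If they are not, the argument above only shows pointwise optimality of $G^*$ among all functions, and the fix is to prove $G^*\in\mcM$, so that it is admissible and hence optimal over $\mcM$ as well. We would work under the reading, implicit in the paper's applications, that the pointwise minimisers are non-decreasing; otherwise we would replace the pointwise argument by a variational inequality for monotone functions. Second, $G^*\in\mcM$ must be checked directly. On each region $G^*$ agrees with a non-decreasing function. At a switch point $u_0$ where we pass from the complement to $A$, we have $x_h^*(u_0)\le\Ginv_1(u_0)\le x_\ell^*(u_0^+)$. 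At a switch from $A$ back to the complement, continuity of $\Ginv_1$ forces both minimisers to equal $\Ginv_1(u_0)$. Consequently $G^*$ has no downward jumps, is left-continuous, and is therefore a quantile function. Careful handling of the tie set $\{\Ginv_1=x_\ell^*=x_h^*\}$, where both formulas agree, completes the proof.
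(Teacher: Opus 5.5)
Your main line is the paper's argument made precise. The paper also uses \Cref{lemma:auxiliary-1-main-proof} to exclude $\Ginv_1$ from $(\min\{x_\ell^*,x_h^*\},\max\{x_\ell^*,x_h^*\})$, compares the $h$- and $\ell$-pieces of $H$ through $h(\Ginv_1)=\ell(\Ginv_1)$, and pastes $x_\ell^*$ and $x_h^*$. Your pointwise version is correct whenever $x_h^*(u)$ and $x_\ell^*(u)$ minimise $h(\cdot,u)$ and $\ell(\cdot,u)$. Your analysis of the switch direction is also more careful than the paper's assertion of a single switch point $u^\dagger$ with $x_\ell^*$ to its left. One small correction: ties with $\Ginv_1(u)=\max\{x_\ell^*(u),x_h^*(u)\}>\min\{x_\ell^*(u),x_h^*(u)\}$ belong to the $x_h^*$ branch, not to the first case as your opening paragraph says. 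Your definition of $G^*$ already handles them correctly.

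What must change is the fallback. If $x_h^*,x_\ell^*$ are only minimisers over $\mcM$, no variational-inequality argument can help, because the statement is false there. The pointwise property therefore has to be a hypothesis, not a convenience. Here is a counterexample:
\begin{itemize}
\item Take $\Ginv_1(u)=u$, $\Ginv_0\equiv-1$, $h(x,u)=\tfrac12(x-1+u)^2$ and $\ell(x,u)=h(x,u)+\tfrac32(x-u)^2$. All hypotheses hold.
\item Then $x_\ell^*(u)=\tfrac{1+2u}{4}$, and $x_h^*\equiv\tfrac12$, the isotonic projection of $1-u$.
\item The lemma's candidate is $x_\ell^*$ on $(0,\tfrac12]$ and $\tfrac12$ on $(\tfrac12,1)$. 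It gives $H=\tfrac1{16}+\tfrac1{48}=\tfrac1{12}$.
\item The constant $\tfrac13$ gives $H=\tfrac1{18}+\tfrac1{54}=\tfrac2{27}<\tfrac1{12}$.
\end{itemize}
The step that breaks is your pointwise bound $k_u(x)\ge h(x_h^*(u),u)$ on the complement of $A$. It uses that $x_h^*(u)$ minimises $h(\cdot,u)$, which fails for an isotonic projection. The paper's proof hides the same issue by applying a one-dimensional lemma to the integrated functionals $\bar h,\bar\ell$. So state the pointwise-minimiser property as an assumption of the lemma. It holds, for instance, when no isotonic projection is active.
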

\begin{proof}
By the assumptions on $\ell$ and $h$, we have that $\bar{\ell}$ and $\bar{h}$ are strictly convex and satisfy $\bar{\ell}(\Ginv) \ge \bar{h}(\Ginv) $ for all $\Ginv\in\mcM$. Moreover, it holds that $\bar{\ell}(\Ginv_1) =\bar{h}(\Ginv_1)$. Then, applying \Cref{lemma:auxiliary-1-main-proof} to $\bar{\ell}$ and $\bar{h}$, we obtain
\begin{equation}\label{eq:interval-rep}
\begin{split}
    \Ginv_1 &\not\in \big(\min\{x_\ell^*, x_h^*\}, \, \max\{x_\ell^*, x_h^*\}\big)\,,
\end{split}
\end{equation}
where the inclusions above are to be understood pointwise in $u \in (0,1)$. Now, there are two possible cases: 

\underline{Case 1}: Let $\Ginv_1 \le \min\{x_\ell^*, x_h^*\}$. As $x^*_\ell \ge \Ginv_1$, we have
\begin{equation*}
    \inf_{\Ginv \in \mcM} \bar{\ell}(\Ginv) 
    =
    \bar{\ell}(x_\ell^*) 
    =
    \inf_{\Ginv \in \mcM} \int_0^1 \ell(\Ginv(u), u)\Id_{\Ginv(u) \ge \Ginv_1(u)} \,du \,.
\end{equation*}
Further, as $x_h^*\ge \Ginv_1$, we have
\begin{equation*}
    \inf_{\Ginv \in \mcM} \int_0^1 h(\Ginv(u), u)\Id_{\Ginv(u) \in[ \Ginv_0(u), \Ginv_1(u))}\,du
    >
    \bar{h}(\Ginv_1)
    =
    \bar{\ell}(\Ginv_1) 
    \ge 
    \bar{\ell}(x_\ell^*)\,,
\end{equation*}
where the equality follows by definition of $\Ginv_1$ and the last inequality as $x_\ell^*$ is the minimiser of $\ell$.

Combining the above arguments and that $x_\ell^*(u) \ge \Ginv_1(u) \ge \Ginv_0(u)$, $u \in (0,1)$, the infimum of $H$ is attained at $x^*_\ell$.

\underline{Case 2}: Let $\Ginv_1 > \max\{x_\ell^*, x_h^*\}$. Then, as $x^*_h < \Ginv_1$ and by assumption that $x_h^* \ge \Ginv_0$, we obtain
\begin{equation*}
    \inf_{\Ginv \in \mcM} \int_0^1 h(\Ginv(u), u)\Id_{\Ginv(u) \in[ \Ginv_0(u), \Ginv_1(u))} \,du 
    =
    \bar{h}(x_h^*)\, .
\end{equation*}

Next, as $x_\ell^* < \Ginv_1$, we obtain that 
\begin{equation*}
    \inf_{\Ginv \in \mcM} \int_0^1 \ell(\Ginv(u), u)\Id_{\Ginv(u) \ge \Ginv_1(u)}\,du
    \ge 
    \bar{\ell}(\Ginv_1)
    =
    \bar{h}(\Ginv_1)
    \ge 
    \bar{h}(x_h^*)\,,
\end{equation*}
where the equality follows by definition of $\Ginv_1$ and the last inequality as $x_h^*$ is the minimiser of $h$. Thus, the infimum of $H$ is attained at $x_h^*$, which satisfies $x_h^* \ge \Ginv_0$.

Combining both cases gives the representation of the argmin of $H$ as
\begin{equation*}
    \argmin_{\Ginv \in \mcM} H(\Ginv) =
    \begin{cases}
        x^*_\ell & \text{if} \quad  \Ginv_1 \le \min\{x_\ell^*, x_h^*\}\,,
        \\
        x_h^* & \text{if} \quad \Ginv_1 > \max\{x_\ell^*, x_h^*\}\,,
    \end{cases}
\end{equation*}
Thus, we obtain the representation of the argmin of $H$ in the statement. 

Finally, we show that the minimiser of $H$ is a quantile function, i.e.,  it is non-decreasing. Since $\Ginv_1$ is continuous and satisfies \eqref{eq:interval-rep}, there must exist a $ \uu \in (0,1)$ such that $\Ginv_1(\uu) = \min\{x^*_\ell(\uu), x^*_h(\uu)\} = \max\{x^*_\ell(\uu), x^*_h(\uu)\}$, and in particular $x^*_\ell(\uu)= x^*_h(\uu)$. Thus, the minimiser of $H $ is the function
\begin{equation}\label{eq:argmin-H}
   x_\ell^*(u)\,  \Id_{u \le \uu} 
    + 
    x_h^*(u) \, \Id_{u > \uu}
\,, \quad u \in (0,1).
\end{equation}
Now as $x_\ell^*, x_h^*\in \mcM$, and in particular are non-decreasing, and we have that  $x_\ell^*(\uu) = x_h^*(\uu)$, the function in \eqref{eq:argmin-H} is indeed non-decreasing, and thus a quantile function. 
\end{proof}

The next results follows along the lines of the proof of \Cref{lemma:auxiliary-2-main-proof} and is omitted.

\begin{corollary}\label{lemma:auxiliary-2-main-proof-alpha-small}
    Let all assumptions of \Cref{lemma:auxiliary-2-main-proof} be satisfied but define 
    \begin{equation*}
       H(\Ginv)
       :=
       \int_0^1        
         \ell\big(\Ginv(u), u\big)\Id_{\Ginv(u) 
         \in [\Ginv_0(u), \Ginv_1(u))}
        +
       h\big(\Ginv(u), u\big)\Id_{\Ginv(u) \ge \Ginv_1(u)} 
    \, du \,.
   \end{equation*}
If $x_\ell^*(u) \ge \Ginv_0(u)$ for all $u \in (0,1)$, then the unique minimiser of $H$ is
\begin{equation*}
    \argmin_{\Ginv \in \mcM} H(\Ginv) =
    \begin{cases}
        x^*_h & \text{if} \qquad \Ginv_1 \le \min\{x_\ell^*, x_h^*\}\,,
        \\
        x^*_\ell & \text{if} \qquad \Ginv_1 > \max\{x_\ell^*, x_h^*\}\,,
    \end{cases}
\end{equation*}
which defines a quantile function.
\end{corollary}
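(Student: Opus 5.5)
The plan is to reduce this corollary to \Cref{lemma:auxiliary-2-main-proof} by swapping the roles of $h$ and $\ell$ in the piecewise objective and re-running the same two-case argument. The only inputs used in that proof are the following: $\bar\ell\ge\bar h$ with equality at $\Ginv_1$, strict convexity of $\bar\ell$ and $\bar h$, and the location of the minimisers relative to $\Ginv_1$. First, I will apply \Cref{lemma:auxiliary-1-main-proof} to $\bar\ell,\bar h$ exactly as before. This gives, pointwise in $u$, that $\Ginv_1\notin(\min\{x_\ell^*,x_h^*\},\max\{x_\ell^*,x_h^*\})$, so for every $u$ either $\Ginv_1(u)\le\min\{x^*_\ell(u),x^*_h(u)\}$ or $\Ginv_1(u)>\max\{x^*_\ell(u),x^*_h(u)\}$, up to the tie where the two minimisers coincide.

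\underline{Case 1}: $\Ginv_1\le\min\{x_\ell^*,x_h^*\}$. The upper piece of $H$ is now $h$ on $\{\Ginv\ge\Ginv_1\}$. Since $x_h^*\ge\Ginv_1$, the unconstrained minimiser $x_h^*$ already lies in this region, so the infimum of the upper piece equals $\bar h(x_h^*)$. For the lower piece, $\ell$ on $[\Ginv_0,\Ginv_1)$, I use that $x_\ell^*\ge\Ginv_1$ and that $\bar\ell$ is strictly convex. Hence $\bar\ell$ is decreasing towards $x_\ell^*$ from below, and any $\Ginv<\Ginv_1$ has value at least $\bar\ell(\Ginv_1)=\bar h(\Ginv_1)\ge\bar h(x_h^*)$. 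So the minimiser is $x_h^*$.

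\underline{Case 2}: $\Ginv_1>\max\{x_\ell^*,x_h^*\}$. Now $x_\ell^*<\Ginv_1$, and by the new hypothesis $x_\ell^*\ge\Ginv_0$. Thus $x_\ell^*$ is admissible for the lower piece, and its infimum equals $\bar\ell(x_\ell^*)$. On the upper piece, $\Ginv\ge\Ginv_1>x_h^*$, so by convexity the value is at least $\bar h(\Ginv_1)=\bar\ell(\Ginv_1)\ge\bar\ell(x_\ell^*)$. So the minimiser is $x_\ell^*$. Uniqueness follows from strict convexity in each region, as in the lemma.

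Finally I need the monotonicity of the glued function. Since $\Ginv_1$ is continuous and the two cases are separated by the excluded open interval, there must be a switching point $\uu$ where $x_h^*(\uu)=x_\ell^*(\uu)=\Ginv_1(\uu)$. The minimiser is then $x_h^*\Id_{u\le\uu}+x_\ell^*\Id_{u>\uu}$, which is non-decreasing because both pieces are non-decreasing and agree at $\uu$.

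The main obstacle is the pointwise-versus-global subtlety, which the original lemma also glosses over. The comparisons "$x_h^*\ge\Ginv_1$" are made pointwise on subsets of $u$, whereas $\bar h,\bar\ell$ are integrals over all $u$. To handle this, I would first argue pointwise in $u$ on the integrands, using strict convexity of $h(\cdot,u)$ and $\ell(\cdot,u)$, and only then integrate. I would also verify, rather than assume, that the set where Case 1 holds is an interval $(0,\uu]$.
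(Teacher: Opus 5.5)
Your proposal is correct at the paper's own level of rigour and is exactly the intended argument: the paper omits the proof as following the lines of \Cref{lemma:auxiliary-2-main-proof}, and you rerun that two-case argument with $h$ and $\ell$ exchanged, using the new hypothesis $x_\ell^*\ge\Ginv_0$ precisely where $x_\ell^*$ must be admissible for the lower piece in Case~2. On your extra checks: the pointwise-in-$u$ repair only works when $x_h^*,x_\ell^*$ are themselves the pointwise minimisers of $h(\cdot,u),\ell(\cdot,u)$ (no active isotonic projection), and you should not try to show that Case~1 occupies $(0,\uu]$, since in the paper's GBM illustration Case~1 (outperformance) occurs for large $u$ --- though the glued minimiser is non-decreasing for either orientation.
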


\begin{lemma}\label{lemma:Ginv-monotonicity-beta}
    For $\Ginv_{\bfeta}^{(\beta)}(u)$, $\beta\in(0,1)$, defined in \eqref{eq:Ginv-beta}. Then it holds that 
    \begin{enumerate}[label = $\roman*)$]
        \item $\Ginv_\bfeta^{(\beta)}(u)$ is strictly decreasing in $\beta$ on the set $\{u \in (0,1)~|~ \Finvbench(u) < \Ginv_\bfeta^{(\beta)}(u)\}$. 
        
        \item $\Ginv_\bfeta^{(\beta)}(u)$ is strictly increasing in $\beta$ on the set $\{u \in (0,1)~|~ \Finvbench(u) > \Ginv_\bfeta^{(\beta)}(u)\}$. 
        
    \end{enumerate}
\end{lemma}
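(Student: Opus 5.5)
Fix $u\in(0,1)$ and $\bfeta$, and write $a:=\Finvbench(u)$ and $v:=\eta_2\,\beta\,\phi'(a)-\eta_1\xi(u)$. By \eqref{eq:Ginv-beta}, $x_\beta:=\Ginv^{(\beta)}_\bfeta(u)$ is the unique root of the first-order condition
\begin{equation*}
F(x,\beta):=-U'\big(x-ca\big)+\eta_2\,\beta\,\phi'(x)-\eta_2\,\beta\,\phi'(a)+\eta_1\xi(u)=0\,,
\end{equation*}
that is $\tilde H_\beta(x_\beta)=v$. The plan is to treat $x_\beta$ as an implicitly defined function of $\beta$ and to determine the sign of its variation in $\beta$. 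We use strict concavity of $U$ and strict convexity of $\phi$, so $x\mapsto F(x,\beta)$ is strictly increasing (for $\eta_2>0$). We also use
\begin{equation*}
\partial_\beta F(x,\beta)=\eta_2\big(\phi'(x)-\phi'(a)\big)\,,
\end{equation*}
whose sign equals the sign of $x-a$ because $\phi'$ is strictly increasing.

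To avoid differentiability issues, the argument will be a direct comparison. Take $\beta<\beta'$ and suppose $x_\beta>a$. Then
\begin{equation*}
F(x_\beta,\beta')=F(x_\beta,\beta)+(\beta'-\beta)\,\eta_2\big(\phi'(x_\beta)-\phi'(a)\big)>0\,.
\end{equation*}
Since $F(\cdot,\beta')$ is strictly increasing with root $x_{\beta'}$, this gives $x_{\beta'}<x_\beta$. To get strict decrease on the whole set $\{a<x_\beta\}$, we also need $x_{\beta'}\ge a$ whenever $x_\beta>a$, so that comparisons can be chained. Evaluate at $x=a$:
\begin{equation*}
F(a,\beta)=-U'\big((1-c)a\big)+\eta_1\xi(u)\,.
\end{equation*}
This expression is independent of $\beta$. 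Hence the sign of $x_\beta-a$, which equals $-\operatorname{sgn}F(a,\beta)$ by monotonicity, does not depend on $\beta$. So the sets $\{u:\Finvbench(u)<\Ginv^{(\beta)}_\bfeta(u)\}$ and $\{u:\Finvbench(u)>\Ginv^{(\beta)}_\bfeta(u)\}$ are the same for every $\beta$, and on each of them the comparison above applies for any pair $\beta<\beta'$. This proves $i)$. For $ii)$, $x_\beta<a$ makes the increment term negative, so $F(x_\beta,\beta')<0$ and therefore $x_{\beta'}>x_\beta$.

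Two points need care. The first is the well-definedness of $\tilde H_\beta^{-1}$. Using the convention $U'=+\infty$ on negative arguments and the Inada conditions, $F(\cdot,\beta)\to-\infty$ as $x\downarrow ca$ and $F(\cdot,\beta)\to+\infty$ as $x\to\infty$ (if $\eta_2>0$; if $\eta_2=0$ the function is independent of $\beta$ and the claim is vacuous or must be read as non-strict). So a unique root exists. The second is that \eqref{eq:Ginv-beta} has no isotonic projection under \Cref{asm:xi-non-increasing}, so the pointwise first-order condition is exactly the defining equation. If the statement is intended for the projected version \eqref{eq:def-opt-sol}, the $\beta$-dependence also enters through $[\cdot]^\uparrow$. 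That would be the main obstacle: I would then restrict to the setting of \eqref{eq:Ginv-beta} as written.
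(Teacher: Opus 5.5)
Your proof is correct, but it takes a different route from the paper's. The paper differentiates the defining identity $\tilde H_\beta\big(\Ginv^{(\beta)}_\bfeta(u)\big)=\eta_2\beta\phi'(\Finvbench(u))-\eta_1\xi(u)$ implicitly in $\beta$. This gives $\partial_\beta\Ginv^{(\beta)}_\bfeta(u)\,\{-U''(\cdot)+\eta_2\beta\phi''(\cdot)\}=\eta_2\big(\phi'(\Finvbench(u))-\phi'(\Ginv^{(\beta)}_\bfeta(u))\big)$, and the paper reads the sign off from the right-hand side.

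You instead use a finite-difference comparison. The increment $F(x,\beta')-F(x,\beta)=(\beta'-\beta)\eta_2(\phi'(x)-\phi'(a))$ is combined with strict monotonicity of $F(\cdot,\beta')$. What each approach buys:
\begin{itemize}
\item Your argument needs only that $U'$ and $\phi'$ are continuous and strictly monotone.
\item The paper's argument tacitly requires $\phi$ to be twice differentiable, which is not assumed. It also needs the curly bracket to be strictly positive, which strict concavity and convexity alone do not guarantee pointwise.
\item You observe that $F(a,\beta)=-U'((1-c)a)+\eta_1\xi(u)$ does not depend on $\beta$. This shows that the sets in $i)$ and $ii)$ are the same for every $\beta$. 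The paper's local derivative argument leaves this unaddressed, even though the statement indexes those sets by $\beta$.
\item In particular, for the unprojected functions in \eqref{eq:Ginv-beta}, $\Ginv^{(\alpha)}_\bfeta(u)$ and $\Ginv^{(1-\alpha)}_\bfeta(u)$ always lie on the same side of $\Finvbench(u)$. That is exactly the dichotomy the case distinction in \Cref{Thm:main-optimal-quantile} relies on.
\end{itemize}

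Two small points:
\begin{itemize}
\item Your claim that $F(\cdot,\beta)\to+\infty$ as $x\to\infty$ needs $\phi'$ to be unbounded, which is not given. You do not need it, though: $U'\to0$ and $\lim_{x\to\infty}\phi'(x)>\phi'(a)$, so the limit is strictly positive when $\eta_2\beta>0$. Hence the root still exists.
\item You are right that the strict statements require $\eta_2>0$. The paper's proof also needs this, without saying so.
\end{itemize}
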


\begin{proof}
    Recall that $\Ginv_\bfeta^{(\beta)}$, $\beta \in (0,1)$, satisfies
    \begin{equation}\label{eq:Ginv-beta-indirect}
        \tilde{H}_{\beta}\big(\Ginv^{(\beta)}_{\bfeta}(u)\big)
 =
 - U'\big(\Ginv^{(\beta)}_{\bfeta}(u) - c\Finvbench(u)\big) + \eta_2\, \beta\, \phi'\big(\Ginv^{(\beta)}_{\bfeta}(u)\big)
    =
    \eta_2 \, \beta\, \phi'\big(\Finvbench(u)\big) - \eta_1 \xi(u)\,.
    \end{equation}
Next, note that $\Ginv_\bfeta^{(\beta)}$ is differentiable in $\beta$ and taking derivatives with respect to $\beta$ in \eqref{eq:Ginv-beta-indirect} yields for all $u \in (0,1)$
\begin{equation*}
    \tfrac{\partial}{\partial \beta} \Ginv_\bfeta^{(\beta)}(u)\; 
    \left\{ 
    - U''\big(\Ginv^{(\beta)}_{\bfeta}(u) - c\Finvbench(u)\big) + \eta_2\, \beta\, \phi''\big(\Ginv^{(\beta)}_{\bfeta}(u)\big)
    \right\}
    =
    \eta_2 \, \left(\, \phi'\big(\Finvbench(u)\big) 
    -\phi'\big(\Ginv^{(\beta)}_{\bfeta}(u)\big)\right)\,.
\end{equation*}
Due to strict concavity of $U$ and strict convexity of $\phi$, the term in the curly brackets is always strictly positive. Thus the sign of $\frac{\partial}{\partial \beta} \Ginv_\bfeta^{(\beta)}(u)$ equals the sign of $\left(\, \phi'\big(\Finvbench(u)\big) 
    -\phi'\big(\Ginv^{(\beta)}_{\bfeta}(u)\big)\right)$. By strict convexity of $\phi$, we observe that $\tfrac{\partial}{\partial \beta} \Ginv_\bfeta^{(\beta)}(u)<0$, whenever $\Finvbench(u) < \Ginv_\bfeta^{(\beta)}(u)$. And that $\tfrac{\partial}{\partial \beta} \Ginv_\bfeta^{(\beta)}(u)>0$, whenever $\Finvbench(u) > \Ginv_\bfeta^{(\beta)}(u)$.     
\end{proof}

\begin{lemma}  \label{Le:monoblock} 
Let \(\mfU\colon \R \to \R\) with $\mfU(x) = -\infty$ for $x< 0$ and $\mfU(x) \le 0$ for $x >0$, and let $\mfU$ when restricted to $(0, +\infty)$, i.e. $\mfU \colon (0,+\infty) \to (-\infty, 0]$, be strictly increasing. Further define $\varsigma \colon (0,1)\to (-\infty, 0]$ and $\varphi\colon \R\to \R$ and assume that $\varsigma$ and $\varphi$ are strictly increasing.

Moreover, let \(f\in \mcM\) be square integrable. Then for each \(\eta>0\) and $x \in (0,1)$, 
denote by \(g(\eta,x)\) the solution in $g$ of
\begin{equation}\label{lemma:eq:indirect}
      \mfU\bigl(g-c f(x)\bigr)
      +\eta\,\Big(\varphi\big(g\big) - \varphi\bigl(f(x)\bigr)\Big)
      = \varsigma(x)\,.    
\end{equation}
   Then, \(g(\eta,x) > c \, f(x)\) for all $\eta>0$ and \(x\in(0,1)\). Moreover
\begin{enumerate}[label = $(\roman*)$]
  \item The mapping \(\eta\mapsto g(\eta,x)\) is strictly increasing on \(\{x \in (0,1)~|~  f(x) >g(\eta,x)>c\, g(x)\bigr\}\).
  \item The mapping \(\eta\mapsto g(\eta,x)\) is strictly decreasing on \(\{x\in (0,1)~|~ g(\eta,x)> f(x) \bigr\}\).
\end{enumerate}
\end{lemma}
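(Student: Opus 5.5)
The plan is to treat, for fixed $x\in(0,1)$, the left-hand side of \eqref{lemma:eq:indirect} as a function of $g$,
\begin{equation*}
    \Lambda_\eta(g) := \mfU\bigl(g - c f(x)\bigr) + \eta\,\Big(\varphi(g) - \varphi\bigl(f(x)\bigr)\Big)\,.
\end{equation*}
I will then get every claim from strict monotonicity of $\Lambda_\eta$ together with a direct comparison argument, without differentiating in $\eta$. On $(c f(x), +\infty)$ the map $\Lambda_\eta$ is a sum of two strictly increasing functions, because $\mfU$ is strictly increasing on $(0,+\infty)$, $\varphi$ is strictly increasing and $\eta>0$. Hence $\Lambda_\eta$ is strictly increasing there, and any solution of $\Lambda_\eta(g)=\varsigma(x)$ is unique.

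\textbf{Step 1: the solution lies strictly above $c f(x)$.} For $g < c f(x)$ we have $\Lambda_\eta(g) = -\infty < \varsigma(x)$, since $\varsigma$ is real-valued. At the boundary $g = c f(x)$ I use the Inada-type behaviour built into $\mfU$, namely $\mfU(0^+) = -\infty$; in the application $\mfU = -U'$, so this is $U'(0)=+\infty$. Thus $\Lambda_\eta(g)\to-\infty$ as $g\downarrow c f(x)$, so no solution can sit at or below $c f(x)$, and $g(\eta,x) > c f(x)$.

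\textbf{Step 2: existence.} Since $\Lambda_\eta$ is continuous and strictly increasing on $(c f(x),+\infty)$, existence follows from the intermediate value theorem provided $\lim_{g\to\infty}\Lambda_\eta(g) > \varsigma(x)$. This holds because $\mfU\le 0$ tends to its supremum and $\eta(\varphi(g)-\varphi(f(x)))$ becomes positive for $g>f(x)$. In any case, the lemma is formulated with $g(\eta,x)$ already given as \emph{the} solution, so I only need uniqueness and the interval on which it lives.

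\textbf{Step 3: monotonicity in $\eta$.} This is the core of the argument, and I expect it to be the only delicate step. I read the set in (i) as $\{x ~|~ f(x) > g(\eta,x) > c f(x)\}$; the ``$c\,g(x)$'' in the statement should be $c f(x)$. Fix $\eta' > \eta$ and write $g := g(\eta,x)$. Then
\begin{equation*}
    \Lambda_{\eta'}(g) = \Lambda_\eta(g) + (\eta'-\eta)\Big(\varphi(g) - \varphi\bigl(f(x)\bigr)\Big) = \varsigma(x) + (\eta'-\eta)\Big(\varphi(g) - \varphi\bigl(f(x)\bigr)\Big)\,.
\end{equation*}
Suppose first $g > f(x)$. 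Since $\varphi$ is strictly increasing, $\Lambda_{\eta'}(g) > \varsigma(x) = \Lambda_{\eta'}\bigl(g(\eta',x)\bigr)$. Strict monotonicity of $\Lambda_{\eta'}$ then gives $g(\eta',x) < g(\eta,x)$, which is (ii).

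Suppose instead $c f(x) < g < f(x)$. Then $\Lambda_{\eta'}(g) < \varsigma(x)$, so $g(\eta',x) > g(\eta,x)$, which is (i).

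The care needed is that the two sets are defined through $g(\eta,x)$ itself. The comparison above is therefore made at the base point $\eta$, and that suffices for strict monotonicity locally on each region. I would add a remark that the sign of $g(\eta,x)-f(x)$ cannot flip as $\eta$ varies: $g(\eta,x)=f(x)$ holds if and only if $\mfU\bigl((1-c)f(x)\bigr)=\varsigma(x)$, a condition independent of $\eta$. Hence each region is invariant in $\eta$, and the monotonicity holds on the whole region.
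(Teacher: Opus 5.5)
Your proposal is correct and is essentially the paper's argument. The paper also evaluates the equation for $\oeta$ at the solution $\ug = g(\ueta,\cdot)$, obtains the residual $(\oeta-\ueta)\bigl(\varphi(\ug)-\varphi(f)\bigr)$, and concludes by strict monotonicity of the left-hand side in $g$, phrased through the functional $D[\ell,\eta]$; it relies implicitly, as you do explicitly, on $\mfU(0^+)=-\infty$ for the claim $g(\eta,x)>c f(x)$. Your extra remark is a useful addition the paper leaves implicit: the sign of $g(\eta,x)-f(x)$ is fixed by comparing $\mfU\bigl((1-c)f(x)\bigr)$ with $\varsigma(x)$, so it does not depend on $\eta$, and this makes the $\eta$-dependent sets in (i) and (ii) well defined.
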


\begin{proof}
First, we show that for all $\eta>0$ and $x \in (0,1)$ a solution to \eqref{lemma:eq:indirect} exists. For this fix $\eta>0$ and $x \in(0,1)$ and note that $\mfU$ is only finite if $g(\eta, x) > c\,  f(x)$. Moreover, as $\mfU$ takes on all values in $(0,+\infty)$, and as $\varsigma(x)<0 $ for all $x \in (0,1)$, there always exists a solution to \eqref{lemma:eq:indirect}. 

Next, for $(i)$ we define the functional $D\colon\mcM \times [0, +\infty)\to \mcM$ by 
\begin{equation*}
    D\big[\ell, \eta\big]:=  \mfU\bigl(\ell-c f\bigr)
      +\eta\,\Big(\varphi(\ell) - \varphi\bigl(f\bigr)\Big)
      - \varsigma\,,
      \quad \text{for} \quad \ell \in \mcM\,,\; \eta >0\,.
\end{equation*}
Moreover, we have that $D$ is strictly increasing in $\ell$, i.e. for two functions $\ell_1, \ell_2 \in \mcM$ satisfying $\ell_1(x)\le \ell_2(x)$ for all $x\in(0,1)$, it holds that $D\big[\ell_1, \eta\big](x)\le D\big[\ell_2, \eta\big](x)$ for all $x \in (0,1)$.

Next, let $0<\ueta< \oeta$ and define the functions
\begin{equation*}
    \ug(\cdot):= g(\ueta, \cdot)
    \quad \text{and}\quad
    \og(\cdot):= g(\oeta, \cdot)\,,
\end{equation*}
that is $\ug$ (resp. $\og$) is the solution to \eqref{lemma:eq:indirect} with $\ueta$ (resp. $\oeta$). Then it holds that $D[\ug, \ueta] = D [\og, \oeta] = 0$ and moreover
\begin{align*}
   D[\ug, \oeta] 
    &= D[\ug, \oeta] - D[\ug, \ueta]
    \\
    &=   \mfU\bigl(\ug-c f\bigr)
      +\oeta\,\Big(\varphi(\ug) - \varphi\bigl(f\bigr)\Big)
      - \varsigma 
       -\mfU\bigl(\ug-c f\bigr)
      -\ueta\,\Big(\varphi(\ug) - \varphi\bigl(f\bigr)\Big)
      + \varsigma 
      \\
      &= 
     \big( \oeta -\ueta\big)\,\Big(\varphi(\ug) - \varphi\bigl(f\bigr)\Big)\,.
\end{align*}
Thus, for $x\in\{y \in(0,1)~|~ f(y) <\ug(y)\} $ it holds by strict increasingness of $\varphi$ that $D[\ug, \oeta](x)>0$. As furthermore, $D[\og, \oeta](x)=0$ on that set and $D$ is strictly increasing in its first component, we obtain that $\eta \mapsto g(x, \eta)$ is strictly decreasing for all $x\in\{y \in(0,1)~|~ f(y) <\ug(y)\} $.

Using similar arguments, we obtain that on the set $x \in \{y \in(0,1)~|~ f(y) >\ug(y)\}$ we have that $D[\ug, \oeta]<0$, which then implies that $\eta \mapsto g(x, \eta)$ is strictly increasing for all $x\in\{y \in(0,1)~|~ f(y) \ge \ug(y) > c\, f(y)\} $, where the inequality $\ug(y) = g(\ueta, y) > c\, f(y)$ follows by definition of $g$.
\end{proof}

\begin{lemma}\label{Le:monoG}
Let $\xi$ be non-increasing (that is \Cref{asm:xi-non-increasing} is satisfied) and $\Ginv_{\bfeta}(u)$ defined in \eqref{eq:Ginv-eta}. Then for all $u \in (0,1)$, and $\bfeta >0$, it holds that 
\begin{itemize}
    \item [(i)] for $\eta_2>0$ fixed, $\Ginv_{\bfeta}(u)$ is strictly decreasing in $\eta_1$. 
\item [(ii)] for $\eta_1> 0$ fixed, $\Ginv_{\bfeta}(u)$ is strictly decreasing {in $\eta_2$} on $\{u\in[0,1]~|~ \Ginv_{\bfeta}(u)\ge \Finvbench(u)\}$ and strictly increasing on $\{u\in[0,1]~|~ c\Finvbench(u)< \Ginv_{\bfeta}(u)<\Finvbench(u)\}$. 
\end{itemize}
\end{lemma}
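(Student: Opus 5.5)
We reduce everything to the pointwise defining equation of $\Ginv^{(\beta)}_{\bfeta}$ from \eqref{eq:Ginv-beta}. Under \Cref{asm:xi-non-increasing} no isotonic projection is needed, so for each $u$ and $\beta\in\{\alpha,1-\alpha\}$ the value $x=\Ginv^{(\beta)}_{\bfeta}(u)$ solves
\begin{equation*}
 - U'\big(x - c\Finvbench(u)\big) + \eta_2\,\beta\,\big(\phi'(x)-\phi'(\Finvbench(u))\big) = - \eta_1 \xi(u)\,.
\end{equation*}
The left-hand side is strictly increasing in $x$ on $(c\Finvbench(u),\infty)$, because $-U'$ is strictly increasing and $\phi'$ is strictly increasing. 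It tends to $-\infty$ as $x\downarrow c\Finvbench(u)$ by Inada, and the right-hand side is finite and negative. Hence the solution exists, is unique, and satisfies $x>c\Finvbench(u)$.

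For part $(i)$, fix $\eta_2>0$ and $\beta$. Increasing $\eta_1$ strictly decreases the right-hand side whenever $\xi(u)>0$, and $\xi>0$ as a density ratio. Since the left-hand side is strictly increasing in $x$, each branch $\Ginv^{(\beta)}_{\bfeta}(u)$ is strictly decreasing in $\eta_1$. For the glued function $\Ginv_{\bfeta}$ of \eqref{eq:Ginv-eta}, we use the case distinction $\Finvbench(u)\lessgtr\Ginv^{(\alpha^\dagger)}_{\bfeta}(u)$. By the argument in \Cref{Thm:main-optimal-quantile} and \Cref{lemma:Ginv-monotonicity-beta}, both branches lie on the same side of $\Finvbench(u)$, and they coincide with $\Finvbench(u)$ at the switching point. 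Indeed, at $x=\Finvbench(u)$ the $\beta$-terms vanish, so both equations reduce to the same one.

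We then treat $\eta_1\mapsto\Ginv_{\bfeta}(u)$ as a continuous function that is piecewise equal to one of two strictly decreasing functions. Continuity holds because the branches agree where $\Ginv_{\bfeta}(u)=\Finvbench(u)$. A continuous function that is strictly decreasing on each piece of a partition into intervals is strictly decreasing overall. The partition into intervals follows because $\{\eta_1 : \Ginv_{\bfeta}(u)\ge\Finvbench(u)\}$ is an interval, which we get by monotonicity: once the branch has crossed below $\Finvbench(u)$, it stays below.

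For part $(ii)$, fix $\eta_1>0$. We apply \Cref{Le:monoblock} with $\mfU=-U'$ (suitably signed), $\varphi=\phi'$, $f=\Finvbench$, $\eta=\eta_2\beta$ and the right-hand side $-\eta_1\xi$. Alternatively, we implicitly differentiate the defining equation in $\eta_2$:
\begin{equation*}
\partial_{\eta_2} x\,\big\{-U''(x-c\Finvbench(u))+\eta_2\beta\phi''(x)\big\} = -\beta\big(\phi'(x)-\phi'(\Finvbench(u))\big)\,.
\end{equation*}
The curly bracket is positive, so $\partial_{\eta_2}x<0$ when $x>\Finvbench(u)$ and $\partial_{\eta_2}x>0$ when $c\Finvbench(u)<x<\Finvbench(u)$. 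This holds for each branch. On $\{\Ginv_{\bfeta}(u)\ge\Finvbench(u)\}$ the active branch is $\beta=\alpha$, and on the complement it is $\beta=1-\alpha$, so the claim follows branchwise. At the boundary point $\Ginv_{\bfeta}(u)=\Finvbench(u)$ the derivative vanishes. There we argue directly: if $x=\Finvbench(u)$ for some $\eta_2$, then $-U'((1-c)\Finvbench(u))=-\eta_1\xi(u)$, which is independent of $\eta_2$, so the solution stays at $\Finvbench(u)$.

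The main obstacle is this degenerate boundary case and the gluing. Strictness on the set $\{\Ginv_{\bfeta}\ge\Finvbench\}$ fails exactly where equality holds, since there the value is constant in $\eta_2$. We therefore interpret the statement as strict on $\{\Ginv_{\bfeta}>\Finvbench\}$ and non-strict including equality, and we check that this suffices for \Cref{lemma:c-b-decreasing}. It does, because the equality set contributes zero to $B_\phi$, and strictness of $\bw$ comes from the positive-measure set where $\Ginv_{\bfeta}\neq\Finvbench$.
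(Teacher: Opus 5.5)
Your proof is correct and takes essentially the paper's route. You reduce to the pointwise equation defining $\Ginv^{(\beta)}_{\bfeta}$, obtain monotonicity in $\eta_1$ branchwise from $\xi>0$, and obtain monotonicity in $\eta_2$ branchwise via the comparison argument of \Cref{Le:monoblock}; your explicit gluing in $(i)$ fills in a step the paper only asserts. Your boundary remark is a valid correction of the statement rather than a weakness: $\Ginv^{(\beta)}_{\bfeta}(u)-\Finvbench(u)$ has the sign of $U'\big((1-c)\Finvbench(u)\big)-\eta_1\xi(u)$ for every $\beta$ and $\eta_2$, so the value is constant in $\eta_2$ on the equality set and the ``$\ge$'' in $(ii)$ should read ``$>$'' (which is also all \Cref{Le:monoblock} yields), and the same identity shows the active branch never switches as $\eta_2$ varies.
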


\begin{proof} 
First we note that $\Ginv_\bfeta$ is a combination of $\Ginv_\bfeta^{(\alpha)}$ and $\Ginv_\bfeta^{(1-\alpha)}$. To show the two cases, let $\beta \in (0,1)$ and $\Ginv^{{(\beta)}}_{\bfeta}(u)$ defined as in \eqref{eq:Ginv-beta}. 

For case $(i)$, note that $\Ginv^{{(\beta)}}_{\bfeta}(u)$ is strictly decreasing in $\eta_1$, which implies that $\Ginv_\bfeta$ is strictly decreasing in $\eta_1$.

To see case $(ii)$, let $\eta_1>0$ be fixed and recall that  $\Ginv^{{(\beta)}}_{\bfeta}$ satisfies 
\begin{equation*}
      \tilde{H}_{\beta}\big(\Ginv^{(\beta)}_{\bfeta}(u)\big)
 =
 - U'\big(\Ginv^{(\beta)}_{\bfeta}(u) - c\Finvbench(u)\big) + \eta_2\, \beta\, \phi'\big(\Ginv^{(\beta)}_{\bfeta}(u)\big)
    =
    \eta_2 \, \beta\, \phi'\big(\Finvbench(u)\big) - \eta_1 \xi(u)
 \end{equation*}
 which is equivalent to 
 \begin{equation*}
 - U'\big(\Ginv^{(\beta)}_{\bfeta}(u) - c\Finvbench(u)\big) + \eta_2\, \beta\, \Big(\phi'\big(\Ginv^{(\beta)}_{\bfeta}(u)\big) - \phi'\big(\Finvbench(u)\big)\Big)
    =
     - \eta_1 \xi(u)\,.
 \end{equation*}
Next, we apply \Cref{Le:monoblock} with $\mfU = -U'$, $f =\Finvbench$, $\varphi = \beta\phi'$, and $\varsigma = -\eta_1 \xi$. Thus we obtain that 
for $\eta_1> 0$ fixed, $\Ginv^{{(\beta)}}_{\bfeta}(u)$ is strictly decreasing {in $\eta_2$} on $\{u\in[0,1]~|~ \Ginv^{{(\beta)}}_{\bfeta}(u)\ge \Finvbench(u)\}$ and strictly increasing on $\{u\in[0,1]~|~ c\Finvbench(u)< \Ginv^{{(\beta)}}_{\bfeta}(u)<\Finvbench(u)\}$. Thus, by definition of $\Ginv_\bfeta$, we obtain that also $\Ginv_{\bfeta}(u)$ is strictly decreasing {in $\eta_2$} on $\{u\in[0,1]~|~ \Ginv_{\bfeta}(u)\ge \Finvbench(u)\}$ and strictly increasing on $\{u\in[0,1]~|~ c\Finvbench(u)< \Ginv_{\bfeta}(u)<\Finvbench(u)\}$. This concludes the proof.
\end{proof}

\section{Isotonic projection}\label{app:iso}

\begin{theorem}[Isotonic projection -  \cite{Barlow1972JASA}]\label{thm:generalised-iso}
Let $T\colon \R \to \R$ be a proper convex function with derivative $T^\prime(x) = \frac{d}{dx}T(x)$ and let $\ell \colon (0,1) \to \R$ be square integrable. Then the optimisation problem 
\begin{equation*}
    \argmin_{g \in \mcM} \int_\R  T\big(g(u)\big) - \ell(u)\, g(u) \,  du\,
\end{equation*}
has solution 
\begin{equation*}
    \ell^*(u) := \breve{T^{\prime}}\big(\, \ell^{\uparrow}(u)\, \big)\,,
\end{equation*}
where $\breve{T}^\prime$ denotes the generalised inverse of $T^\prime$. The solution is unique if $T$ is strictly convex. 
\end{theorem}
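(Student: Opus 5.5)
We treat this as a rearrangement-type optimisation over $\mcM$. The route is a pointwise-in-$u$ reduction, followed by a monotone-regression (pool-adjacent-violators) correction. Set $\Lambda(u) := \ell(u)$ and consider the convex functional
\begin{equation*}
J(g) := \int_0^1 T\big(g(u)\big) - \ell(u)\, g(u)\, du \,.
\end{equation*}
Write $L(u):=\int_0^u \ell(s)\,ds$, and let $\hat L$ denote its least concave majorant on $[0,1]$. By the footnote characterisation (\cite{brighi1994SIAMNA}), the derivative of $\hat L$ is exactly $\ell^\uparrow$, with the paper's orientation conventions taken as given.

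The key properties of $\ell^\uparrow$ I will use, both standard from \cite{Barlow1972JASA}, are these:
\begin{enumerate}[label = $(\alph*)$]
\item $\int_0^1 (\ell - \ell^\uparrow)\, h \,du \le 0$ for every non-decreasing $h$.
\item $\int_0^1 (\ell - \ell^\uparrow)\, \psi(\ell^\uparrow)\, du = 0$ for every measurable $\psi$ for which the integral is defined. This holds because $\ell^\uparrow$ is constant on each interval where $\hat L > L$, and on each such level set the averages of $\ell$ and $\ell^\uparrow$ agree.
\end{enumerate}

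The main step is a variational inequality. Put $\ell^* := \breve{T'}(\ell^\uparrow)$. Since $T'$ is non-decreasing, its generalised inverse is non-decreasing, so $\ell^*\in\mcM$; left-continuity is obtained by choosing the left-continuous version. Take any $g\in\mcM$. Convexity of $T$ gives
\begin{equation*}
T(g) - T(\ell^*) \ge \ell^\uparrow\, (g - \ell^*) \,,
\end{equation*}
because $\ell^\uparrow \in \partial T(\ell^*)$ by construction of the generalised inverse. This subgradient membership is where care is needed at flat pieces and jumps of $T'$. It follows that
\begin{equation*}
J(g)-J(\ell^*) \ge \int_0^1 (\ell^\uparrow - \ell)(g-\ell^*)\,du \,.
\end{equation*}
I split the right-hand side into $\int (\ell^\uparrow-\ell) g$ and $-\int(\ell^\uparrow-\ell)\ell^*$. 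The first term is $\ge 0$ by $(a)$, since $g$ is non-decreasing. The second term vanishes by $(b)$, taking $\psi=\breve{T'}$. Hence $J(g)\ge J(\ell^*)$, so $\ell^*$ is a minimiser.

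For uniqueness under strict convexity of $T$, note that $J$ is then strictly convex on $\mcM$. Two distinct minimisers would have a strictly better midpoint, which still lies in the convex cone $\mcM$; that is a contradiction.

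I expect the main obstacle to be the technical justification rather than the algebra. First, property $(b)$ must be proved cleanly: I would verify it via integration by parts with $\hat L - L \ge 0$ and $\hat L - L = 0$ at endpoints of each level interval of $\ell^\uparrow$, handling countably many such intervals. Second, integrability issues arise when $T(\ell^*)$ or $\ell\,\ell^*$ is not integrable; I would assume $J(\ell^*)$ is finite and otherwise truncate $g$ and pass to a monotone limit. Third, the subgradient relation at points where $\breve{T'}$ is multivalued requires care. For $(a)$ itself, I would write $h$ as a mixture of indicators $\Id_{u>t}$ plus a constant, and use $\int_t^1(\ell-\ell^\uparrow)\,du = \hat L(t)-L(t) - (\hat L(1)-L(1)) \ge 0$ with the appropriate sign. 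The remaining step is then to check that this sign agrees with $(a)$.
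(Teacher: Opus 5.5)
The paper does not prove this result; it only cites \cite{Barlow1972JASA}. Your architecture is the classical argument for generalised isotonic regression in that reference, moved from weighted sums to $(0,1)$: the subgradient inequality $T(g)-T(\ell^*)\ge \ell^\uparrow(g-\ell^*)$, combined with (a) and (b). The algebra of your main step is correct.

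\textbf{The orientation error.} The derivative of the least concave majorant $\hat L$ is non-increasing, so it is $\ell^\downarrow$, not $\ell^\uparrow$. Taking the paper's footnote (``concave envelope'') literally produces exactly this slip. With $\hat L$ you get $\int_t^1(\ell-\hat L')\,du=\hat L(t)-L(t)\ge 0$, which is the reverse of (a). So the sign check you left as the ``remaining step'' would fail. Use instead the greatest convex minorant $\check L\le L$. It agrees with $L$ at $0$ and $1$, and $\check L'=\ell^\uparrow$ a.e. Then $\int_t^1(\ell-\ell^\uparrow)\,du=\check L(t)-L(t)\le 0$ and $\int_0^1(\ell-\ell^\uparrow)\,du=0$, and your mixture-of-indicators representation yields (a). More simply, (a) for square-integrable $h$ is just the variational characterisation of the $L^2$ projection onto the closed convex cone of non-decreasing functions; no envelopes are needed.

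\textbf{The gap in (b).} Constancy of $\ell^\uparrow$ on the components of $\{\check L<L\}$ is not enough. The contact set $\{\check L=L\}$ can have positive measure; it is all of $[0,1]$ when $\ell$ is already non-decreasing. On that set you must also show $\ell=\ell^\uparrow$ a.e. This holds because $L-\check L\ge 0$ attains its minimum $0$ at every contact point. Hence wherever both functions are differentiable (a.e.), $\ell=L'=\check L'=\ell^\uparrow$. Absolute integrability of $(\ell-\ell^\uparrow)\psi(\ell^\uparrow)$ then lets you sum over the countably many components.

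\textbf{A missing hypothesis.} The subgradient relation $\ell^\uparrow(u)\in\partial T(\ell^*(u))$ requires $\breve{T'}(\ell^\uparrow(u))$ to be finite; in particular $\ell^\uparrow(u)$ must lie in the range of $T'$. This is not a technicality at jumps or flat pieces; it is a hypothesis missing from the statement itself. For $T=\exp$ and $\ell\equiv -1$, the objective $\int_0^1 e^{g(u)}+g(u)\,du$ is unbounded below on $\mcM$, and no minimiser exists.

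With that hypothesis added, the contact-set argument in (b), and the truncation/integrability argument you already anticipate for $g\notin L^2$ (needed because (a) is only established for square-integrable $h$), your proof goes through.
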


Next we collect some properties of the isotonic projection.
\begin{proposition}\label{prop:isotonic-properties}
    Let $\ell \in \mcH$. Then it holds that
    \begin{enumerate}[label = $\roman*)$]
        \item $-\, \ell^\downarrow = \big[- \ell\big]^\uparrow$.

    \item Let $\ell_1, \ell_2 \in \mfH$ satisfying $\ell_1(u) \le \ell_2(u)$ for all $u \in [0,1]$. Then $\ell_1^\uparrow(u) \le \ell_2^\uparrow(u)$ for all $u \in [0,1]$.
    
      \item Let $g_1, g_2 \in \mcH$ and define $\ell_\lambda(u):= g_1(u) + \lambda g_2(u)$, $u \in [0,1]$. Then for $\lambda_n \to \lambda $ as $n\to +\infty$, it holds that $\displaystyle\lim_{n \to +\infty} \textstyle \int_0^1 \big(\ell_{\lambda_n}(u) - \ell_\lambda(u)\big)^2\, du = 0$.
    
    \end{enumerate}
\end{proposition}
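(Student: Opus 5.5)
We prove the three items of \Cref{prop:isotonic-properties} separately. Throughout we use the characterisation of $\ell^\uparrow$ as the $L^2$ projection onto the closed convex cone $\mcM\cap L^2$. Where needed we also use the classical max--min formula $\ell^\uparrow(u)=\max_{s\le u}\min_{t\ge u}\frac{1}{t-s}\int_s^t \ell(v)\,dv$ (in the a.e. sense, left-continuous version), or equivalently the description of $\ell^\uparrow$ as the derivative of the concave majorant of $L(u):=\int_0^u\ell$, see \cite{Barlow1972JASA,brighi1994SIAMNA}.

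For $i)$, we work directly from the definitions. The map $g\mapsto -g$ is a bijection between non-decreasing and non-increasing functions, and it preserves left-continuity. It also satisfies
\begin{equation*}
\int_0^1(\ell-g)^2\,du=\int_0^1\big((-\ell)-(-g)\big)^2\,du .
\end{equation*}
Hence $-g$ minimises the distance from $-\ell$ over $\mcM$ exactly when $g$ minimises the distance from $\ell$ over the non-increasing functions. By uniqueness of projections onto closed convex sets, $[-\ell]^\uparrow=-\ell^\downarrow$.

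For $ii)$, we use the max--min formula. If $\ell_1\le\ell_2$ pointwise, then every average $\frac{1}{t-s}\int_s^t\ell_1$ is at most the corresponding average of $\ell_2$. Taking $\min_{t\ge u}$ and then $\max_{s\le u}$ preserves this inequality, which gives $\ell_1^\uparrow\le\ell_2^\uparrow$. As an alternative that avoids the explicit formula, we can argue with the lattice structure. Set $g_1=\ell_1^\uparrow$ and $g_2=\ell_2^\uparrow$; then $g_1\wedge g_2$ and $g_1\vee g_2$ both lie in $\mcM$. The variational inequalities of the two projections, tested against $g_1\wedge g_2$ and $g_1\vee g_2$, yield $\int(g_1-g_2)_+^2\le\int(\ell_1-\ell_2)(g_1-g_2)_+\le 0$. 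Hence $g_1\le g_2$ a.e., and left-continuity upgrades this to everywhere. I would present the lattice argument, because it uses only the projection property.

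For $iii)$, read literally the claim is elementary. We have $\ell_{\lambda_n}-\ell_\lambda=(\lambda_n-\lambda)g_2$, so the integral equals $(\lambda_n-\lambda)^2\int g_2^2\to0$, provided $g_2\in L^2$; we assume this square integrability implicitly, as the definition of the projection requires it. The intended use, namely continuity of $\lambda\mapsto\cost(\lambda)$, needs the same conclusion for the projections. That follows because projection onto a closed convex set is $1$-Lipschitz in $L^2$:
\begin{equation*}
\|\ell_{\lambda_n}^\uparrow-\ell_\lambda^\uparrow\|_{L^2}\le\|\ell_{\lambda_n}-\ell_\lambda\|_{L^2}=|\lambda_n-\lambda|\,\|g_2\|_{L^2}\to0 .
\end{equation*}
We state both versions.

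The main obstacle is $ii)$. The issue is passing from the a.e. inequality given by the $L^2$ argument to a pointwise one, and making sure that $\min$ and $\max$ of left-continuous non-decreasing functions stay in $\mcM$. Both points hold, since the pointwise min and max of left-continuous non-decreasing functions are again left-continuous and non-decreasing, and two left-continuous functions that agree a.e. in the order sense are ordered everywhere.
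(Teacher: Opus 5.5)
Your proposal is correct. Part $i)$ is exactly the paper's argument: the reflection $g\mapsto -g$ preserves the $L^2$ distance and exchanges $\mcM$ with the non-increasing left-continuous functions.

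For $ii)$ and $iii)$ the paper gives no argument of its own. It only cites proposition 5 of \cite{Tam2025WP} and proposition A.2 of \cite{Bernard2024MF}, whereas you give self-contained proofs.

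For $ii)$, your variational-inequality argument, tested against $g_1\wedge g_2$ and $g_1\vee g_2$, uses only that $\mcM\cap L^2$ is a closed convex set stable under pointwise minima and maxima. Left-continuity then correctly upgrades the a.e.\ order to an order at every $u\in(0,1)$.

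For $iii)$, you rightly note that the statement as literally written is trivial once $g_2\in L^2$, a hypothesis the paper leaves implicit. The version actually needed downstream, namely $L^2$-continuity of $\lambda\mapsto\ell_\lambda^\uparrow$, follows from non-expansiveness of metric projections, with the explicit bound $|\lambda_n-\lambda|\,\|g_2\|_{L^2}$.

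Your route buys transparency, a quantitative Lipschitz estimate, and explicit integrability hypotheses. The paper's citation route is shorter but leaves the reader to check that the cited results apply in this left-continuous $L^2$ setting.

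Two small remarks. First, in your side comment, $\ell^\uparrow$ is the derivative of the greatest convex minorant of $L(u)=\int_0^u\ell$, not of its concave majorant; the latter has a non-increasing derivative and yields $\ell^\downarrow$. You do not use this, so nothing breaks. Second, for the application in \Cref{Prop:epsilon-bound}, $L^2$-convergence of the projections must still be combined with continuity of $(\phi')^{-1}$ and a domination argument to get continuity of $\cost(\lambda)$. That step lies outside this proposition.
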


\begin{proof}
Part $i)$: 
    By definition, and using the change of variables $m = - h$ in the second equation,
    \begin{align*}
        \ell^\downarrow
        &=
        \argmin_{-h \in \mcM} \int_0^1 \big(h(x) - \ell(x)\big)^2 \, du
        \\
        &= 
        - \argmin_{m \in \mcM} \int_0^1 \big(-m(x) - \ell(x)\big)^2 \, du
        \\
        &= 
        - \argmin_{m \in \mcM} \int_0^1 \big(m(x) - [- \ell(x)]\big)^2 \, du
        \\
        &= 
        - [- \ell(x)]^\uparrow\,.
    \end{align*}
    
  Part $ii)$ follows by proposition 5 in \citet{Tam2025WP} and part $iii)$ by proposition A.2. in \citet{Bernard2024MF}. 
\end{proof}

\section{Details to the GMB market model}\label{app:GBM}

Here we provide the derivation of the formulas in \Cref{ex: GBM-MM,ex:gbm-xi}.
From \eqref{eq:MM-SDE} and the GBM market model in \eqref{eqn:dS-GBM}, the benchmark satisfies
\begin{equation*}
    dX_t^\pi = X_t^\pi\left(\left(\big(\mu_t-r_t\,\mathbf{1}\big)^\intercal\pi_t + r_t \right)\,dt + \sum_{i\in\mcD}\pi_t^i\,\sigma_t^i\,dW_t^i\right).
\end{equation*} 
and from Ito's lemma,  
\begin{equation*}
    d\log(X_t^\pi)=\left(\left(\big(\mu_t-r_t\,\mathbf{1}\big)^\intercal\pi_t + r_t\right) -{\textstyle\frac12} \pi_t^\intercal\,\Upsilon_t\,\pi_t\right)\,dt + \pi_t^\intercal\,\sigma\,dW_t.
\end{equation*}
Hence, for deterministic $\pi$, integration and exponentiation yields $X_T^\pi\stackrel{d}{=} X_0^\pi\,\exp\{\Gamma-\frac{1}{2}\Psi^2 + \Psi\,Z\}$, where $Z\stackrel{\P}{\sim} N(0,1)$, and $\Gamma$ and $\Psi$ are as in \eqref{eqn:Gamma-and-Psi}. \eqref{eqn:F-for-deterministic-bench} follows immediately from this result and \eqref{eqn:Finv-for-deterministic-bench} follows from elementary calculations. 

For the computation of $\xi(u)$, first note $f_Y(x)=\frac{d}{dx}\Fbench(x) = 
\phi\left((\log(x/X_0^\pi)-\Gamma+\tfrac12 \Psi^2)/\Psi\right)/(x\,\Psi)$.
Next, from \eqref{eqn:sdf-gbm}  we have
\begin{align*}
 \hat{f}_Y(x) 
 &:= \frac{d}{dx}\E[\varsigma_T\,\Id_{\{X_T^\pi<x\}}]
 \\
 &= e^{-\int_0^Tr_s\,ds}\;\frac{d}{dx}\E\left[e^{-\frac12\int_0^T\lambda^\intercal_s\rho\lambda_s\,ds-\int_0^T\lambda_s^\intercal\,dW_s}\,\Id_{\{X_T^\pi<x\}}\right]
 \\
 &= e^{-\int_0^Tr_s\,ds}\;\frac{d}{dx}\Q\left(X_T^\pi<x\right)
\end{align*}
where $\Q$ is the risk-neutral probability measure defined by $\Q(\omega)=e^{-\frac12\int_0^T\lambda^\intercal_s\rho\lambda_s\,ds-\int_0^T\lambda_s^\intercal\,dW_s(\omega)}\,\P(\omega)$ for all $\omega\in\Omega$. From Girsanov's Theorem, $\hW_t:=\int_0^t\lambda_s\,ds+W_t$ is a $\Q$-Brownian motion. Thus, we have
\begin{equation*}
    dX_t^\pi = X_t^\pi\,\left(r_t\,dt+\sum_{i\in\mcD} \pi_t^i\,\sigma^i_t\,d\hW_t^i\right)\,.
\end{equation*}
By Ito's lemma, integration, and exponentiation, we have $X_T^\pi\stackrel{d}{=} X_0^\pi\,\exp\{R-\frac{1}{2}\Psi^2 + \Psi\,Z_\Q\}$, 
 where $Z_\Q\stackrel{\Q}{\sim}N(0,1)$ and $R:=\int_0^ Tr_s \,ds$. Hence, $ \Q(X_T^\pi<x)= \Phi\left(\frac{\log(x/X_0^\pi)-R+\frac{1}{2}\Psi^2}{\Psi}\right)$ and moreover $\hat{f}_Y(x) =e^{-R}\, \phi\left((\log(x/X_0^\pi)-R+\frac{1}{2}\Psi^2)/\Psi\right)/(x\,\Psi)$.

Putting these two results together, we have
\begin{align*}
    \xi(u) = \frac{\hat{f}_Y(\Finvbench(u))}{f_Y(\Finvbench(u))} = 
    e^{-R}\frac{\phi\left(\left(\log(\Finvbench(u)/X_0^\pi)-R+\frac{1}{2}\Psi^2\right)/\Psi\right)}
    {\phi\left(\left(\log(\Finvbench(u)/X_0^\pi)-\Gamma+\tfrac12 \Psi^2\right)/\Psi\right)},
\end{align*}
and after inserting \eqref{eqn:Finv-for-deterministic-bench} and simplifying we find \eqref{eqn:R-N-deterministic-delta}.

\setlength{\bibsep}{6pt}
\titlespacing*{\section}{0pt}{6pt}{12pt}
\singlespacing
\begin{spacing}{0.0}
\bibliographystyle{apalike}
\bibliography{references}
\end{spacing}

\end{document}